\newcommand{\remove}[1]{}
\def\+#1{\mathcal{#1}}
\def\-#1{\mathbf{#1}}
\newcommand{\fik}{\textbf{FIK}\xspace}
\newcommand{\ik}{\textbf{IK}\xspace}
\newcommand{\ipl}{\textbf{IPL}\xspace}
\newcommand{\cfik}{$\mathbf{C}_{\fik}$\xspace}
\newcommand{\cik}{$\mathbf{C}_{\ik}$\xspace}
\newcommand{\ccik}{$\mathbf{CC}_{\ik}$\xspace}
\newcommand{\ccikann}{$\mathbf{CC}_{\mathbf{IK}}^{\text{n}}$\xspace}
\newcommand{\labik}{$\mathsf{labIK}_\leq$\xspace}
\newcommand{\lab}{\mathsf{lab}}
\newcommand\seq[2]{#1\Rightarrow #2}
\newcommand\iblock[1]{\langle#1\rangle}
\newcommand\mblock[1]{[#1]}
\newcommand{\rulebc}{($\text{inter}_{\sf bc}$)\xspace}
\newcommand{\rulefc}{($\text{inter}_{\sf fc}$)\xspace}
\newcommand\tr[1]{\mathsf{Tr}(#1)}
\newcommand\fm[1]{\mathsf{Fm}(#1)}
\newcommand\ant[1]{\mathsf{Ant}(#1)}
\newcommand\suc[1]{\mathsf{Suc}(#1)}
\newcommand{\str}{\subseteq^S}
\newcommand{\seqin}{\in^+}
\newcommand{\ibin}{\in^{\iblock{\cdot}}}
\newcommand{\mbin}{\in^{\mblock{\cdot}}}
\newcommand{\seqarrowann}[1]{\overset{#1}{\Rightarrow}}
\newcommand\aseq[3]{#1\overset{#2}{\Rightarrow}#3}
\newcommand\longaseq[3]{#1\overset{#2}{\Longrightarrow}#3}
\newcommand{\fc}{\mathsf{FC}}
\newcommand{\bc}{\mathsf{BC}}
\newcommand{\axfont}[1]{\mathsf{#1}}
\newcommand{\ax}[1]{\axfont{k}#1}
\newcommand{\modp}{\axfont{mp}}
\newcommand{\nec}{\axfont{nec}}
\newcommand{\hik}{\mathcal{H}_{\ik}}
\newcommand{\sq}[1]{#1}
\newcommand{\orseq}[1]{\langle #1 \rangle}
\newcommand{\rseq}[1]{[ #1 ]}
\newcommand{\orizero}[2]{{#1} \in^{\orseq{\cdot}}_0 {#2}}
\newcommand{\ori}[2]{#1\in^{\orseq{\cdot}}#2}
\newcommand{\rizero}[2]{{#1} \in^{\rseq{\cdot}}_0 {#2}}
\newcommand{\ri}[2]{#1\in^{\rseq{\cdot}}#2}
\newcommand{\allizero}[2]{#1\in^+_0 #2}
\newcommand{\alli}[2]{#1\in^+#2}
\newcommand{\stri}[3]{#1\subseteq^{#2}#3}
\newcommand{\stack}{\+A;~}
\newcommand{\proc}{$\mathsf{ProofSearch(A)}$\xspace}
\newcommand{\lr}[1]{#1_L}
\newcommand{\rr}[1]{#1_R}
\newcommand{\nikm}{\mathsf{NIKm}}
\def\lef#1{#1^\bullet}
\def\rig#1{#1^\circ}
\newcommand{\cseqp}[1]{G\{#1\}}
\newcommand{\fl}[1]{(#1)^\mathsf{t}}
\newcommand{\id}{\mathsf{id}}
\newcommand{\cseqnes}[1]{\Sigma\{#1\}}
\newcommand{\cseqpdown}[1]{\Sigma^{\downarrow}\{#1\}}
\newcommand{\tree}[1]{tr(#1)}
\newcommand{\cseqpstar}[1]{G^{*}\{#1\}}
\newcommand{\cseqnum}[2]{G_{#1}\{#2\}}
\newcommand{\seqstar}[1]{Q^{*}_{#1}}
\newcommand{\cstar}[1]{#1^{*}}
\newcommand{\wk}{\mathsf{wk}}
\newcommand{\dbikt}{\mathbf{DBiKt}}
\newcommand{\lbikt}{\mathbf{LBiKt}}
\newcommand{\dimpl}{{\mbox{$\;-\!\!\!<\;$}}}
\newcommand{\IMLs}{IMLs\xspace}
\begin{document}


\title{A Bi-nested Calculus for Intuitionistic $\mathbf{K}$: Proofs and Countermodels}

\author{
Han Gao\inst{1}, 
Marianna Girlando\inst{2}
and 
Nicola Olivetti\inst{1}
}

\institute{
LIS, Aix-Marseille University, CNRS;
\and
University of Amsterdam
}

\maketitle

\begin{abstract}
  The logic \ik is the intuitionistic variant of modal logic introduced by Fischer Servi, Plotkin and Stirling, and studied by Simpson. 
  This logic is considered a fundamental  intuitionstic modal system as it corresponds, modulo the standard translation, to a fragment of intuitionstic first-order logic. 
  In this paper we present a labelled-free bi-nested sequent calculus for \ik. This proof system comprises two kinds of nesting, corresponding to the two relations of bi-relational models for \ik: a pre-order relation, from intuitionistic models, and a binary relation, akin to the accessibility relation of Kripke models.    
  The calculus provides a decision procedure for \ik by means of a suitable proof-search strategy. 
  This is the first labelled-free calculus for \ik which 
  allows direct counter-model extraction: from a single failed derivation, it is possible to construct a finite countermodel for the formula at the root. 
  We further show the bi-nested calculus can simulate both the (standard) nested calculus and labelled sequent calculus, which are two best known calculi proposed in the literature for \ik. 
\end{abstract}

\begin{keywords}
  Intuitionistic modal logic, 
  nested sequents, decision procedure, countermodel construction, semantic completeness 
 
\end{keywords}

\section{Introduction}
\label{sec:introduction}

The world of intuitionistic modal logics (\IMLs in short) is richer than the classical one. Since the 50's, there have been several proposals  
for IML based on 
various semantics or proof-theoretical considerations, e.g. \cite{Wijesekera:1990}
\cite{Bellin:et:al:2001}, 
\cite{Fischer:Servi:1984} 
and 
\cite{Plotkin:Stirling:1986}.  


Among the various systems, a prominent role is played by \ik. This logic was introduced first by Fisher Servi as the `true' intuitionistic counterpart of classical modal logic \textbf{K}, then studied by Plotkin and Stirling and finally systematized by Simpson. 
In his thesis, 
Simpson proposed six criteria that can be used to classify intuitionistic variants of modal logic:  
\begin{enumerate}[$(i)$]
	\item  The logic should  be a conservative extension of intuitionistic logic; 
	\item  It should satisfy the disjunction property;
	\item  The two modalities $\Box$ and $\Diamond$ should not be 
    interdefinable; 
	\item  The two modalities should be `normal', in the sense of classical normal modal logic;  
	\item  Adding to the logic  any classical propositional tautology 
    not valid intuitionistically, 
 classical modal logic $\mathbf K$ should be obtained; 	 
		\item  The modal logic should be grounded on ``an  intuitionistic acceptable explanation of modalities''. Simpson  interprets this requirement as the fact  that the modal logic should correspond   to  a fragment of first-order intuitionistic logic by means of  the \emph{standard translation}. 		
\end{enumerate}
The logic $\ik$ (together with its  extensions with axioms from the standard modal cube) turns out to satisfy all of the criteria.
In contrast, the so-called constructive modal logic \textbf{CK} and alike (cf. \cite{Bellin:et:al:2001,Wijesekera:1990}) satisfy only the first three, whereas the logic $\fik$ recently introduced in \cite{csl-2024-fik} satisfies all criteria except the last one $(vi)$.

The semantics of $\ik$ 
can be specified in terms of bi-relational Kripke model equipped with a pre-order (denoted by $\leq$) taking care of intuitionistic implication and an \emph{accessibility} relation (denoted by $R$) for the modal operators. The two relations interact by means of two frame conditions named \emph{forward} and \emph{backward} confluence (originally named $\mathsf{F2}$ and $\mathsf{F1}$ respectively). Roughly speaking, the former condition is needed for preserving intuitionistic validity, whereas the latter is 
for first-order interpretability. 

Although the semantics of of $\ik$ is well-understood, the same cannot be said about its proof theory, as defining analytic calculi for $\ik$ has been a challenge for proof-theorists. 
While constructive modal logics enjoy simple cut-free Gentzen-style calculi (e.g. \cite{Wijesekera:1990}, \cite{Bierman:dePaiva:2000}), no cut-free sequent calculus system is known for \ik, 
and it seems unlikely that such a system exists.


Analytic proof systems for the logic (and other systems in the family) can be defined by enriching the structure of Gentzen-style calculi, for instance, single-conclusion nested sequent calculi were introduced in \cite{Galmiche:Salhi:2010,2013cut,marin2014label} while multi-conclusion nested sequents can be found in \cite{kuznets2019maehara}. 
In \cite{kuznets2019maehara}, a decision procedure for $\ik$ is provided, extracting a countermodel from failed proof search. However, since some of the rules in the calculus are not invertible, a single failed derivation might no be enough to construct a countermodel for the 
formula at the root. 
%
Alternatively, labelled calculi for $\ik$ and its extensions have been introduced in~\cite{simpson1994proof}, in the form of a single-conclusion labelled calculus, and in~\cite{Marin:et:al:2021}, where a fully labelled calculus was proposed, explicitly representing both semantic relations $R$ and $\leq$. 
Fully labelled calculi were used in \cite{morales:phd} to provide a countermodel construction for $\ik$; however, the procedure relies on a complex loopcheck, originally devised to establish decidability of intuitionistic modal logic $\mathbf{IS4}$, a logic stronger than $\mathbf{IK}$ (cf. \cite{Girlando:etal:2023}). A procedure tailored to \ik to extract finite countermodels from failed proof search in the  fully labelled calculus was proposed in \cite{wollic-2024-ik}. While this algorithm is much simpler than the one devised for $\mathbf{IS4}$, it still suffers from the usual difficulties encountered when devising termination strategies within labelled calculi. 
\footnote{Namely, to detect repetition of certain labels / worlds in the graph of labels generated by proof search, one first needs to `shrink' the graph, which otherwise would be growing indefinitely with always new labels.} 
Thus, it is worth looking into label-free proof systems to devise decision algorithms for \ik. 



In light of the above considerations, we are interested in proof systems that can provide both (a) a decision procedure for the logic 
, and (b) a  direct countermodel construction. By the latter we mean that  if \emph{one} derivation in the calculus of a formula $A$ fails, from \emph{that} derivation it is possible to extract a (finite) countermodel of $A$. In a constructive perspective,  we can say that if a proof of $A$ serves as a \emph{witness} of the validity of $A$, then a finite countermodel of $A$ acts as a \emph{witness} of its non-validity, and both 
play equally important roles. 
Neither (a) nor (b) are expected to be an easy task for \ik: the decision problem for this logic is not known to have an elementary upper bound, and countermodel construction is conjectured to have Ackerman function complexity, see e.g. \cite{Odintsov2007}. 

In this work we propose an innovative label-free calculus for \ik, called \cik. The calculus is `bi-nested', 
in the sense that it employs 
two kinds of nestings, intuitively encoding the two semantic relations $R$ and $\leq$. 
Typically, nested sequents encode only the 
accessibility 
relation $R$. 
In \cite{Fitting:2014}, Fitting proposed 
a nested system for intuitionistic logic internalising the $\leq$ relation instead, and in~\cite{anupam:2023} a similar proof system was introduced, capturing yet another variant of \ik\footnote{This logic, which we won't treat further, is characterized by $\ax 1, \ax2, \ax3$ and $\ax 5$ from Figure~\ref{fig:axioms} below.}. Besides, a preliminary  bi-nested calculus for \ik was 
suggested in \cite{Marin:Maroles:2020}. 
Our calculus \cik is an extension of the calculus for the logic \fik~mentioned above which was introduced in \cite{csl-2024-fik}.  
A special feature of bi-nested calculi is that the frame conditions of backward and forward confluence can be uniformly captured by suitable interaction rules operating on the two kinds of nesting. Specifically, \cik is obtained by adding to the proof system for \fik a rule corresponding to backward confluence. 

By adopting a suitable proof-search strategy, 
the calculus \cik provides a decision procedure for \ik. The proof system allows for a direct countermodel
construction, meaning that one 
failed 
branch of  
a 
derivation is sufficient to construct a countermodel for the root formula. Remarkably, the terminating proof search strategy defined in~\cite{csl-2024-fik} 
for 
\fik can be adapted to \cik. 
The loopcheck is simple as it only requires comparing sets of formulas (the worlds). 
The countermodel construction, however, is more complicated than the one for \fik. 
As usual, in our setting a countermodel is provided by a `finished' or `saturated' (non-axiomatic) sequent occurring as a leaf of any branch of a derivation and nested components are meant to correspond to worlds of the model. 
However, as a difference with the case of \fik, in order to ensure the \emph{backward confluence} property we cannot retain every component of a saturated sequent as a world in the model. Instead, we need an \emph{annotation} mechanism to 
designate 
which components will serve as worlds of the model. 
This makes the construction more complex, but still effective, and the size of a countermodel is not greater than the size of a saturated sequent.  



Moreover, it is instructive to compare our calculus \cik to other calculi for \ik proposed in the literature. We consider in detail the fully labelled calculus \labik and the (standard) nested sequent calculus $\nikm$. We aim to establish 
translations from both calculi to ours, in the sense that a derivation in each one of the two calculi can be effectively translated into a derivation in our calculus. Generally speaking, the interest of establishing translations or simulations between calculi is twofold.  
First, the fact that we can simulate a calculus $\mathsf{C}_1$ by a calculus $\mathsf{C}_2$ provides an `explanation' of the rules of $\mathsf{C}_1$ in terms of $\mathsf{C}_2$ independently from any semantic consideration. 
Second, relying on the completeness of $\mathsf{C}_1$ we get a purely syntactic proof of the completeness of $\mathsf{C}_2$, which again is independent from any semantic argument. A third reason for studying simulation would be to estimate the relative efficiency of the involved calculi by comparing the size of the original derivation in the source calculus with the size of its translation in the target calculus. The results presented here can be a starting point for investigating this topic for the related calculi. 

We first consider the fully labelled calculus \labik, we     show that every proof of an \ik formula in \labik  can be translated  into a proof  in our \cik. This result is not entirely obvious, since \labik is more expressive than \cik, making use of an enriched language and relational rules that have no counterpart in \cik. This translation is only possible for proofs of formulas, but not for arbitrary derivations in \labik. 


We then turn to a nested calculus $\nikm$, 
the multi-conclusion nested sequent calculus for $\ik$ from \cite{kuznets2019maehara}. 
We show that every derivation in $\nikm$ can be translated into a derivation in \cik. In this case the simulation is not straightforward for two reasons: first, although $\nikm$ and \cik are both nested sequent calculi, their syntax, and more importantly the notion of context, whence of of deep inference, is different. Furthermore, calculus $\nikm$ has no specific rules corresponding to the interaction rules in \cik (which encodes the semantic conditions of forward and backward confluence).  
The fact that we can simulate $\nikm$ by \cik `explains' the rules of $\nikm$ by decomposing them into more elementary steps of \cik.  

In addition, as mentioned before, the bi-nested calculus \cik is obtained by adding just one rule to the calculus of \fik presented in \cite{csl-2024-fik}. 
We believe this fact serves as a strong argument in favor of bi-nested calculi. The flexibility of this framework makes it well-suited for capturing various variants of intuitionistic modal logics in a unified way, offering both a decision procedure and countermodel extraction for these logics. 

The paper is structured as follows. In Section~\ref{sec:preliminaries} we introduce the logic \ik. Then, the calculus \cik is presented in Section~\ref{sec:calculus}. The terminating proof search strategy for \cik is presented in Section~\ref{sec:proof-serach}, while Section~\ref{sec:completeness} discusses semantic completeness. 
In Section \ref{sec:simulation}, we show how to simulate two known calculi for \ik into our \cik. 
We conclude with perspectives and future works in Section~\ref{sec:conclusions}. 

\section{The intutionistic modal logic \ik}
\label{sec:preliminaries}

We briefly introduce the
semantics and axiom system for the 
logic \ik. 
The set 
of formulas (denoted $A$, $B$, etc.) of our language is generated by the grammar
	%
	%
	$
	A::=~p~|~\bot~|~\top~|~(A\wedge A)~|~(A\vee A)~|~(A\supset A)~|~\Box A~|~\Diamond A
	$, 
	%
	%
	where $p$ ranges over  a  countable set of atomic propositions $\textsf{At}$. 
	Negation $\neg A$ is defined as $A\supset\bot$.
%
%
%
%
%
%


%
%
\begin{definition}
\label{bi-relational-model}
	A \textit{bi-relational model} is a quadruple $\+M=(W,\leq, R,V)$ where $W$ is a nonempty set of 
	elements, called \emph{worlds}, $\leq$ is  a reflexive and transitive relation (a \emph{pre-order}) over $W$, $R$ is a binary relation over $W$ and 
	the valuation function $V:~W\longrightarrow\wp(\textsf{At})$  
	satisfies 
	the 
	\emph{hereditary condition}:
	$
	\text{for all} x,y\in W, \text{ if } x\leq y \text{ then } V(x)\subseteq V(y).
	$  
	Moreover, $\+M$ satisfies the frame conditions of \emph{forward} and \emph{backward confluence}
    \footnote{In 
	\cite{simpson1994proof}, forward and backward confluence are known as $\mathsf{F1}$ and $\mathsf{F2}$ respectively. We take the terminology from 
	\cite{Balbiani:et:al:2021}.}
    (see Figure \ref{fig:fc:bc}):
	%
	%

%
%
\begin{itemize}
	%
	%
	\item[$(\fc)$]
	For all $x, x'\!, z\in W$, if $x\leq x'$\! and 
	$xRz$, 
	there is $z'\!\in W$ s.t.~$x'Rz'$\!
	and $z\leq z'$.
	\item[$(\bc)$]
	For all $x, z, z'\!\in W$,  
	if  
	$xRz$
	and $z\leq z'\!$, 
	there is $x'\!\in W$ s.t.~
	$x'Rz'$\!
	and $x\!\leq x'\!$.
	%
	%
\end{itemize}
In what follows, we shall sometimes write $x \geq y$ for $y \leq x$. 
\end{definition}

%
\begin{definition}
\label{definition:forcing:relation}
	Let 
	$\+M$ 
	be a bi-relational model. 
	The forcing conditions of a formula at a world $w\in W$ of $\+M$ are defined as follows:
	\begin{itemize}
        \item $\+M,w \not\Vdash \bot$ and  $\+M,w \Vdash \top$; 
		\item $\+M,w\Vdash p$~~iff~~$p\in V(w)$;


		\item $\+M,w\Vdash B\wedge C$~~iff~~$\+M,w\Vdash B$ and $\+M,w\Vdash C$;

		\item $\+M,w\Vdash B\vee C$~~iff~~$\+M,w\Vdash B$ or $\+M,w\Vdash C$;

		\item $\+M,w\Vdash B\supset C$~~iff~~for all $w'\in W$ with $w\leq w'$, if 
		$\+M,w'\Vdash B$, 
		then 
		$\+M,w'\Vdash C$;
		\item $\+M,w\Vdash \square B$~~iff~~for all $w',v'\in W$ with $w\leq w'$ and 
		$w'Rv'$,
		$v'\Vdash B$;
		\item $\+M,w\Vdash \Diamond B$~~iff~~there exists $v\in W$ with 
		$wRv$
		and $\+M,v\Vdash B$.
	\end{itemize}
	We 
	shall abbreviate $\+M,w\Vdash A$ as $w\Vdash A$ if the model is clear from the context.

	A formula $A$ in $\+L$ is \textit{valid}, denoted $\Vdash A$, if for any bi-relational model $\mathcal{M}$ and any world $w$ in it, it holds that $\mathcal{M},w\Vdash A$.
\end{definition}

The hereditary property can be extended to arbitrary formulas: for any $w, w'\in W$, for any formula $A$, if $w\Vdash A$ and $w\leq w'$ then $w'\Vdash A$. The proof of this property is established by induction on the complexity of $A$ and the case of $A=\Diamond B$ uses $(\fc)$. 


\begin{figure}\label{fig:fcbc}
    \begin{center}
        \begin{tikzpicture}[scale=0.8]
        \tikzstyle{node}=[circle,fill=black,inner sep=1.2pt]

        \node[label= left :{\small{$x$}}] (1) at (0,0) [node] {};
        \node[label= right :{\small{$z$}}] (2) at (2,0) [node]
        {};
        \node[label=  right :{\small{$z'$}}] (3) at (2,2) [node]
        {};
        
        \node[label= left :{\small{$x'$}}] (4) at (0, 2) [node]
        {};
        \node[][] at (1,1){\textsf{\textcolor{blue}{FC}}};

        \draw[->] (1) edge[below] node {\footnotesize{$R$}} (2);
        \draw[->, dotted] (2) edge[right] node{\footnotesize{$\leq$}} (3);
        \draw[->, dashed ] (1) edge[left] node{\footnotesize{$\leq$}}  (4);
        \draw[->,dotted,] (4) edge[above] node {\footnotesize{$R$}} (3);
        
        \end{tikzpicture}
        \quad 
        \begin{tikzpicture}[scale=0.8]
        \tikzstyle{node}=[circle,fill=black,inner sep=1.2pt]

        \node[label= left :{\small{$x$}}] (1) at (0,0) [node] {};
        \node[label= right :{\small{$z$}}] (2) at (2,0) [node]
        {};
        \node[label=  right :{\small{$z'$}}] (3) at (2,2) [node]
        {};
        
        \node[label= left :{\small{$x'$}}] (4) at (0, 2) [node]
        {};
        \node[][] at (1,1){\textsf{\textcolor{blue}{BC}}};

        \draw[->] (1) edge[below] node {\footnotesize{$R$}} (2);
        \draw[->, dashed] (2) edge[right] node{\footnotesize{$\leq$}} (3);
        \draw[->, dotted ] (1) edge[left] node{\footnotesize{$\leq$}}  (4);
        \draw[->,dotted,] (4) edge[above] node {\footnotesize{$R$}} (3);
        
        \end{tikzpicture}
        \vspace{-0.5cm}
        \end{center}
    \caption{Forward confluence (left) and backward confluence (right)
    }
    \label{fig:fc:bc}
\end{figure}
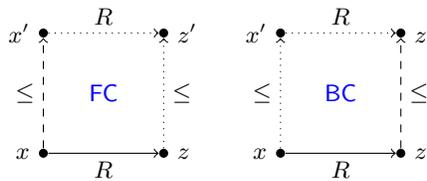

\begin{figure}
    \centering
    \small
    \begin{tabular}{c l @{\qquad}c}
        $\ax{1}$ & $\square(A\supset B)\supset(\square A\supset\square B)$ & 
        \\
        $\ax{2}$ &  $\square(A\supset B)\supset(\Diamond A\supset\Diamond B)$ & 
        \AxiomC{$A\supset B$}
	    \AxiomC{$A$}
	    \RightLabel{$\modp$}
	    \BinaryInfC{$B$}
	    \DisplayProof
        \\
        $\ax{3}$ & $\Diamond(A\vee B)\supset (\Diamond A\vee\Diamond B)$ & 
        \\
        $\ax{4}$ & $(\Diamond A\supset\square B)\supset\square (A \supset  B)$ & 
         \AxiomC{$A$}
	\RightLabel{$\nec$}
	\UnaryInfC{$\Box A$}
	\DisplayProof\\
        $\ax{5}$ & $\neg\Diamond\bot$ & \\
     \end{tabular}
     \normalsize
    \caption{Axioms and rules of $\ik$}
    \label{fig:axioms}
\end{figure}

The Hilbert-style axiom system for \ik, called $\hik$, is defined by adding to an axiomatization of intuitionistic propositional logic $\ipl$ the axioms and inference rules displayed in Figure~\ref{fig:axioms}. Soundness and completeness of the axiomatisation with respect to the semantics was proved  in~\cite{Fischer:Servi:1984}. 





\begin{theorem}
A formula $A$ is provable in $\hik$ if and only if $A$ is valid.
\end{theorem}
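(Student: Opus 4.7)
The plan is to establish the two directions separately: soundness by induction on the length of a derivation in $\hik$, and completeness via a canonical model construction based on prime theories.

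For soundness, I would show that validity in bi-relational models is closed under the axioms and rules of $\hik$. The intuitionistic base case follows from the hereditary condition and the pre-order $\leq$, while Modus Ponens and Necessitation are preserved routinely (the latter using that $\Box A$ depends universally on all $\leq$-successors of $\leq$-successors, so a valid $A$ forces $\Box A$ at every world). The modal axioms each require a targeted argument: $\ax 1$ is a direct unfolding of the clause for $\Box$; $\ax 5$ is immediate from $\+M, v \not\Vdash \bot$; $\ax 3$ follows from the existential clause for $\Diamond$; and the two interesting cases are $\ax 4$ and $\ax 2$, which are precisely the axioms that force, respectively, the forward and backward confluence conditions. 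For $\ax 4$, assume $w \Vdash \Diamond A \supset \Box B$ and consider $w \leq w'$ with $w' \Vdash A$; using an $R$-successor of $w'$ and \emph{forward confluence} to reflect it back through $w$, one derives that every $R$-successor of $w'$ forces $B$. The case for $\ax 2$ is symmetric and uses \emph{backward confluence}.

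For completeness, I would carry out a canonical model construction. Worlds are prime theories of $\hik$ (consistent, deductively closed sets with the disjunction property). The standard Lindenbaum-style lemma extends any consistent set $\Gamma$ with $\Gamma \not\vdash A$ to a prime theory still not containing $A$. Define $\Gamma \leq \Delta$ iff $\Gamma \subseteq \Delta$, which is clearly a pre-order and immediately validates the hereditary condition on atoms. Define $\Gamma R \Delta$ iff $\{A : \Box A \in \Gamma\} \subseteq \Delta$ and $\{\Diamond A : A \in \Delta\} \subseteq \Gamma$, i.e., the usual bi-directional condition needed in the intuitionistic setting. A truth lemma then shows by induction on $A$ that $\Gamma \Vdash A$ iff $A \in \Gamma$; the $\Box$ and $\Diamond$ cases rely on the extension lemma together with axiom $\ax 1$ and the definition of $R$. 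From $\not\vdash A$ one then obtains a prime theory refuting $A$, giving a countermodel.

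The main obstacle is verifying that the canonical frame actually satisfies \emph{forward} and \emph{backward confluence}. For $(\fc)$, given $\Gamma \leq \Gamma'$ and $\Gamma R \Delta$, one must build $\Delta'$ with $\Gamma' R \Delta'$ and $\Delta \leq \Delta'$; the natural candidate is the prime theory extending $\Delta \cup \{A : \Box A \in \Gamma'\}$ consistent with $\{\neg C : \Diamond C \notin \Gamma'\}$, and its existence requires an argument using $\ax 4$ to rule out a derivable contradiction. Dually, for $(\bc)$ given $\Gamma R \Delta$ and $\Delta \leq \Delta'$, one constructs $\Gamma'$ using the dual set of boxed and diamond formulas, with the consistency argument now drawing on $\ax 2$ and $\ax 3$. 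These two lemmas are the heart of the proof; once they are in place, the construction yields a bi-relational model and, combined with the truth lemma, the required countermodel for any unprovable $A$.
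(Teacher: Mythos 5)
The paper does not actually prove this theorem: it states it and attributes soundness and completeness of $\hik$ with respect to bi-relational models to Fischer Servi's original work, so there is no in-paper proof to compare against. Your overall strategy --- soundness by induction on $\hik$-derivations, completeness via a canonical model of prime theories with $\leq$ as inclusion, the bidirectional definition of $R$, and two lemmas establishing $(\fc)$ and $(\bc)$ for the canonical frame --- is the standard route taken in that literature (Fischer Servi, Plotkin--Stirling, Simpson), and at this level of granularity it is the right plan.

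There is, however, a concrete error in your soundness sketch: you have the correspondence between the connecting axioms and the frame conditions backwards. The validity of $\ax 4$, i.e.\ $(\Diamond A\supset\Box B)\supset\Box(A\supset B)$, uses \emph{backward} confluence, not forward confluence: unfolding the $\Box$-clause one arrives at $w\leq w'$, $w'Rv$, $v\leq v'$ with $v'\Vdash A$, and to conclude $v'\Vdash B$ one needs a world $w''$ with $w'\leq w''$ and $w''Rv'$ --- exactly an instance of $(\bc)$ applied to $w'Rv$ and $v\leq v'$ (then $w''\Vdash\Diamond A$, hence $w''\Vdash\Box B$, hence $v'\Vdash B$). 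Conversely, $\ax 2$ is valid on every bi-relational model with a hereditary valuation and needs no confluence condition at all: its verification only uses that the forcing clause for $\Box$ already quantifies over all $\leq$-successors, so $\Box(A\supset B)$ persists upward for free. Forward confluence enters the soundness proof elsewhere, namely in the hereditary (persistence) lemma for $\Diamond$-formulas --- which the paper itself points out --- and that lemma is what makes the intuitionistic implication clauses sound. You should also revisit the analogous assignment in your canonical-model confluence lemmas: given the above correspondence one expects the verification of $(\bc)$ for the canonical frame to be the one resting on $\ax 4$, so the symmetric swap in your sketch would leave the key consistency arguments unsupported by the axioms you invoke. The remainder of the completeness sketch (Lindenbaum extension to prime theories, truth lemma, refutation of an unprovable $A$) is standard and fine as a plan, though the confluence lemmas you rightly identify as the heart of the proof are left entirely unproved.
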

\section{A bi-nested calculus for \ik}\label{sec:calculus}

In this section we present a bi-nested sequent calculus, called \cik, for the logic \ik. The calculus makes use of two types of nesting, $\iblock{\cdot}$ and $\mblock{\cdot}$, 
respectively representing 
the pre-order $\leq$ and accessibility relation $R$ in bi-relational semantics. 
The two nestings give rise to two kinds of nested components, which we call \emph{implication block} and \emph{modal block} respectively. 
The calculus \cik contains two `interaction' rules capturing the frame conditions $(\fc)$ and $(\bc)$, which make implication and modal blocks interact with each other. 
And \cik can be regarded as a modular extension of the calculus \cfik for the logic \fik in \cite{csl-2024-fik}: it is obtained by adding to \cfik an interaction rule capturing $(\bc)$. 

We start by introducing some key syntactic notions. 

\begin{definition}
\label{def:bi-nested:sequent}
  A \textit{bi-nested sequent} (or \emph{sequent} for short) is inductively defined as: 
  \begin{itemize}
    \item 
    the empty sequent $\, \Rightarrow\, $ is a bi-nested sequent;
    \item 
    if $\Gamma$ and $\Delta'$ are multisets of formulas and 
    $S_1,\ldots,S_m$, $T_1,\ldots,T_n$ are bi-nested sequents, for $m,n \geq 0$, then $\seq{\Gamma}{\Delta',\iblock{S_1},\ldots,\iblock{S_m},\mblock{T_1},\ldots,\mblock{T_n}}
    $ is a bi-nested sequent. 
  \end{itemize}
\end{definition}


We use $S,T,U$ to denote sequents. 
Given a sequent $S$, denote the antecedent of $S$ by $\ant{S}$, which is a multiset of formulas, and the succedent of $S$ by $\suc{S}$, which is a multiset of formulas and blocks. The multiset of formulas occurring in $\suc{S}$ is further denoted by $\fm{\suc{S}}$.


\begin{definition}
  The \emph{modal depth} of a formula $A$, denoted as $\textit{md}(A)$, is the maximal nested number of modalities occurring in $A$. For $\Gamma$ finite set of formulas, define $\textit{md}(\Gamma)=\textit{md}(\bigwedge \Gamma)$. 

  For a sequent $S$ of the form 
  $\Gamma\Rightarrow\Delta,[S_1],\ldots,[S_m],\langle T_1\rangle,\ldots,\langle T_n\rangle
  $, 
  let 
  $
  \textit{md}(S)=\max\{\textit{md}(\Gamma),\textit{md}(\Delta),\textit{md}(S_1)+1,\ldots,\textit{md}(S_m)+1,\textit{md}(T_1),\ldots,\textit{md}(T_n)\}
  $. 
\end{definition}

Next, we define the notion of \emph{context}, standard in  deep inference formalisms, which intuitively denotes a sequent with a placeholder. 

\begin{definition}
  A \emph{context} $G\{\ \}$ is 
  inductively defined as:
  \begin{itemize}
    \item The empty context $\{\ \}$ is a context;
    \item If $\Gamma\Rightarrow\Delta$ is a sequent and  $G'\{\ \}$ is a context then both  $\Gamma\Rightarrow\Delta, \langle G'\{\ \}\rangle$ 
    and $\Gamma\Rightarrow\Delta, [G'\{\ \}]$ 
    are contexts. 
  \end{itemize}
\end{definition}

By filling a context $G\{ \ \}$ with some sequent $S$, we obtain a sequent $G\{S\}$. For example, given a context $G\{\ \}=p\wedge q, \square r\Rightarrow \Diamond p,\langle \square p\Rightarrow [\Rightarrow q]\rangle,[\{ \ \}]$ and a sequent $S=p\Rightarrow q\vee r,[r\Rightarrow s]$, we have $G\{S\}=p\wedge q, \square r\Rightarrow \Diamond p,\langle \square p\Rightarrow [\Rightarrow q]\rangle,[p\Rightarrow q\vee r,[r\Rightarrow s]]$. 

Next we introduce an operator on the succedent of a sequent, which 
singles out the formulas occurring in the antecedent of modal blocks occurring deep into the sequent. 

\begin{definition}
  Let $\seq{\Gamma}{\Delta}$ be a sequent. 
  The \emph{local positive part} of $\Delta$, denoted by $\Delta^*$, is defined as: 
  \begin{itemize}
    \item $\Delta^*=\emptyset$ if $\Delta$ is $\mblock{\cdot}$-free;
    \item $\Delta^*=\mblock{\seq{\Lambda_1}{\Theta_1^*}},\ldots,\mblock{\seq{\Lambda_k}{\Theta_k^*}}$ if $\Delta=\Delta_0,\mblock{\seq{\Lambda_1}{\Theta_1}},\ldots,\mblock{\seq{\Lambda_k}{\Theta_k}}$ and $\Delta_0$ is $\mblock{\cdot}$-free.
  \end{itemize}
\end{definition}

\begin{figure}[t]
  \begin{adjustbox}{max width = \textwidth} 
    \begin{tabular}{c c}
    \multicolumn{2}{c}{
      \AxiomC{}
      \RightLabel{\rm ($\bot_L$)}
      \UnaryInfC{$ G\{\Gamma,\bot\seqarrowann{}\Delta\}$}
      \DisplayProof
      \quad 
       \AxiomC{}
      \RightLabel{\rm ($\top_R$)}
      \UnaryInfC{$  G\{\Gamma\seqarrowann{}\top,\Delta\}$}
      \DisplayProof
      \quad
         \AxiomC{}
      \RightLabel{\rm ($\text{id}$)}
      \UnaryInfC{$  G\{\Gamma,p\seqarrowann{}\Delta,p\}$}
      \DisplayProof
    }\\[0.3cm]    
     \AxiomC{$ G\{A,B,\Gamma\seqarrowann{}\Delta\}$}
    \RightLabel{\rm ($\wedge_L$)}
    \UnaryInfC{$ G\{A\wedge B,\Gamma\seqarrowann{}\Delta\}$}
    \DisplayProof
    & 
      \AxiomC{$ G\{\Gamma\seqarrowann{}\Delta ,A\}$}
    \AxiomC{$ G\{\Gamma\seqarrowann{}\Delta,B\}$}
    \RightLabel{\rm ($\wedge_R$)}
    \BinaryInfC{$ G\{\Gamma\seqarrowann{}\Delta,A\wedge B\}$}
    \DisplayProof\\[0.5cm]
     \AxiomC{$  G\{\Gamma,A, \seqarrowann{}\Delta\}$}
    \AxiomC{$ G\{\Gamma,B, \seqarrowann{}\Delta\}$}
    \RightLabel{\rm ($\vee_L$)}
    \BinaryInfC{$ G\{\Gamma,A\vee B\seqarrowann{}\Delta\}$}
    \DisplayProof
    & 
     \AxiomC{$ G\{\Gamma\seqarrowann{} \Delta,A,B\}$}
    \RightLabel{\rm ($\vee_R$)}
    \UnaryInfC{$ G\{\Gamma\seqarrowann{} \Delta,A\vee B\}$}
    \DisplayProof\\[0.5cm]
    \AxiomC{$  G\{\Gamma,A\supset B\seqarrowann{} A,\Delta\}$}
  \AxiomC{$ G\{\Gamma, B\seqarrowann{}\Delta\}$}
  \RightLabel{\rm ($\supset_L$)}
  \BinaryInfC{$ G\{\Gamma,A\supset B\seqarrowann{}\Delta\}$}
  \DisplayProof
   & 
    \AxiomC{$G\{\Gamma\seqarrowann{} \Delta, \langle A\seqarrowann{} B\rangle\}$}
  \RightLabel{\rm ($\supset_R$)}
  \UnaryInfC{$G\{\Gamma\seqarrowann{} \Delta, A\supset B\}$}
  \DisplayProof
 \\[0.5cm]
  \AxiomC{$ G\{\Gamma,\square A\seqarrowann{}\Delta,[\Sigma,A\seqarrowann{} \Pi]\}$}
  \RightLabel{\rm ($\square_L$)}
  \UnaryInfC{$ G\{\Gamma,\square A\seqarrowann{}\Delta,[\Sigma\seqarrowann{} \Pi]\}$}
  \DisplayProof
    & 
      \AxiomC{$ G\{\Gamma\seqarrowann{}\Delta, \langle \, \seqarrowann{} \, [\, \seqarrowann{} A]\rangle\}$}
  \RightLabel{\rm ($\square_R$)}
  \UnaryInfC{$ G\{\Gamma, \seqarrowann{}\Delta,\square A\}$}
  \DisplayProof\\[0.5cm]
     \AxiomC{$  G\{\Gamma\seqarrowann{}\Delta,[A\seqarrowann{}\, ]\}$}
    \RightLabel{\rm ($\Diamond_L$)}
    \UnaryInfC{$ G\{\Gamma,\Diamond A\seqarrowann{}\Delta\}$}
    \DisplayProof
 & 
  \AxiomC{$ G\{\Gamma\seqarrowann{}\Delta,\Diamond A,[\Sigma\seqarrowann{}\Pi,A]\}$}
    \RightLabel{\rm ($\Diamond_R$)}
    \UnaryInfC{$ G\{\Gamma\seqarrowann{}\Delta,\Diamond A,[\Sigma\seqarrowann{}\Pi]\}$}
    \DisplayProof\\[0.5cm]
    \multicolumn{2}{c}{
    \AxiomC{$G \{\Gamma,\Gamma'\seqarrowann{}\Delta,\langle\Gamma',\Sigma\seqarrowann{}\Pi\rangle\}$}
  \RightLabel{\rm (\text{trans})}
  \UnaryInfC{$ G\{\Gamma,\Gamma'\seqarrowann{}\Delta,\langle\Sigma\seqarrowann{}\Pi\rangle\}$}
  \DisplayProof
    }\\[0.5cm]
     \multicolumn{2}{c}{
     \AxiomC{$ G\{\Gamma \seqarrowann{}\Delta,\langle\Sigma\seqarrowann{}\Pi,[\Lambda\seqarrowann{}\Theta^*]\rangle,[\Lambda\seqarrowann{}\Theta]\}$}
  \RightLabel{\rulefc}
  \UnaryInfC{$ G\{\Gamma\seqarrowann{}\Delta,\langle\Sigma\seqarrowann{}\Pi\rangle,[\Lambda\seqarrowann{}\Theta]\}$}
  \DisplayProof
    }\\[0.5cm]
    \multicolumn{2}{c}{
      \AxiomC{
  $G\{\Gamma\seqarrowann{}\Delta,[\Lambda\seqarrowann{}\Theta,\langle\Sigma\seqarrowann{}\Pi\rangle],\langle \, \seqarrowann{}[\aseq{\Sigma}{}{\Pi}]\rangle\}$
  }
  \RightLabel{\rulebc}
  \UnaryInfC{$G\{\Gamma\seqarrowann{}\Delta,[\Lambda\seqarrowann{}\Theta,\langle\Sigma\seqarrowann{}\Pi\rangle]\}$}
  \DisplayProof
    }\\
    \end{tabular}
\end{adjustbox}
  \caption{Rules of \cik}\label{fig:cik}
  \end{figure}

The rules of \cik are presented in Figure~\ref{fig:cik}.  
Propositional rules are standard, with the exception of rule $(\supset_R)$, which from a bottom-up view introduces a new implication block on the premise. 
Intuitively, every block component of a sequent represents a world in a bi-relational model, so the newly created implication block corresponds to an upper $\leq$ world of the sequent/world  $\Gamma \Rightarrow \Delta, A \supset B$. The modal rules either create a new modal block (corresponding to an $R$-successor of existing sequents) or propagate formulas into the existing ones. 
The transfer rule (trans) corresponds to the {hereditary property} of bi-relational models. The two interaction rules \rulefc and \rulebc respectively correspond to forward and backward interaction. Rule \rulefc creates a `copy' of a modal block as successor of an implication block, and transfers in it the local positive information from the original block. Rule \rulebc introduces bottom-up an exact copy of a component, which is `reached' through a different modal and implication path. 

A \emph{derivation} in \cik is a tree of sequents generated by the rules in Figure~\ref{fig:cik}. A \emph{proof} for a sequent $\sq S$ (resp. a formula $A$) in \cik is a derivation having $\sq S$ (resp. $\, \Rightarrow A$) at the root, and whose leaves are instances of $(\bot_L)$, $(\top_R)$ or (id). A sequent or a formula is said to be \emph{provable} in \cik if it has a proof in \cik. 
We can verify that each axiom in $\+H_\ik$ is provable in \cik, for example:



\begin{example}
 Axiom  ($\ax4$) is provable in \cik. We use the following 
 abbreviations: 
 $G_1\{\ \}=\seq{\Diamond p\supset \square q}{\iblock{\{\ \}}}$ and $G_2\{\ \}=\seq{\Diamond p\supset \square q}{[\Rightarrow\langle p\Rightarrow q\rangle],\iblock{\{ \ \}}}$, and the proof is given as 

 \begin{center}
     
\AxiomC{}
\RightLabel{$(\text{id})$}
\UnaryInfC{$G_1\{G_2\{\Diamond p\supset \square q\Rightarrow\Diamond p,[p\Rightarrow q,p]\}\}$}
\RightLabel{$(\Diamond_R)$}
\UnaryInfC{$G_1\{G_2\{\Diamond p\supset \square q\Rightarrow\Diamond p,[p\Rightarrow q]\}\}$}
\AxiomC{}
\RightLabel{$(\text{id})$}
\UnaryInfC{$G_1\{G_2\{\square q\Rightarrow[q,p\Rightarrow q]\}\}$}
\RightLabel{$(\square_L)$}
\UnaryInfC{$G_1\{G_2\{\square q\Rightarrow[p\Rightarrow q]\}\}$}
\RightLabel{$(\supset_L)$}
\BinaryInfC{$G_1\{G_2\{\Diamond p\supset \square q\Rightarrow[p\Rightarrow q]\}\}$}
\RightLabel{\text{(trans)}}
\UnaryInfC{$G_1\{\Diamond p\supset \square q\Rightarrow[\Rightarrow\langle p\Rightarrow q\rangle],\langle\Rightarrow[p\Rightarrow q]\rangle\}$}
\RightLabel{\rulebc}
\UnaryInfC{$G_1\{\Diamond p\supset \square q\Rightarrow[\Rightarrow\langle p\Rightarrow q\rangle]\}$}
\RightLabel{($\supset_R$)}
\UnaryInfC{$\Diamond p\supset \square q\Rightarrow\langle\Diamond p\supset \square q\Rightarrow[\Rightarrow p\supset q]\rangle$}
\RightLabel{\text{(trans)}}
\UnaryInfC{$\Diamond p\supset \square q\Rightarrow\langle\Rightarrow[\Rightarrow p\supset q]\rangle$}
\RightLabel{($\square_R$)}
\UnaryInfC{$\Diamond p\supset \square q\Rightarrow\square(p\supset q)$}
\DisplayProof

 \end{center}
\end{example}

Admissibility of 
various 
weakening and contraction 
rules
can be 
obtained by a standard proof (which we omit) proceeding by induction on the structure of a derivation.

\begin{proposition}\label{prop:admissibility-weakening-contraction}
  The following 
  rules are admissible in \cik,
  
\begin{center}
    \begin{adjustbox}{max width=\textwidth}
      \begin{tabular}{ccccc}
        \AxiomC{$G\{\Gamma\Rightarrow\Delta\}$}
        \RightLabel{$(w_{R})$}
        \UnaryInfC{$G\{\Gamma\Rightarrow\Delta,\+O\}$}
        \DisplayProof
        &
        \quad
        &
        \AxiomC{$G\{\Gamma\Rightarrow\Delta\}$}
        \RightLabel{$(w_{L})$}
        \UnaryInfC{$G\{A,\Gamma\Rightarrow\Delta\}$}
        \DisplayProof
        &
        \quad
        &
        \AxiomC{$\Gamma\Rightarrow\Delta$}
        \RightLabel{$(w_{C})$}
        \UnaryInfC{$G\{\Gamma\Rightarrow\Delta\}$}
        \DisplayProof
        \\
        &&&&\\
        \AxiomC{$G\{\Gamma\Rightarrow\Delta,\+O,\+O\}$}
        \RightLabel{$(c_{R})$}
        \UnaryInfC{$G\{\Gamma\Rightarrow\Delta,\+O\}$}
        \DisplayProof
        &
        \quad
        &
        \AxiomC{$G\{\Gamma,A,A\Rightarrow\Delta\}$}
        \RightLabel{$(c_{L})$}
        \UnaryInfC{$G\{\Gamma,A\Rightarrow\Delta\}$}
        \DisplayProof
        &
        \quad
        &
        \AxiomC{$G\{\Gamma\Rightarrow\Delta,\iblock{\seq{}{}}\}$}
        \RightLabel{$(\iblock{\emptyset})$}
        \UnaryInfC{$G\{\Gamma\Rightarrow\Delta\}$}
        \DisplayProof
      \end{tabular}
    \end{adjustbox}
  \end{center}
  where in $(w_R)$ and $(c_R)$, $\+O$ can be either a formula or a block. 
\end{proposition}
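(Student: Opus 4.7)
The plan is to prove each admissibility claim by induction on the height of the derivation of the premise, treating the rules in the order (i) the three weakening variants, (ii) the empty-block rule $(\iblock{\emptyset})$, and (iii) the two contractions, so that the first two groups can be invoked freely when reasoning about contraction.

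For the weakening rules $(w_L)$, $(w_R)$, $(w_C)$, I would establish height-preserving admissibility by a routine induction on derivation height. The base cases are immediate: if the premise is an axiom $(\bot_L)$, $(\top_R)$, or $(\text{id})$, then adding a formula, a block, or an outer context leaves the axiom intact. In the inductive step, every rule of \cik is stated schematically with respect to the surrounding context $G\{\cdot\}$ and the side-multisets $\Gamma,\Delta$, so the weakening commutes with the last applied rule: one invokes the induction hypothesis on the premises and re-applies the same rule. This is true also for the rules that introduce new blocks, namely $(\supset_R)$, $(\square_R)$, $(\Diamond_L)$, and for the interaction rules (trans), \rulefc, and \rulebc, whose schemas are closed under enlargement of the ambient context.

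The admissibility of $(\iblock{\emptyset})$ follows by another induction on derivation height, the key observation being that an empty implication block $\iblock{\, \Rightarrow\, }$ cannot be principal in any rule of \cik: it contains neither formulas nor nested blocks on which a logical, transfer, or interaction rule could act. Thus in any derivation of $G\{\Gamma\Rightarrow\Delta,\iblock{\, \Rightarrow\, }\}$ the last rule acts entirely outside this empty block, and applying the induction hypothesis to its premise(s) yields the desired derivation without the empty component.

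The main obstacle is admissibility of $(c_L)$ and $(c_R)$, which I would prove by simultaneous induction on derivation height, with a subordinate induction on the complexity of the contracted object. If the contracted formula or block is not principal in the last rule, the induction hypothesis applied to the premise(s), followed by the same rule, closes the case. The delicate cases are those where the contracted formula is principal: for rules that preserve a copy of the principal formula in the premise, such as $(\supset_L)$, $(\square_L)$, $(\Diamond_R)$, (trans), \rulefc, and \rulebc, both occurrences to be contracted already appear in the premise, and the height-induction hypothesis applies directly. For the fully invertible propositional rules, I would first establish height-preserving invertibility lemmas and then use the induction on formula complexity to contract the active subformulas. Contracting an entire block duplicated by \rulefc or \rulebc is the subtlest point: here one combines the height-induction hypothesis with the admissibility of $(w_L)$, $(w_R)$, and $(w_C)$ proved in the first step to merge the two copies, which is precisely the reason for ordering the three parts of the argument as above.
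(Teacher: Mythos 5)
The paper itself omits this proof, describing it only as ``a standard proof (which we omit) proceeding by induction on the structure of a derivation'', so your overall plan---height-preserving weakening first, then $(\iblock{\emptyset})$, then contraction with a subsidiary induction on the contracted object---is exactly the kind of argument the paper is alluding to, and the weakening and contraction parts are essentially sound (modulo routine bookkeeping: e.g.\ when the object weakened into a consequent $\Theta$ is a modal block, the starred copy $\Theta^*$ occurring in the premiss of \rulefc changes as well, so one needs several applications of the induction hypothesis rather than a single commutation step).

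There is, however, a genuine gap in your argument for $(\iblock{\emptyset})$. The ``key observation'' that an empty implication block cannot be principal in any rule of \cik is false: both $(\text{trans})$ and \rulefc act on an implication block \emph{from the outside} and require nothing to occur inside it. Read upwards, $(\text{trans})$ applied to $G\{\Gamma,\Gamma'\Rightarrow\Delta,\iblock{\seq{}{}}\}$ yields $G\{\Gamma,\Gamma'\Rightarrow\Delta,\iblock{\seq{\Gamma'}{}}\}$, and \rulefc applied to $G\{\Gamma\Rightarrow\Delta,\iblock{\seq{}{}},\mblock{\seq{\Lambda}{\Theta}}\}$ inserts $\mblock{\seq{\Lambda}{\Theta^*}}$ into the empty block. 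In either case the block in the premiss is no longer empty, so your induction hypothesis---stated only for the literally empty block---does not apply to the premiss, and the inductive step breaks. The standard repair is to prove a strengthened statement: $G\{\Gamma\Rightarrow\Delta,\iblock{\seq{\Sigma}{\Pi}}\}$ derivable implies $G\{\Gamma\Rightarrow\Delta\}$ derivable whenever the block is subsumed by its parent component, i.e.\ $\Sigma\subseteq\Gamma$ and $\Pi$ consists only of (starred copies of) modal blocks already present in $\Delta$, and to verify that this invariant is preserved by every rule that can touch the block or its descendants, including closure of branches by $(\text{id})$ inside the block. With that generalisation the induction goes through; without it the case analysis is incomplete.
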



Next, we show the soundness of \cik. Since a bi-nested sequent does not have a formula interpretation, we prove soundness with respect to bi-relational models for \ik. 
We first extend the forcing conditions for formulas to blocks and sequents. 

\begin{definition}
  Let $\mathcal{M}=(W,\leq,R,V)$ be a bi-relational model and $x\in W$. The forcing relation $\Vdash$ is extended to sequents and blocks in the following way:
  \begin{itemize}
    \item $\+M,x \nVdash \ \Rightarrow \ $;
    \item $\+M, x \Vdash [T]$ iff for every $y$ with $Rxy$, 
    it holds that 
    $\+M, y \Vdash T$;
    \item $\+M, x \Vdash \langle T \rangle$ iff for every $x'$ with $x\leq x'$, 
    it holds that 
    $\+M, x' \Vdash T$;
    \item $\+M, x \Vdash \Gamma \Rightarrow \Delta$ iff either $\+M, x \not \Vdash A$ for some $A\in \Gamma$, or 
    $\+M, x \Vdash \+O$ for some $\+O \in \Delta$, where ${\cal O}$ is either a formula or a block. 
  \end{itemize}
  We say that $S$ is \emph{valid in $\+M$} iff for all $ w\in W$, we have $\+M,w\Vdash S$. Moreover, we say that 
  $S$ is \emph{valid} iff it is valid in any bi-relational model.
\end{definition}

Whenever 
$\+M$ is clear from the context, 
we write $x \Vdash \+O$ for an object $\+O$ (a formula, a sequent or a block). 
Moreover, given a sequent $\Gamma \Rightarrow \Delta$, we write $x \Vdash \Delta$ if there is an $\+O\in \Delta$ s.t. $x \Vdash \+O$. Conversely, we write $x \not\Vdash \Delta$ if the condition does not hold. 

\begin{theorem}[Soundness]
\label{thm:soundness}
  If a sequent is provable in \cik, then it is valid.
\end{theorem}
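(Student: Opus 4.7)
The plan is to proceed by induction on the height of a derivation in \cik, showing that each inference rule preserves validity with respect to bi-relational models. Since every rule operates inside a context $G\{\cdot\}$, it is cleaner to reason contrapositively: assume the conclusion fails at some world $w_0$ of some model $\+M$, locate an internal world $x$ witnessing the failure of the active sequent, and then produce a witness for the failure of one of the premises, contradicting the inductive hypothesis. As a preliminary, I would prove a context lemma stating that $\+M, w_0 \not\Vdash G\{S\}$ implies the existence of a world $x$ reachable from $w_0$ by following the $\leq$- and $R$-steps prescribed by the block structure of $G$, such that $\+M, x \not\Vdash S$, and conversely that a failure at such an $x$ lifts to a failure of $G\{S\}$ at $w_0$. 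This lemma, proved by induction on the depth of $G$ using the forcing clauses for $[\,\cdot\,]$ and $\langle\,\cdot\,\rangle$, reduces the soundness of each rule to its empty-context instance.

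The base cases $(\id)$, $(\bot_L)$, $(\top_R)$ are immediate. The propositional rules and the modal rules $(\square_L)$, $(\square_R)$, $(\Diamond_L)$, $(\Diamond_R)$ are verified directly from the forcing clauses; in particular, $(\square_L)$ exploits reflexivity of $\leq$ to instantiate the universal quantifier in the clause for $\square$, and $(\supset_R)$ unfolds the $\leq$-quantification hidden in the clause for $\supset$ into the newly introduced implication block. Rule $(\mathrm{trans})$ invokes the hereditary property, extended to arbitrary formulas, in order to transport the repeated antecedent $\Gamma'$ from the ambient world to any of its $\leq$-successors.

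The main obstacle is the soundness of the two interaction rules. For \rulefc, suppose the conclusion fails at $x$: then there exist $x' \geq x$ witnessing the failure of $\langle \Sigma \Rightarrow \Pi\rangle$ and $y$ with $xRy$ witnessing the failure of $[\Lambda \Rightarrow \Theta]$. Applying the frame condition $(\fc)$ to $x \leq x'$ and $xRy$ yields a world $y'$ with $x'Ry'$ and $y \leq y'$; by hereditariness, $\Lambda$ still holds at $y'$. One then checks, by induction on the block structure of $\Theta$, that the local positive part $\Theta^*$ also fails at $y'$: this is precisely where the design of the operator $(\cdot)^{*}$ pays off, because $\Theta^*$ retains only the modal sub-blocks of $\Theta$, whose failure is $R$-reachable and therefore propagates through $\leq$ by iterated applications of $(\fc)$. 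Hence $y'$ witnesses the failure of the added block $\langle \Sigma \Rightarrow \Pi, [\Lambda \Rightarrow \Theta^*]\rangle$ in the premise. The argument for \rulebc is dual: from witnesses $y$ of $xRy$ and $y' \geq y$ of the failure of $\langle \Sigma \Rightarrow \Pi\rangle$ inside $[\Lambda \Rightarrow \Theta, \langle \Sigma \Rightarrow \Pi\rangle]$, apply $(\bc)$ to obtain $x' \geq x$ with $x'Ry'$; since the premise simply copies the block $[\,\Rightarrow [\Sigma \Rightarrow \Pi]]$, no positive-part manipulation is required, and $x'$ directly witnesses the failure of the newly added block. The delicate bookkeeping in the \rulefc case, specifically the compatibility of $(\cdot)^{*}$ with iterated $(\fc)$-steps, is expected to be the trickiest part of the proof.
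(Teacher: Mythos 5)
Your proposal is correct and follows essentially the same route as the paper: reduce each rule to its empty-context instance (the paper phrases your ``context lemma'' as showing that every rule preserves validity under contexts, by induction on the context structure), then verify validity of each empty-context rule, with $(\bc)$ handling \rulebc exactly as you describe and $(\fc)$ plus hereditariness and the recursive structure of $(\cdot)^*$ handling \rulefc. The only differences are cosmetic --- the paper writes the interaction-rule cases as direct contradiction arguments and only spells out \rulebc (deferring \rulefc to the earlier \fik paper), and your added block for \rulebc should read $\langle\,\Rightarrow[\Sigma\Rightarrow\Pi]\rangle$ rather than $[\,\Rightarrow[\Sigma\Rightarrow\Pi]]$, a slip that does not affect your argument.
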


\begin{proof}
  Given a rule ($r$) with empty context, i.e., of the form $\frac{S_1 \quad S_2}{S}$ or $\frac{S_1}{S}$, we say that ($r$) is {\it valid} iff for any model $\+M$ and any world $x$ in it, if $x\Vdash S_i$ for all $i\leq 2$, then $x\Vdash S$. 

  In order to establish soundness, we first need to show that every \cik rule with empty context is valid. Most cases can be found in \cite{csl-2024-fik} and we only prove the case of \rulebc here. 
  Take \rulebc with empty context and, aiming at a contradiction, suppose it is not valid. 
	Then there is a bi-relational model $\+M=(W,\leq,R,V)$ and $x\in W$ such that: 
    \begin{enumerate}[$(a)$]
        \item \label{pr:sound:ita}$x\Vdash \Gamma\Rightarrow\Delta,[\Lambda\Rightarrow\Theta,\langle\Sigma\Rightarrow\Pi\rangle],\langle\Rightarrow[\Sigma\Rightarrow\Pi]\rangle$; and
        \item \label{pr:sound:itb}$x\nVdash \Gamma\Rightarrow\Delta,[\Lambda\Rightarrow\Theta,\langle\Sigma\Rightarrow\Pi\rangle]$.
    \end{enumerate}
    It follows from \ref{pr:sound:itb} that:
    \begin{enumerate}[$(b1)$]
        \item $x\Vdash\Gamma$; and 
        \item $x\nVdash\Delta$; and 
        \item \label{pr:sound:itb3}$x\nVdash[\Lambda\Rightarrow\Theta\langle\Sigma\Rightarrow\Pi\rangle]$.
    \end{enumerate}
    By \ref{pr:sound:itb3} 
	there is some $y$ such that $xRy$ and $y\Vdash\Lambda$, $y\nVdash\Theta$ and $y\nVdash\langle\Sigma\Rightarrow\Pi\rangle$. 
	From the latter  we have that 
	there is a $y'\geq y$ s.t. $y'\nVdash\Sigma\Rightarrow\Pi$.  
  By \ref{pr:sound:ita} and \ref{pr:sound:itb}, we obtain  $x\Vdash \langle\Rightarrow[\Sigma\Rightarrow\Pi]\rangle$, whence 
  for all $z\geq x$, it holds that $ z\Vdash[\Sigma\Rightarrow\Pi]$, i.e., 
  every $R$-successor of such $z$ satisfies $\Sigma\Rightarrow\Pi$. Since $\+M$ satisfies ($\bc$), from $xRy$ and $y\leq y'$ it follows that there is some $z_0$ s.t. $x\leq z_0$ and $z_0Ry'$. Thus $y'\Vdash\Sigma\Rightarrow\Pi$, a contradiction.

 To conclude the proof, we need to show  that each rule ($r$) in \cik
 preserves validity, i.e., for a model $\+M$ and a world $x$ in it, $x\Vdash G\{S_i\}$ implies $x\Vdash G\{S\}$. 
 The proof proceeds by an easy induction on the structure of contexts. 
\end{proof}
\section{Proof search and termination}
\label{sec:proof-serach}


In this section we introduce a decision algorithm for \ik, that is, an algorithm that for any formula decides whether it is valid or not. The algorithm implements a terminating proof search strategy in \cik.

To obtain termination of root-first proof search, we need to prevent loops which would occur by na\"ively applying the rules of \cik bottom-up.  
To this purpose, we adopt 
a cumulative and set-based version of \cik. In this way we avoid redundant applications of the rules, by preventing  several copies of the same formula to be added to a sequent. However, this is not enough, as infinite chains of nested implication blocks might occur in root-first proof search. 
To 
block 
this kind of loop, we $(i)$ modify rule $(\supset_R)$ to avoid unnecessary generation of nested implication blocks, and  $(ii)$ we adopt a \emph{blocking} mechanism that prevents the application of rules to a nested implication block if it is a  `copy' of another implication block occurring lower in the $\iblock{\cdot}$-chain. We notice that both these strategies are \emph{local} to a sequent: 
there is no need to store a whole derivation or branch to decide whether a rule can be applied or not. The controls $(i)$ and $(ii)$, together with a specific order of applications of the rules, allow us to obtain termination of root-first proof search, proved at the end of this section. In Section~\ref{sec:completeness} we will prove the completeness of \cik, by showing how a countermodel for the sequent at the root can be extracted by a `failed' sequent produced by the algorithm.

We start by modifying bi-nested sequents as follows. 
A \emph{set-based bi-nested sequent} (or  \emph{set-sequent} or \emph{sequent}) is defined as in Definition~\ref{def:bi-nested:sequent}, by replacing `multiset' with `set'. 
Accordingly, for a set-based sequent $S =\Gamma \Rightarrow \Delta$ the antecedent $\ant{S}= \Gamma$ is a set of formulas and the consequent $\suc{S} =\Delta$ is a set of formulas and possibly other blocks containing set-based sequents. From now on, we shall employ set-sequents. 

Based on set-sequents, we define a cumulative version of \cik by repeating the principal formula in the premiss(es) of each rule. For instance, rule $(\Box_R)$ becomes: 
\begin{center}
\small
\AxiomC{$G\{\Gamma\seqarrowann{}\Delta,\square A, \langle\seqarrowann{} [\seqarrowann{} A]\rangle\}$}
  \RightLabel{\rm ($\square_R$)}
  \UnaryInfC{$G\{\Gamma, \Diamond A\seqarrowann{}\Delta,\square A\}$}
  \DisplayProof
  \normalsize
\end{center}
To restrict the proof search space, we further replace rule $(\supset_R)$ with  two rules: 
\begin{center}
\small
      \AxiomC{$G\{\aseq{ \Gamma}{}{\Delta},A\supset B,B\}$}
  \LeftLabel{$A\in\Gamma$}
  \RightLabel{\rm ($\supset_R$)}
  \UnaryInfC{$G\{\aseq{ \Gamma}{}{\Delta},A\supset B\}$}
  \DisplayProof
  \quad 
  \AxiomC{$G\{\aseq{\Gamma}{}{\Delta},A\supset B,\langle \aseq{A}{}{B}\rangle\}$}
  \LeftLabel{$A\notin\Gamma$}
  \RightLabel{\rm ($\supset_R$)}
  \UnaryInfC{$G\{\aseq{\Gamma}{}{\Delta},A\supset B\}$}
  \DisplayProof
  \normalsize
\end{center}
We call the resulting proof system \ccik. 
Using Proposition~\ref{prop:admissibility-weakening-contraction},
it is easy to verify that \ccik~is equivalent to \cik (we omit the proof). 

\begin{proposition}
  Let $S$ be a sequent. $S$ is provable in {\rm \cik}~if and only if it is provable in {\rm \textbf{C}\cik}.
\end{proposition}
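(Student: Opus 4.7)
The plan is to prove both directions by induction on the height of derivations in the source calculus, leveraging the admissible rules $(c_R)$, $(c_L)$, $(w_R)$, $(w_L)$, and $(w_C)$ from Proposition~\ref{prop:admissibility-weakening-contraction}.

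For the direction $\cik \Rightarrow \ccik$, I would translate a \cik derivation into a \ccik one. The multiset-to-set transition is handled by absorbing duplicate formulas (justified by contraction admissibility). Each rule of \cik is then simulated by its cumulative counterpart in \ccik: the extra copy of the principal formula appearing in the cumulative premise can be introduced by admissible weakening, which transfers to \ccik (since \ccik is essentially \cik with built-in contraction). The interesting case is the original $(\supset_R)$ rule, which requires a side-condition analysis: if $A\notin\Gamma$, we apply version 2 of the \ccik $(\supset_R)$ directly to the IH derivation; if $A\in\Gamma$, we apply version 1, and must derive its premise $G\{\Gamma\Rightarrow\Delta,A\supset B,B\}$ from the IH \ccik proof of $G\{\Gamma\Rightarrow\Delta,\langle A\Rightarrow B\rangle\}$.

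For the direction $\ccik \Rightarrow \cik$, I would translate each cumulative rule of \ccik by applying the corresponding non-cumulative \cik rule, treating the principal formula as part of the passive context $\Delta$. This generates two copies of the principal formula in the conclusion, which are then merged via the admissible $(c_R)$ of \cik. This uniform strategy handles all propositional and modal rules, the $(\text{trans})$ rule, the interaction rules \rulefc and \rulebc, and version 2 of $(\supset_R)$. For $(\supset_R)$ version 1 (with $A\in\Gamma$), from the IH \cik proof of $G\{\Gamma\Rightarrow\Delta,A\supset B,B\}$ we apply \cik's $(\supset_R)$, whose premise $G\{\Gamma\Rightarrow\Delta,\langle A\Rightarrow B\rangle\}$ must then be derived using the hypothesis $A\in\Gamma$ together with $(\text{trans})$ to push $A$ into the block's antecedent and admissible structural rules to propagate the $B$ from the outer succedent into the block's succedent.

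The main obstacle is the $(\supset_R)$ case when the side condition $A\in\Gamma$ applies, in both directions. It requires an auxiliary lemma relating, for $A\in\Gamma$, the derivability of $G\{\Gamma\Rightarrow\Delta,\langle A\Rightarrow B\rangle\}$ to that of $G\{\Gamma\Rightarrow\Delta,B\}$. Semantically, the equivalence is justified by the extended hereditary property noted after Definition~\ref{definition:forcing:relation}: if $A$ and $B$ both hold at the current world, they propagate to all $\leq$-upper worlds, validating the block. Syntactically, the lemma requires a careful combination of $(\text{trans})$ and the admissible weakening/contraction rules; once established, the induction step in both simulations follows smoothly and the proposition is obtained.
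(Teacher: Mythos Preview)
Your approach is correct and matches the paper's own treatment: the paper simply states that the equivalence ``is easy to verify'' using Proposition~\ref{prop:admissibility-weakening-contraction} and omits the proof entirely. Your plan—induction on derivation height, simulating each rule via its counterpart together with admissible weakening and contraction—is exactly what the paper gestures at, only spelled out in more detail.

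You are also right to single out the $(\supset_R)$ case with side condition $A\in\Gamma$ as the one requiring extra care; the paper does not mention this subtlety at all. One small caution: the auxiliary equivalence you need (between derivability of $G\{\Gamma\Rightarrow\Delta,B\}$ and of $G\{\Gamma\Rightarrow\Delta,\langle A\Rightarrow B\rangle\}$ when $A\in\Gamma$) does not follow by a \emph{direct} combination of {\rm (trans)} with the structural rules of Proposition~\ref{prop:admissibility-weakening-contraction}, since none of those rules lets you move a formula from an outer succedent into the succedent of an implication block. Establishing this lemma cleanly requires its own induction on the height of the given derivation (an invertibility-style argument), rather than a finite composition of the admissible rules listed. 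This is a routine strengthening of the toolkit, but worth being explicit about, since your write-up currently suggests the lemma is immediate from Proposition~\ref{prop:admissibility-weakening-contraction} alone.
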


Next, we define a number of relations, meant to 
capture 
membership and inclusion 
between sequents. In particular, the notion of \emph{structural inclusion} 
will play a crucial role in both the termination strategy and in the definition of the pre-order relation on countermodels, which will be established in Section~\ref{sec:completeness}. 

\begin{definition}
  Let $\sq S_1$, $\sq S_2$ be sequents. 
  Set $\orizero{\sq S_1}{\sq S_2}$ iff $\orseq{\sq S_1} \in \sq S_2 $ and $\rizero{\sq S_1}{\sq S_2}$ iff $\rseq{\sq S_1} \in \sq S_2 $. Then, let 
  $\ori{}{}$ and $\ri{}{}$ be the transitive closure of $\orizero{}{}$ and $\rizero{}{}$ respectively. 
  Then, take $\allizero{}{} = \orizero{}{} \cup \rizero{}{}$ and let $\alli{}{}$ be the reflexive and transitive closure of $\allizero{}{}$. 
\end{definition}

\begin{definition}\label{def:saturation-inclusion}
  Let $\sq S$, $\sq S_1$ and $\sq S_2$ be sequents such that $\alli{\sq S_1}{S}$, $\alli{\sq S_2}{S}$ and $\sq S_1 = \Gamma_1 \Rightarrow \Delta_1$, $\sq S_2 = \Gamma_2 \Rightarrow \Delta_2$. 
  Sequent $\sq S_1$ is \emph{structurally included} in $\sq S_2$, in symbols $\stri{\sq S_1}{\sq S}{\sq S_2}$ iff all of the following hold: 
  \begin{enumerate}[(i)]
      \item $\Gamma_1\subseteq\Gamma_2$; and 
      \item For each $\rizero{ T_1}{ S_1}$, there exists $ \rizero{ T_2}{ S_2}$ s.t. $\stri{\sq T_1}{\sq S}{\sq T_2}$.
    \end{enumerate}
\end{definition}

\begin{figure}
	\begin{adjustbox}{max width=\textwidth}
		
	\begin{tabular}{l l l  l }
		($\wedge_R$) & If $A\wedge B\in \Delta$, then $A\in\Delta$ or $B\in \Delta$.  &($\bot_L$) &  $\bot\notin\Gamma$.\\[0.1cm]
		 ($\wedge_L$) & If $A\wedge B\in \Gamma$, then $A,B\in \Gamma$. 
		 & $(\top_R)$& $\top\notin\Delta$. \\[0.1cm]
		 ($\vee_L$) & If $A\vee B\in \Gamma$, then $A\in\Gamma$ or $B\in\Gamma$. &
		 ($\mbox{id}$) &  $\mathsf{At}\cap(\Gamma\cap\Delta)$ is empty. \\[0.1cm]
		 ($\vee_R$) &If $A\vee B\in\Delta$, then $A,B\in\Delta$. &  &\\[0.1cm]
		 ($\supset_L$)& If $A\supset B\in\Gamma$, then $A\in\Delta$ or $B\in\Gamma$.\\[0.1cm]
		  ($\supset_R$) & 
		   \multicolumn{3}{l}{
		  If $A\supset B\in\Delta$, then either $A\in\Gamma$ and $B\in\Delta$, 
		  or there is $\langle\Sigma\Rightarrow\Pi\rangle\in\Delta$  }\\
		  &  \multicolumn{3}{l}{
		  with $A\in\Sigma$ and $B\in\Pi$.}\\[0.1cm]
	   ($\Diamond_R$) &
	   \multicolumn{3}{l}{ If $\Diamond A\in \Delta$ and $[\Sigma\Rightarrow\Pi]\in\Delta$, then $A\in\Pi$.} \\[0.1cm]
	   ($\Diamond_L$)& 
	   \multicolumn{3}{l}{If $\Diamond A\in\Gamma$, then there is $[\Sigma\Rightarrow\Pi]\in\Delta$ with $A\in\Sigma$. }\\[0.1cm]
	   ($\square_R$)& 
	   \multicolumn{3}{l}{
	    If $\square A\in \Delta$, then either there is $[\Lambda\Rightarrow\Theta]\in\Delta$ with $A\in\Theta$, }\\
    & \multicolumn{3}{l}{ or there is $\langle\Sigma\Rightarrow[\Lambda\Rightarrow\Theta],\Pi\rangle\in\Delta$ with $A\in\Theta$.}\\[0.1cm]
	 ($\square_L$) & 
	 \multicolumn{3}{l}{ If $\square A\in\Gamma$ and $[\Sigma\Rightarrow\Pi]\in\Delta$, then $A\in \Sigma$.}\\[0.1cm]
	 (trans) & If $\langle\Sigma\Rightarrow\Pi\rangle\in\Delta$, then $\Gamma\subseteq\Sigma$. & & \\[0.1cm]
		 \rulefc & 
		 \multicolumn{3}{l}{ If  $\langle\Sigma\Rightarrow\Pi\rangle \in \Delta $ and $[\Lambda\Rightarrow\Theta]\in\Delta$, then there is $[\Phi\Rightarrow\Psi]\in\Pi$ with }\\ 
		 & 
		 \multicolumn{3}{l}{
		 $\Lambda\Rightarrow\Theta\subseteq^S\Phi\Rightarrow\Psi$.
		}\\[0.1cm]
		\rulebc &
		\multicolumn{3}{l}{ If  $[\Sigma\Rightarrow\Pi,\langle S_1\rangle]\in\Delta$, then there is $\langle\Phi\Rightarrow\Psi,[S_2]\rangle\in \Delta$ with $S_1\str S_2$. 
		}\\[0.1cm]
	\end{tabular}
\end{adjustbox}
   \caption{Saturation conditions, \ccik}
\label{fig:saturation:conditions}
\end{figure}

\begin{example}
  Consider the sequents below, all occurring within a sequent $S$: 
  $$
  \begin{array}{l}
    S_1\,  = \, \seq{p,q}{r,\mblock{\seq{s\wedge r}{p,\mblock{\seq{p\vee q}{s,\iblock{\seq{s}{t}}}}}}}\\
    S_2 \,  = \,\seq{p,q, r\supset s}{s,\mblock{\seq{s\wedge r}{\mblock{\seq{p\vee q}{t}}}}}\\
    S_3 \,  = \,\seq{p,q}{r,\iblock{\seq{s\wedge r}{\mblock{\seq{p\vee q}{s,\mblock{\seq{s}{t}}}}}}}
  \end{array}
  $$
  It holds that $\ori{\iblock{\seq{s}{t}}}{S_1} $, $\alli{\iblock{\seq{s}{t}}}{S_1} $, $\alli{S_1}{S_1}$,  
  $S_1\subseteq^S S_2$ and $S_1\not\subseteq^SS_3$. 
\end{example}

Next, we define the \emph{saturation conditions} associated with each rule in \ccik. These conditions are 
necessary 
to prevent redundant application of rules during backward proof search. 

\begin{definition}\label{def:saturation-condition}
  Let $\Gamma\Rightarrow\Delta$ and $S$ be  set-based  sequents with $\alli{\Gamma \Rightarrow \Delta}{S} $. The saturation conditions associated to \ccik rules can be found in Figure~\ref{fig:saturation:conditions}.
\end{definition}

We say a sequent $S$ is \textit{saturated} with respect to a rule $(r)$ if $S$ satisfies the saturation condition associated with $(r)$. 
The following proposition, of which we omit the proof, says $\ibin$-components comply with structural inclusion subject to certain saturation conditions. 

\begin{proposition}\label{prop:sequent-inclusion}
Let $\Gamma\Rightarrow\Delta$ and $S$ be  set-based  sequents with $\alli{\Gamma \Rightarrow \Delta}{S} $ and $\Gamma\Rightarrow\Delta$ saturated with respect to {\rm (trans)} and {\rm \rulefc}. 
If $\Delta$ is of the form $\Delta',\langle\Sigma\Rightarrow\Pi\rangle$, then $\Gamma\Rightarrow\Delta\str\Sigma\Rightarrow\Pi$.
\end{proposition}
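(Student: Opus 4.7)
The plan is to unpack Definition~\ref{def:saturation-inclusion} and observe that the two clauses defining $\stri{\cdot}{\cdot}{\cdot}$ line up exactly with the saturation conditions for (trans) and \rulefc as stated in Figure~\ref{fig:saturation:conditions}. Setting $S_1 = \Gamma \Rightarrow \Delta$ and $S_2 = \Sigma \Rightarrow \Pi$, both $\alli{S_1}{S}$ and $\alli{S_2}{S}$ hold (since $\orizero{S_2}{S_1}$ and $\alli{S_1}{S}$ by hypothesis), so the relation $\stri{S_1}{S}{S_2}$ is well-defined and must be verified by checking the two required conditions.

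For clause $(i)$, I need $\Gamma \subseteq \Sigma$. Since $\orizero{\Sigma \Rightarrow \Pi}{S_1}$, saturation of $S_1$ with respect to (trans) gives exactly this inclusion. For clause $(ii)$, I take an arbitrary $\rizero{T_1}{S_1}$, i.e.\ $T_1$ of the form $\Lambda \Rightarrow \Theta$ with $\mblock{\Lambda \Rightarrow \Theta} \in \Delta$, and need to produce some $\rizero{T_2}{S_2}$ with $\stri{T_1}{S}{T_2}$. But this is precisely the conclusion of the saturation condition for \rulefc: from $\orseq{\Sigma \Rightarrow \Pi} \in \Delta$ and $\mblock{\Lambda \Rightarrow \Theta} \in \Delta$, the condition yields $\mblock{\Phi \Rightarrow \Psi} \in \Pi$ with $\Lambda \Rightarrow \Theta \str \Phi \Rightarrow \Psi$, which gives the desired $T_2 = \Phi \Rightarrow \Psi$ satisfying $\rizero{T_2}{S_2}$ and $\stri{T_1}{S}{T_2}$.

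Since both clauses of the definition of structural inclusion are established, I conclude $\stri{\Gamma \Rightarrow \Delta}{S}{\Sigma \Rightarrow \Pi}$. I expect no real obstacle here: the statement is essentially a direct reading of the two saturation conditions through the lens of Definition~\ref{def:saturation-inclusion}, and the only bookkeeping to watch is that the ambient sequent $S$ in $\str$ remains fixed throughout (which it does, since every $T_2$ produced by \rulefc saturation is a sub-sequent of $S$, ensuring the recursive invocation $\stri{T_1}{S}{T_2}$ supplied by that saturation is itself a valid structural inclusion in $S$).
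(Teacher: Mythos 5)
Your proof is correct, and it matches what the paper intends: the paper explicitly omits the proof of this proposition as routine, and the argument is indeed the direct unfolding you give, with clause $(i)$ of Definition~\ref{def:saturation-inclusion} supplied by the (trans) saturation condition and clause $(ii)$ supplied verbatim by the \rulefc saturation condition (which already asserts the recursive structural inclusion $\Lambda\Rightarrow\Theta\str\Phi\Rightarrow\Psi$ as part of its conclusion). Your bookkeeping remarks — that $\alli{\Sigma\Rightarrow\Pi}{S}$ follows from $\orizero{\Sigma\Rightarrow\Pi}{(\Gamma\Rightarrow\Delta)}$ and $\alli{\Gamma\Rightarrow\Delta}{S}$, and that the ambient sequent $S$ stays fixed — are exactly the right points to check.
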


In 
backward proof search, we say $(r)$ is applied \textit{redundantly} to $S$ if $S$ is already saturated with $(r)$. We set two basic constraints for each rule application in root-first proof search, namely: 
\textit{no rule is applied to an axiom}; and 
\textit{no rule is applied redundantly}.
However, a proof-search strategy implementing only 
these two 
basic constraints 
is not sufficient to 
ensure termination of proof search. 
Similarly as in~\cite{csl-2024-fik}, loops in proof search can be generated by unrestricted applications of $(\square_R)$ and $(\supset_R)$. The following example is 
taken from~\cite{csl-2024-fik}.

\begin{example}
\label{ex:non:termination}
  Consider a derivation having sequent 
  $S = \Box a\supset\bot, \Box b\supset \bot  \Rightarrow $ at the root. 
  Let $\Gamma = \Box a\supset\bot, \Box b\supset \bot $. Below is depicted an infinite branch generated by backward proof-search. The numbers next to the rules denote multiple applications of the rules.
  \begin{center}
    \small
	\AxiomC{ 
	$
	\begin{matrix}
	\vdots\\
	\Gamma \Rightarrow \Box a, \Box b, \langle \Gamma \Rightarrow \Box a, \Box b, [\Rightarrow a], \langle \Gamma \Rightarrow \Box a, \Box b, [\Rightarrow b]\rangle \rangle, \langle \Gamma \Rightarrow \Box a, \Box b, [\Rightarrow b]\rangle
	\end{matrix}
	$
	}
	\UnaryInfC{$\vdots$}
	\UnaryInfC{$\Gamma \Rightarrow \Box a, \Box b, \langle \Gamma \Rightarrow \Box a, \Box b, [\Rightarrow a], \langle \Rightarrow [\Rightarrow b]\rangle \rangle, \langle \Gamma \Rightarrow \Box a, \Box b, [\Rightarrow b]\rangle$}
	\RightLabel{$(\Box_R)$}
	\UnaryInfC{$\Gamma \Rightarrow \Box a, \Box b, \langle \Gamma \Rightarrow \Box a, \Box b,  [\Rightarrow a]\rangle, \langle \Gamma \Rightarrow \Box a, \Box b,  [\Rightarrow b]\rangle$}
	\RightLabel{$(\supset_L)^4$}
	\UnaryInfC{$\Gamma \Rightarrow \Box a, \Box b, \langle \Gamma \Rightarrow [\Rightarrow a]\rangle, \langle \Gamma\Rightarrow [\Rightarrow b]\rangle$}
	\RightLabel{$(\Box_R)^2,(\text{trans})^2$}
	\UnaryInfC{$\Gamma \Rightarrow\Box a, \Box b$}
	\RightLabel{$(\supset_L)^2$}
	\UnaryInfC{$\Gamma \Rightarrow$}
	\DisplayProof
  \end{center}
What happens is that rules applied to
$\square a$ and $\square b$ alternate creating new nested implication blocks; while saturation conditions for the two kinds of blocks cannot be simultaneously satisfied by one single implication block. 
\end{example}

To prevent this kind of loop, 
we make use of the same blocking mechanism 
introduced 
in~\cite{csl-2024-fik} which prevents application of the rules to 
nested implication blocks which are 
`copies' of `previous' sequents/implication blocks. 
We introduce some definitions to formalise the notion of `copy of a previous block'. The  $\sharp$-operator below 
separates the `local' part of the consequent of a sequent.

\begin{definition}
  Let $\seq{\Gamma}{\Delta}$ be a sequent. We define 
  $\Delta^\sharp$ inductively as follows:
  \begin{itemize}
    \item $\Delta^\sharp=\Delta$ if $\Delta$ is block-free;
    \item $\Delta^\sharp=\Delta',\mblock{\seq{\Lambda_1}{\Theta_1^\sharp}},\ldots,\mblock{\seq{\Lambda_n}{\Theta_n^\sharp}}$ if $\Delta=\Delta',\iblock{\seq{\Sigma_1}{\Pi_1}},\ldots,\iblock{\seq{\Sigma_m}{\Pi_m}},\mblock{\seq{\Lambda_1}{\Theta_1}},\ldots,\mblock{\seq{\Lambda_n}{\Theta_n}}$ and $\Delta'$ is block-free.
  \end{itemize}
\end{definition}

For a set-based sequent $\seq{\Gamma}{\Delta}$, 
$\Delta^\sharp$ is just a set of formulas and modal blocks. Using the definition above, we further define the notion of $\sharp$-equivalence. 

\begin{definition}
\label{def:block-eq}
  Let $S_1= \Gamma_1\Rightarrow\Delta_1$, $S_2= \Gamma_2\Rightarrow\Delta_2$ be set-sequents. 
  We say that $S_1$ is \emph{$\sharp$-equivalent} to $S_2$, denoted as $S_1\simeq S_2$, if $\Gamma_1=\Gamma_2$ and $\Delta_1^\sharp=\Delta_2^\sharp$.
\end{definition}

\begin{example}
  Consider $S_1=\seq{a}{b,\iblock{\seq{c}{d,\mblock{\seq{e}{f}}}},\mblock{\seq{g}{h}}}$ and $S_2=\seq{a}{b,\mblock{\seq{g}{h,\iblock{\seq{c}{d}}}},\iblock{\seq{e}{f}}}$. 
  It holds that $Con(S_1)^\sharp=Con(S_2)^\sharp=b,\mblock{\seq{g}{h}}$. Also, $Ant(S_1)=Ant(S_2)$, hence $S_1\simeq S_2$.
\end{example}

Next, we divide \ccik rules of into four groups and define corresponding saturation levels. This is required to specify an order of application of the rules. 
\begin{enumerate}
  \item[(R1)] Basic rules: all propositional and modal rules except $(\square_R)$ and $(\supset_R)$;
  \item[(R2)] Rules transferring formulas or blocks: $(\text{trans})$ and \rulefc;
  \item[(R3)] Rules creating implication blocks from formulas: $(\square_R),(\supset_R)$;
  \item[(R4)] Rules creating implication blocks from blocks: \rulebc.
\end{enumerate}

\begin{definition}\label{3-saturation}
  Let $S =\Gamma\Rightarrow\Delta$ be a non-axiomatic sequent. $S$ is said to be: 
  \begin{itemize}
    \item {\it R1-saturated} if each $T\in^+\Gamma\Rightarrow\Delta^\sharp$ satisfies all the saturation conditions of the R1 rules;
    \item {\it R2-saturated} if $S$ is R1-saturated and $S$ satisfies saturation conditions of the R2 rules for blocks $\langle S_1\rangle,[S_2]$ s.t. $S_1 \in^{\langle\cdot\rangle}_0 S$ and $S_2 \in^{[\cdot]}_0 S$;
    \item {\it R3-saturated} if $S$ is R2-saturated and $S$ satisfies saturation conditions of the R3 rules for each $\square A, A\supset B\in \Delta$;
    \item {\it R4-saturated} if $S$ is R3-saturated and $S$ satisfies saturation conditions of the R4 rule for each $\seq{\Sigma}{\Pi,\iblock{U}}\mbin S$.
  \end{itemize}
\end{definition}

We can now state our blocking condition:

\begin{definition}\label{def:blocked-sequent}
  Let $S$ be a sequent and $S_1, S_2 \seqin S$ where $S_1=\Gamma_1\Rightarrow\Delta_1$ and  $S_2=\Gamma_2\Rightarrow\Delta_2$. 
  We say \emph{$S_2$ is blocked by $S_1$ in $S$} if $S_1$ is R3-saturated, $S_2\in^{\langle\cdot\rangle}S_1$ and $S_1\simeq S_2$. 
  We say that a sequent \emph{$T$ is blocked in $S$} if there exists $T'\seqin S$ such that $T$ is blocked by $T'$ in $S$.
\end{definition}

\begin{example}
Referring to Example~\ref{ex:non:termination}, if we were to continue applying rules $(\supset_L)$, $(\square_R)$ and $(\text{trans})$ to the topmost sequent, we would obtain the sequent $S= \Gamma \Rightarrow  \Box a, \Box b, 
 \langle T_1 \rangle, \langle \Gamma \Rightarrow \Box a, \Box b, [\Rightarrow b]\rangle$, where:\\\vspace{-0.5cm}
 $$
 T_1 \quad = \quad\Gamma \Rightarrow \Box a, \Box b, [\Rightarrow a], \langle \Gamma \Rightarrow \Box a, \Box b, [\Rightarrow b],\langle 
 \overbrace{
 \Gamma \Rightarrow \Box a, \Box b, [\Rightarrow a]}^{T_2}
 \rangle 
 \rangle
 $$
It holds that $T_1 \simeq T_2$ and $T_1$ is R3-saturated, whence $T_2$ is blocked by $T_1$ in $S$. 
Rule applications on the other implication block occurring in $S$, that is, $\iblock{\seq{\Gamma}{\square a,\square b,\mblock{\seq{}{b}}}}$, 
 will be blocked in a similar way. 
\end{example}

%

We further need to extend 
the levels of  saturation in Definition~\ref{3-saturation} by taking into account blocked sequents and by considering \emph{global} saturation, that is, saturation not only of sequents but also of their nested components.

\begin{definition}
  Let $S =\Gamma\Rightarrow\Delta$ be a non-axiomatic sequent. For $i\in\{1,2,3\}$, $S$ is 
 {\it global-R$i$-saturated} iff for each $T\in^+S$, $T$ is either R$i$-saturated or blocked;  $S$ is {\it global-saturated} iff for each $T\in^+S$, $T$ is either R4-saturated or blocked.
\end{definition}

In order to specify the proof-search algorithm, we make use of the following four terminating  macro-procedures 
that extend a given derivation $\+D$ by applying rules to a leaf $S$. 
Each procedure applies rules {\em non-redundantly} to some $T=\Gamma\Rightarrow\Delta\in^+S$, which equivalently means that $S = G\{T\}$ for some context $G$.
\begin{itemize}
  \item $\textbf{EXP1}(\+D, S, T)= \+D'$ where $\+D'$ is the extension of $\+D$  obtained by applying the R1 rules to every formula in $\Gamma\Rightarrow\Delta^\sharp$. 
  \item $\textbf{EXP2}(\+D, S, T)= \+D'$ where $\+D'$ is the extension of $\+D$  obtained by applying the R2 rules to pairs of blocks $\langle T_i\rangle,[T_j]\in\Delta$. 
  \item $\textbf{EXP3}(\+D, S, T)= \+D'$ where $\+D'$ is the extension of $\+D$ obtained by applying the R3 rules to each $\square A,A\supset B\in\Delta$.
  \item $\textbf{EXP4}(\+D, S, T)= \+D'$ where $\+D'$ is the extension of $\+D$ obtained by applying the R4 rule to each $\seq{\Sigma}{\Pi,\iblock{U}}\mbin_0 T$.
\end{itemize}


\begin{proposition}\label{prop:exp123term}
	Let $\+D$ be a finite derivation, $S$ a finite leaf of $\+D$ and $T\in^+S$. Then for $i\in\{1,2,3,4\}$, each $\textbf{EXPi}(\+D, S, T)$ terminates by producing a finite expansion of $\+D$ where all sequents in it are finite. 
\end{proposition}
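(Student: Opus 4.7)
The plan is to handle the four procedures separately, in each case identifying a finite quantity that upper-bounds the number of non-redundant rule applications performed by the procedure, and deducing both termination and finiteness of the sequents produced.

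For \textbf{EXP1}, we observe that the R1 rules enjoy the subformula property and, among them, only $(\Diamond_L)$ creates a new block, namely a fresh modal block $\mblock{\seq{A}{}}$ corresponding to some $\Diamond A$ in an antecedent. Since rules are applied non-redundantly on set-based sequents, each formula that gets added to any antecedent, non-block succedent part, or nested modal block belongs to the finite pool of subformulas of formulas already occurring in $T$, so no such set can grow unboundedly. Furthermore, a new modal block is introduced only by $(\Diamond_L)$, and by non-redundancy this happens at most once per $\Diamond A$ in an antecedent, so the depth of modal nestings generated by EXP1 is bounded by the modal depth of formulas in $T$. Branching rules, namely $(\wedge_R)$, $(\vee_L)$, and $(\supset_L)$, split the derivation into finitely many premisses, each inheriting the same finite pool of subformulas, so the whole expansion is finite.

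For \textbf{EXP2}, rule (trans) can only propagate into the antecedent of some $\iblock{\seq{\Sigma}{\Pi}}\in\Delta$ formulas already occurring in $\Gamma$, so it is applied at most $|\Gamma|$ times per implication block; rule \rulefc inserts, for each pair $(\iblock{\seq{\Sigma}{\Pi}},\mblock{\seq{\Lambda}{\Theta}})$ of top-level blocks of $T$, exactly one block $\mblock{\seq{\Lambda}{\Theta^*}}$ inside $\iblock{\seq{\Sigma}{\Pi}}$, after which the \rulefc-saturation condition is met for that pair. Since $T$ contains only finitely many top-level blocks, both rules terminate after finitely many applications.

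For \textbf{EXP3} and \textbf{EXP4} the argument is analogous: each rule in R3 (respectively R4) is applied at most once per $\square A,\, A\supset B\in\Delta$ (respectively per $\seq{\Sigma}{\Pi,\iblock{U}}\mbin_0 T$), since after a single application the corresponding saturation condition is satisfied. Because the relevant formulas and blocks are finite in number, the procedures terminate. In all cases, every new piece of material added consists of subformulas of formulas already present or of blocks built from such subformulas, so all sequents in the expansion remain finite.

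The main obstacle is \textbf{EXP1}: $(\Diamond_L)$ creates new modal blocks whose antecedents may themselves contain $\Diamond$-formulas, triggering additional R1 applications inside the freshly introduced blocks. The crucial point is that each such chain strictly decreases the modal depth of the $\Diamond$-formulas involved, so the recursion is well-founded and the global expansion is finite.
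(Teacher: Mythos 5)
Your proof is correct and follows essentially the same route as the paper: the paper dismisses \textbf{EXP2}--\textbf{EXP4} as straightforward because they operate only on the (finite) top level, and defers the genuinely delicate case of \textbf{EXP1} to the cited work on \fik, where the argument is precisely the one you give (subformula property plus the strictly decreasing modal depth of the $\Diamond$-formulas that trigger $(\Diamond_L)$, bounding the depth of newly created modal blocks). The only difference is that you spell out the \textbf{EXP1} argument in full rather than citing it, which is a welcome addition rather than a divergence.
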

\begin{proof}
   For \textbf{EXP2}, \textbf{EXP3} and \textbf{EXP4},
the proof is straightforward, 
as each procedure applies rules only on the top level of the involved sequent which is finite. For \textbf{EXP1}, which applies rules also deep on nested components, the proof is less straightforward and can be found in~\cite{csl-2024-fik}.
\end{proof}

Algorithm~\ref{PROCEDURE} showcases the procedure \proc, taking in input the formula $A$ to for which we wish to test derivability.  
\proc implements a breadth-first procedure, returning either a proof for $\Rightarrow A$ or a finite derivation where all the leaves are global-saturated. 
The steps in lines $7 - 17$ operate in parallel on non-axiomatic leaves, trying to make them global-saturated. 
On line 10, the condition says that when applying \rulebc, 
for each $\seq{\Sigma}{\Pi,\iblock{U}}\mbin_0 T$, sequent $U$ needs to be R4-saturated already, thus ensuring that  \rulebc is applied to the \emph{innermost} blocks first. 




\begin{algorithm}[!t]
  \KwIn{$\+D =\ \Rightarrow A$}
    \Repeat{FALSE}{
      \uIf{all the leaves of $\+D$ are axiomatic}{
        return ``PROVABLE" and $\+D$ }
      \uElseIf{all the non-axiomatic leaves of $\+D$ are global-saturated}{return ``UNPROVABLE" and $\+D$} 
    \Else{
        { 
        {\bf select} one non-axiomatic leaf $S$ of $\+D$ that is not global-saturated\\
        \Repeat{FALSE}{
        {
        \uIf{$S$ is global-R3-saturated}{
          for all non-R4-saturated $T\in^+ S$ s.t. for each $\seq{\Sigma}{\Pi,\iblock{U}}\mbin_0 T$, $U$ is R4-saturated, 
        let $\+D = \textbf{EXP4}(\+D, S, T)$}
        \uElseIf{$S$ is global-R2-saturated}{for all non-R3-saturated $T\in^+S$ such that $T$ is $\in^{\langle\cdot\rangle}$-minimal, check whether $T$ is blocked in $S$, if not, let $\+D = \textbf{EXP3}(\+D, S, T)$}
        \uElseIf{$S$ is global-R1-saturated}
            {for all non-R2-saturated $T\in^+ S$, let $\+D = \textbf{EXP2}(\+D, S, T)$}
         \uElse{for all non-R1-saturated $T\in^+ S$, let $\+D = \textbf{EXP1}(\+D, S, T)$}
        }
        }
         }
        }
    }
    \caption{PROCEDURE($A$)}\label{PROCEDURE}
\end{algorithm}



The following lemma, 
whose proof is easy and therefore omitted,  
guarantees 
that the properties of being saturated / blocked  are preserved throughout iterations of the \textbf{repeat} loop (lines $2-17$) of \proc.

\begin{lemma}[Invariance]\label{lemma:invariant}
  Let $\+D$ be a derivation rooted by $\Rightarrow A$, and let $S$ be a leaf of $\+D$. It holds that: 
  \begin{enumerate}
    \item For $T\in^+ S$ where $T = \Gamma \Rightarrow \Delta$ and for any rule {\rm (r)}, 
    if $T$ satisfies the {\rm (r)}-saturation condition on some formulas $A_i$ and/or blocks $\langle T_j\rangle, [T_k]$ {\em before} the execution of (the body of) the \emph{\textbf{repeat}} loop, then $T$ is  saturated w.r.t. the {\rm (r)}-condition on the involved $A_i,\langle T_j\rangle, [T_k]$ {\em after} the execution.
    \item For $T\in^+ S$, if $T$ is blocked in $S$ {\em before} the execution of (the body of) the \emph{\textbf{repeat}}  loop, then it remains blocked {\em after} it.
  \end{enumerate}
\end{lemma}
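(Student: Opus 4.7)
The plan is to prove both parts by a case analysis on the body of the inner \textbf{repeat} loop, relying on two global features of \ccik: (a) cumulativity --- no rule ever removes a formula or a block from a sequent --- and (b) the uniform firing of the macros $\textbf{EXP}1,\ldots,\textbf{EXP}4$, which apply the relevant rules in parallel to every non-saturated $T \in^+ S$.

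For part 1, I would go through each saturation condition in Figure~\ref{fig:saturation:conditions}. Each condition is an implication whose conclusion is either an existence statement (some required witness formula or block is present) or an inclusion between components, as in $(\text{trans})$, \rulefc, \rulebc. Existence-type conclusions are preserved because the witnesses present before the iteration survive cumulativity. Inclusion-type conclusions are preserved by the uniform-update property: if, say, $\Gamma \subseteq \Sigma$ holds for the involved block $\langle\Sigma\Rightarrow\Pi\rangle$ and an R1 rule adds a formula to $\Gamma$, then its trigger already sits in $\Sigma$ by the inclusion, so either the same rule is already saturated on $\Sigma\Rightarrow\Pi$ (whence the added formula is already in $\Sigma$) or the rule fires on $\Sigma\Rightarrow\Pi \in^+ S$ during the same $\textbf{EXP}1$ call, preserving the inclusion. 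An analogous argument handles the structural inclusion appearing in \rulefc and \rulebc.

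For part 2, I would unfold the blocking definition: $T$ blocked by $T'$ in $S$ requires $T'$ R3-saturated, $T \ibin T'$, and $T \simeq T'$. R3-saturation of $T'$ is preserved by part 1 applied to every R3-level condition, and the nesting relation $T \ibin T'$ cannot be destroyed because no block is ever removed. The key point is preservation of the $\sharp$-equivalence $T \simeq T'$. Equality of antecedents $\Gamma_T = \Gamma_{T'}$ is preserved because antecedent-modifying rules are triggered by formulas that already coincide in $\Gamma_T$ and $\Gamma_{T'}$, so the rules fire symmetrically on both sides. Equality of the $\sharp$-parts of the succedents is preserved because the only rules that could introduce a discrepancy are those generating fresh implication blocks --- namely $(\supset_R)$, $(\Box_R)$, and \rulebc --- whose outputs are precisely the $\langle\cdot\rangle$-blocks filtered out by the $\sharp$ operator. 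The main obstacle I expect is making the parallel-update argument rigorous for the remaining succedent rules such as $(\Box_L)$, $(\Diamond_R)$, $(\Diamond_L)$, and \rulefc, where one must verify that the triggering formulas and modal blocks live entirely in the $\sharp$-shared parts of $T$ and $T'$ and thus fire in matching ways on both sequents.
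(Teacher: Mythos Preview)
The paper omits this proof entirely, calling it ``easy and therefore omitted''. Your plan --- cumulativity for the existence-type saturation conditions, a symmetric-firing argument for the inclusion-type ones ((trans), \rulefc, \rulebc), and the same symmetric-firing idea to preserve $\sharp$-equivalence in part~2 --- is exactly the natural line and almost certainly what the authors have in mind; the bookkeeping obstacle you flag at the end (checking that the triggers of $(\square_L)$, $(\Diamond_R)$, $(\Diamond_L)$, \rulefc\ live in the $\sharp$-shared parts of $T$ and $T'$) is the right place to put the effort but is routine once the definitions are unfolded.
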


It remains to show that \proc terminates in a finite number of steps. The proof of the following can be found in~\cite{csl-2024-fik}.

\begin{lemma}\label{lem:eq-class-finite}
  Let $A$ be a formula and $\textbf{Seq}(A)$ the set of sequents that may occur in any possible derivation rooted by $\Rightarrow A$. Denote the quotient of $\textbf{Seq}(A)$ with respect to the $\sharp$-equivalence $\simeq$ by $\textbf{Seq}(A)/_{\simeq}$. Then $\textbf{Seq}(A)/_{\simeq}$ is finite.
\end{lemma}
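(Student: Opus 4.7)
The plan is to bound sequents in $\textbf{Seq}(A)$ by two parameters—the finite pool of formulas they may use and the depth of modal-block nesting—and then count $\sharp$-normal forms by induction on that depth.

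First, I would establish two structural facts by induction on the derivation from $\Rightarrow A$: (a) every formula appearing anywhere in any sequent of $\textbf{Seq}(A)$ is a subformula of $A$, since each rule of \ccik either introduces strict subformulas of a principal formula or merely copies/relocates existing material (the rules $(\text{trans})$, \rulefc, \rulebc); write $\mathrm{Sub}(A)$ for this finite set. (b) $\textit{md}(S) \leq \textit{md}(A)$ for every $S \in \textbf{Seq}(A)$, by verifying rule-by-rule that $\textit{md}$ is never increased going bottom-up. The delicate cases are $(\square_R)$ and $(\Diamond_L)$, where the newly introduced modal block contributes $\textit{md}(A){+}1$, matching the contribution of the principal $\Box A$ or $\Diamond A$, and the interaction rules \rulefc and \rulebc, where the copied block's modal contribution is already subsumed by the original block in the conclusion.

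Next, I would observe that the $\sharp$-equivalence class of a set-sequent $S = \Gamma \Rightarrow \Delta$ is determined by the pair $(\Gamma, \Delta^\sharp)$. Since $\sharp$ erases all implication blocks recursively, $\Delta^\sharp$ can be viewed as a finitely-branching tree whose nodes are labelled by pairs (antecedent, local formulas) drawn from $2^{\mathrm{Sub}(A)} \times 2^{\mathrm{Sub}(A)}$ and whose edges represent the modal-block nestings $\mblock{\Lambda \Rightarrow \Theta^\sharp}$; by fact (b) this tree has height at most $\textit{md}(A)$.

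These trees can be counted by induction on height. Let $\mathcal{T}_n$ denote the set of such trees of height $\leq n$. For $n=0$ (no modal blocks), a tree is just a pair $(\Gamma, F)$ with $\Gamma, F \subseteq \mathrm{Sub}(A)$, so $|\mathcal{T}_0| \leq 2^{2|\mathrm{Sub}(A)|}$. For the step, a tree of height $\leq n+1$ is specified by a node label together with a subset of $\mathcal{T}_n$ (its modal-block children), giving $|\mathcal{T}_{n+1}| \leq 2^{2|\mathrm{Sub}(A)|} \cdot 2^{|\mathcal{T}_n|}$, still finite. Hence $\textbf{Seq}(A)/_{\simeq}$, which injects into $\mathcal{T}_{\textit{md}(A)}$, is finite. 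The main obstacle is the modal-depth bookkeeping in (b), particularly for \rulefc and \rulebc where one must verify that the modal contribution of the inserted (nested) copy is already accounted for by existing material in the conclusion; once this is settled, the subformula property and the counting argument are essentially routine.
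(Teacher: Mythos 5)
Your proof is correct, and since the paper itself omits the argument (deferring to the cited work on \fik), there is nothing to contrast it with beyond noting that your route—subformula property, a bottom-up invariant $\textit{md}(S)\leq \textit{md}(A)$, and a count of the $\sharp$-normal forms as set-branching trees of modal blocks of bounded height by induction on that height—is exactly the standard argument one expects here, and your identification of \rulefc, \rulebc, $(\square_R)$ and $(\Diamond_L)$ as the cases needing the modal-depth bookkeeping is the right place to put the care. The only point worth making explicit in a write-up is that $\Delta^\sharp$ discards implication blocks entirely, so the tree you count has only $\mblock{\cdot}$-edges and its height is indeed controlled by $\textit{md}$ rather than by the (unbounded) $\iblock{\cdot}$-nesting.
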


Since we check the blocking before applying $\textbf{EXP3}(\cdot)$ and we prevent R3-applications to these blocked sequents, we can prove the following (see~\cite{csl-2024-fik}). 

\begin{proposition}\label{coro:no-inf-chain}
  Let $A$ be a formula and $S$ a sequent occurring in a derivation produced by \proc. Then there is no infinite $\ibin$-chain involving infinitely-many $\mathbf{EXP3}(\cdot)$ phases in $S$.
\end{proposition}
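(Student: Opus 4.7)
My plan is a pigeonhole-style argument by contrapositive. Assume toward contradiction that $S$ contains an infinite $\ibin$-chain, i.e., a sequence of sequents $U_0, U_1, U_2, \ldots$ with $\ori{U_{n+1}}{U_n}$ for each $n$, and that infinitely many of the $U_n$ receive a $\textbf{EXP3}$-expansion during the run of \proc. Enumerate this subsequence as $(U_{i_k})_{k\in\mathbb{N}}$ in the temporal order in which the EXP3 phases are executed. By Lemma~\ref{lem:eq-class-finite} the quotient $\textbf{Seq}(A)/_\simeq$ is finite, so by the pigeonhole principle there exist indices $k<l$ with $U_{i_k}\simeq U_{i_l}$. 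Since $U_{i_l}$ is strictly $\ibin$-deeper than $U_{i_k}$ in the chain, we have $\ori{U_{i_l}}{U_{i_k}}$.

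The core of the argument is to show that $U_{i_l}$ was already blocked by $U_{i_k}$ at the moment EXP3 was applied to $U_{i_l}$. Line~12 of Algorithm~\ref{PROCEDURE} applies EXP3 only to $\ibin$-\emph{minimal} non-R3-saturated targets, so at the time $t_k$ at which EXP3 is applied to $U_{i_k}$, no $\ibin$-descendant of $U_{i_k}$ yet exists; hence $U_{i_l}$ is created only later, and the EXP3 phase on $U_{i_l}$ occurs at some $t_l>t_k$. Immediately after $t_k$, the very definition of $\textbf{EXP3}$ ensures that $U_{i_k}$ satisfies the $(\supset_R)$- and $(\square_R)$-saturation conditions for every $\square A$ and $A\supset B$ currently in $\suc{U_{i_k}}$. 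Combining Invariance Lemma~\ref{lemma:invariant}(1) (saturation on a fixed formula is preserved across iterations of the repeat loop) with the fact that EXP3 is invoked only once $S$ is global-R2-saturated---so in particular R1-saturated, and no further R1 decomposition can introduce fresh top-level $\square$- or $\supset$-formulas into $\suc{U_{i_k}}$---we conclude that $U_{i_k}$ is still R3-saturated at time $t_l$.

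All clauses of Definition~\ref{def:blocked-sequent} are therefore met at $t_l$: $U_{i_k}$ is R3-saturated, $\ori{U_{i_l}}{U_{i_k}}$, and $U_{i_k}\simeq U_{i_l}$. Hence $U_{i_l}$ is blocked by $U_{i_k}$ in $S$, which by line~12 of Algorithm~\ref{PROCEDURE} forbids the application of EXP3 to $U_{i_l}$, contradicting the hypothesis. The main obstacle in turning this sketch into a rigorous proof is the second step, namely justifying formally that no fresh top-level $\square$- or $\supset$-formula can appear in $\suc{U_{i_k}}$ between $t_k$ and $t_l$; this requires careful bookkeeping of which rules may alter the succedent of a sequent, relying on the cumulative set-based discipline of \ccik and on the strict R1-to-R4 phase ordering enforced by \proc. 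The temporal ordering $t_k<t_l$ itself follows cleanly from the $\ibin$-minimality requirement, so once the preservation issue is resolved the contradiction is immediate.
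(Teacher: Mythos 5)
Your overall strategy---pigeonhole on the finite quotient $\textbf{Seq}(A)/_{\simeq}$ of Lemma~\ref{lem:eq-class-finite} combined with the blocking mechanism of Definition~\ref{def:blocked-sequent}---is exactly the one the paper intends; note that the paper itself gives no proof of this proposition beyond the remark that blocking is checked before each $\textbf{EXP3}$ application, deferring the details to the cited reference, so the skeleton of your argument is the right one.

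Two remarks on the execution. First, the obstacle you single out (showing that $U_{i_k}$ is \emph{still} R3-saturated at time $t_l$ by tracking which rules could insert a fresh $\Box$- or $\supset$-formula into $\suc{U_{i_k}}$ between $t_k$ and $t_l$) does not actually need to be overcome by a preservation argument: line~12 of the algorithm selects for $\textbf{EXP3}$ only components that are $\in^{\langle\cdot\rangle}$-minimal among the non-R3-saturated ones, so at the very instant $U_{i_l}$ is selected, every proper $\in^{\langle\cdot\rangle}$-ancestor of $U_{i_l}$---in particular $U_{i_k}$---must be R3-saturated \emph{at that instant}, or else $U_{i_l}$ would not be minimal. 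No bookkeeping across the interval $[t_k,t_l]$ is required for this clause of the blocking condition. Second, and more seriously, your pigeonhole is applied as if the chain elements were static objects, whereas in the cumulative calculus their contents (antecedents via $(\text{trans})$ and $(\supset_L)$, succedents via \rulefc, which adds modal blocks and hence changes $\Delta^\sharp$) keep evolving; the blocking condition requires $U_{i_k}\simeq U_{i_l}$ \emph{at the moment of the check}, not at their respective $\textbf{EXP3}$ times nor in the limit. This is the genuine remaining gap, and it is closable: since every component only ever gains material drawn from a fixed finite pool determined by $A$, each component changes its $\simeq$-class only boundedly many times, so the pigeonhole can be refined (multiply the number of classes by the bound on class-changes, or pass to the stabilised classes) to produce a pair that is $\simeq$-equivalent at the relevant instant. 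Relatedly, your claim that the temporal order of $\textbf{EXP3}$ phases agrees with $\in^{\langle\cdot\rangle}$-depth along the chain is not justified (a shallower component can lose R3-saturation and be re-expanded after a deeper one); it is cleaner to pigeonhole on depth directly and then invoke minimality at the time the deeper element is selected.
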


With the following proposition, termination follows as an immediate corollary. 

\begin{proposition}\label{lem:exp4-fin}
  Let $A$ be a formula, there are only finitely-many $\mathbf{EXP4}(\cdot)$ phases which can be executed in \proc. 
\end{proposition}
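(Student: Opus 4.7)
My plan is to derive a contradiction from the assumption that \proc produces a derivation containing infinitely many $\textbf{EXP4}$ phases, reducing to Proposition~\ref{coro:no-inf-chain}. Since the underlying derivation is finitely branching (each rule has at most two premisses), K\"onig's lemma lets me focus on a single branch $\beta$ along which infinitely many $\textbf{EXP4}$ phases occur. Along $\beta$, sequents grow cumulatively (principal formulas and blocks are kept), so I may consider the limit structure of the sequent as the number of phases tends to infinity and argue that it must either harbour an infinite $\ibin$-chain or unboundedly many distinct implication blocks at some finite depth.

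The first step is to analyse what triggers a single $\textbf{EXP4}$ firing: a pair $(T, \mblock{U'})$ with $\alli{T}{S}$ and $\mblock{U'}\mbin T$, where $U'=\seq{\Sigma}{\Pi,\iblock{U}}$ and the \rulebc-saturation condition is not yet met for this block. By Lemma~\ref{lemma:invariant}, once $\textbf{EXP4}$ has fired for such a pair, that pair stays \rulebc-saturated throughout all subsequent phases. Hence infinitely many $\textbf{EXP4}$ firings along $\beta$ require an unbounded supply of fresh triggers. Fresh triggers can appear only by (i) producing a new modal block with an implication sub-block at a top-level position of some component $T$, or (ii) producing a new $\iblock{\cdot}$-sub-block inside an already-present modal block; inspecting the rules of \ccik, both (i) and (ii) ultimately depend on applications of $\textbf{EXP3}$, since the relevant $\iblock{\cdot}$-blocks are introduced by $(\supset_R)$ and $(\square_R)$, or by interaction rules acting on blocks that were themselves the product of $\textbf{EXP3}$.

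The second step combines this trigger analysis with Lemma~\ref{lem:eq-class-finite} and Proposition~\ref{coro:no-inf-chain}. At any fixed $\ibin$-depth, the blocking mechanism together with the finiteness of the quotient $\textbf{Seq}(A)/_{\simeq}$ bounds the number of pairwise non-$\simeq$ implication blocks, and hence the number of distinct fresh triggers that can appear at that depth. An infinite sequence of $\textbf{EXP4}$ firings must therefore push fresh triggers into arbitrarily deep $\ibin$-positions. Since every fresh trigger at greater depth depends on a preceding $\textbf{EXP3}$ application that created the surrounding $\iblock{\cdot}$-layer, this produces an infinite $\ibin$-chain supporting infinitely many $\textbf{EXP3}$ firings, directly contradicting Proposition~\ref{coro:no-inf-chain}.

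The main obstacle I expect is the precise bookkeeping showing that the implication block $\iblock{\seq{}{\mblock{U}}}$ newly introduced by an $\textbf{EXP4}$ firing cannot itself generate a secondary \rulebc-trigger at the same depth via the ensuing $\textbf{EXP1}$--$\textbf{EXP3}$ sweep inside it: the scheduling on line~10 of \proc applies $\textbf{EXP4}$ only when every inner $U$-component is already R4-saturated, which should force any newly produced trigger to sit strictly deeper in the $\ibin$-structure. Formalising this depth-increase lemma and exhibiting a well-founded measure (combining $\ibin$-depth with the count of $\sharp$-equivalence classes still to be realised) that strictly decreases under each relevant procedure call is where the technical work of the proof will concentrate.
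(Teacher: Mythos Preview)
Your reduction to Proposition~\ref{coro:no-inf-chain} via K\"onig's lemma and the invariance lemma is the right opening, and matches the paper's first move: isolate a branch and pass to a tail in which no further $\textbf{EXP3}$ step occurs. The gap is in what comes next.

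The claim that new \rulebc triggers ``ultimately depend on applications of $\textbf{EXP3}$'' is false in the sense you need. An application of \rulebc to $T=\seq{\Gamma}{\Delta,\mblock{\seq{\Lambda}{\Theta,\iblock{\seq{\Sigma}{\Pi}}}}}$ inserts the implication block $\iblock{\seq{}{\mblock{\seq{\Sigma}{\Pi}}}}$ as a sibling of the modal block, i.e.\ directly in $T$'s succedent. If $T$ itself sits inside a modal block (say $\mblock{T}$ occurs in some $T'$), then after the firing $\mblock{T}$ contains a fresh implication block, and this is a brand-new \rulebc trigger at $T'$, with no $\textbf{EXP3}$ involved at all. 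So after the last $\textbf{EXP3}$ step, \rulebc can still cascade: each firing may create a trigger one modal layer \emph{outward}. Your proposed measure, $\ibin$-depth, does not decrease along this cascade (the new block sits at the same or a smaller $\ibin$-depth), so the ``push to arbitrarily deep $\ibin$-positions'' step does not go through, and the contradiction with Proposition~\ref{coro:no-inf-chain} is not reached.

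The paper handles exactly this cascade by switching to the \emph{modal} depth of the component to which \rulebc is applicable (the number of $\mblock{\cdot}$-layers enclosing $T$), and arguing by induction on
\[
\mathit{deg}_{\bc}(S_j)=\max\{\text{modal depth of }T\in^+S_j\mid\text{\rulebc\ applicable to }T\}.
\]
In the base case the new block $\iblock{\seq{\Gamma'}{\Delta',\mblock{\seq{\Lambda'}{\Theta'}}}}$, after $\textbf{EXP1}$/$\textbf{EXP2}$, contains no implication sub-block (since no $\textbf{EXP3}$ is available and $\seq{\Lambda}{\Theta}$ was already R4-saturated), so it seeds no further \rulebc at the same depth. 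In the inductive case, the secondary trigger it creates lives at strictly smaller modal depth, so the induction closes. This is precisely the ``depth-increase lemma'' you anticipated, except the relevant depth is modal nesting, and it \emph{decreases} along the cascade rather than increasing.
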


\newcommand{\degree}[1]{deg_{\bc}(#1)}

\begin{proof}
  Assume for the sake of a contradiction that there are infinitely many $\textbf{EXP4}(\cdot)$ phases executed in \proc. 
  Then there must be an infinite branch $\+B = (S_i)_{i \in \omega}$ in a derivation $\+D$ produced by \proc containing infinitely many $\textbf{EXP4}(\cdot)$ phases. 
  By Proposition~\ref{coro:no-inf-chain}, $\+B$  contains only finitely many R3 steps. Thus there is $k\geq 1$ such that for all $j> k$, no R3 step is applied to $S_j$. Our goal is to prove that only finitely many applications of \rulebc~are possible to any arbitrary $S_j$ with $j>k$, whence only finitely many $\textbf{EXP4}(\cdot)$ phases are executable on $S_j$. 
  Using Proposition \ref{coro:no-inf-chain}, which establishes finiteness of $\textbf{EXP1}(\cdot)$ and $\textbf{EXP2}(\cdot)$, we can conclude that the sub-branch of $\+B$ starting with $S_j$ is finite, contradicting with the hypothesis. 
  
  To this purpose we assign to $S_j$ a {\it R4-degree}, defined as the maximum of the modal depths of  nested sequents $T\in^+ S_j$ such that \rulebc~can be applied to $T$ according to the algorithm (and saturation). Formally: 
  \vspace{-0.2cm}
  $$
  \degree{S_j}
  = max\{\mbox{modal depth of} \ T\in^+ S_j \mid \mbox{\rulebc~can be applied to }T\}.
  $$
  We prove by induction on $ \degree{S_j}$ that there can be only finitely many application of \rulebc~to $S_j$. 
  Suppose that  $  \degree{S_j}=1$, and let $S_j = G\{T\}$,  for some context  $G\{\ \}$ and $T = \seq{\Gamma}{\Delta,\mblock{\seq{\Sigma}{\Pi,\iblock{\seq{\Lambda}{\Theta}}}}} $. 
  Suppose that \rulebc~is applicable to $T_1 = \seq{\Sigma}{\Pi,\iblock{\seq{\Lambda}{\Theta}}}$. 
  Since $  \degree{S_j}=1$, $T$ is not contained in any modal block (but it might be contained in an implication block). We have that $S_{j+1} = G\{ \seq{\Gamma}{\Delta,\mblock{\seq{\Sigma}{\Pi,\iblock{\seq{\Lambda}{\Theta}}}} \iblock{\seq{}{\mblock{\seq{\Lambda}{\Theta}}}} } \}$.
  
  Notice that by hypothesis $\seq{\Lambda}{\Theta}$ is R4 saturated and $S_j$ is global-R3-saturated (whence also $T$). Further expansion by $\textbf{EXP1}(\cdot)$ and $\textbf{EXP2}(\cdot)$ to the newly created block leads to a sequent  $S_{t} = G\{ \seq{\Gamma}{\Delta,\mblock{\seq{\Sigma}{\Pi,\iblock{\seq{\Lambda}{\Theta}}}} \iblock{\seq{\Gamma'}{\Delta',\mblock{\seq{\Lambda'}{\Theta'}}}} } \}$, for $t \geq j$.
 
 Observe that $\Delta'$ cannot contain any implication block (even nested), and  $\seq{\Lambda'}{\Theta'}$ cannot contain any implication block that is not occurring in $\seq{\Lambda}{\Theta}$, which by hypothesis is R4-saturated. Thus no \rulebc is applicable to $\seq{\Gamma'}{\Delta',\mblock{\seq{\Lambda'}{\Theta'}}}$. Since $S_j$ contains only finitely-many $T_1$ as above to which \rulebc \ can be applied, we obtain that the sub-branch of $\+B$ starting with $S_j$ can contain only finitely many $\textbf{EXP4}(\cdot)$ phases. 
 
 Suppose now that $\degree{S_j} = s > 1$. Let us consider an \emph{innermost} application of \rulebc~in $S_j = G\{ \seq{\Gamma}{\Delta,\mblock{\seq{\Sigma}{\Pi,\iblock{\seq{\Lambda}{\Theta}}}}}  \}$ for some context $G\{\ \}$. 
 Further assume that the depth of 
  $T = \seq{\Gamma}{\Delta,\mblock{\seq{\Sigma}{\Pi,\iblock{\seq{\Lambda}{\Theta}}}}}$  is $s> 0$.
 We have that $S_{j+1} = G\{\seq{\Gamma}{\Delta,\mblock{\seq{\Sigma}{\Pi,\iblock{\seq{\Lambda}{\Theta}}}},\iblock{\seq{}{\mblock{\seq{\Lambda}{\Theta}}}}} \}$. 
 Sequent $S_{j+1}$ can be further expanded  by applying $\textbf{EXP1}(\cdot)$ and $\textbf{EXP2}(\cdot)$ to the new block, obtaining $S_{t} = G\{\seq{\Gamma}{\Delta,\mblock{\seq{\Sigma}{\Pi,\iblock{\seq{\Lambda}{\Theta}}}},\iblock{\seq{\Gamma'}{\Delta',\mblock{\seq{\Lambda'}{\Theta'}}}}}   \}$.
 As in the previous case, no \rulebc~applications are possible on the block $\seq{\Gamma'}{\Delta',\mblock{\seq{\Lambda'}{\Theta'}}}$. The only possible applications of \rulebc~involving the newly created block are on a modal block that contains $T$. 
 If there are no such  blocks the claim is proved, as there are only finitely many $T\in^+S_j$ to which \rulebc~is applicable. Otherwise suppose that there is a modal block containing $T$, thus we have 
 $S_{j} = G'\{\seq{\Phi}{\Psi, 
 	\mblock{
 		\seq{\Gamma}{\Delta,\mblock{\seq{\Sigma}{\Pi,\iblock{\seq{\Lambda}{\Theta}}}} } } }  \}$. 
It follows that there is a sequent $S_{t} = G'\{\seq{\Phi}{\Psi, 
 	\mblock{
 \seq{\Gamma}{\Delta,\mblock{\seq{\Sigma}{\Pi,\iblock{\seq{\Lambda}{\Theta}}}},\iblock{\seq{\Gamma'}{\Delta',\mblock{\seq{\Lambda'}{\Theta'}}}}} } }  \}$, for $t \geq j$. 
Notice that the application of \rulebc~to $\seq{\Phi}{\Psi, 
	\mblock{
		\seq{\Gamma}{\Delta,\mblock{\seq{\Sigma}{\Pi,\iblock{\seq{\Lambda}{\Theta}}}},\iblock{\seq{\Gamma'}{\Delta',\mblock{\seq{\Lambda'}{\Theta'}}}}} } }$
has now a depth $< s$. 
For  $u > t$, let $S_u$ be the sequent obtained by  finitely many  applications of \rulebc~to all maximal  $T\in^+S_t$ has $d_{bc}(S_u) < s$. We  conclude the proof by applying the induction hypothesis on $S_u$.
%
\end{proof}

\begin{theorem}[Termination]\label{termination}
  Let $A$ be a formula. Proof-search for the sequent $\Rightarrow A$ terminates and it yields either a proof of $A$ or a finite derivation with at least one global-saturated leaf.
\end{theorem}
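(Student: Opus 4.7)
The plan is to prove termination by contradiction, combining the finiteness results on EXP3 and EXP4 phases with the local finiteness of EXP1 and EXP2. Assume \proc does not halt on $\Rightarrow A$. Since every \ccik rule has at most two premises, the derivation $\+D$ built by \proc is finitely branching, so by König's lemma it contains an infinite branch $\+B=(S_i)_{i\in\omega}$. By Proposition~\ref{prop:exp123term}, each single invocation of an EXPi terminates in finitely many rule applications, so the infinite growth of $\+B$ must come from infinitely many EXPi calls along $\+B$.

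Proposition~\ref{lem:exp4-fin} bounds the number of EXP4 phases throughout the entire run of \proc, and Proposition~\ref{coro:no-inf-chain} bounds the number of EXP3 phases on $\+B$. Hence there is an index $k$ beyond which every expansion on $\+B$ belongs to an EXP1 or EXP2 phase, and no fresh implication block is created past $S_k$. The R1 rules in EXP1 only decompose or transfer existing formulas, while EXP2's rules (trans) and \rulefc copy already-present formulas and modal blocks into existing implication blocks. Consequently, the set of formulas and modal blocks that can appear anywhere in any $S_j$ for $j\geq k$ remains confined to the finite closure $\textbf{Seq}(A)$ of Lemma~\ref{lem:eq-class-finite}. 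Together with the set-based accumulation, the non-redundancy clause of \proc and the Invariance Lemma~\ref{lemma:invariant} (once a saturation condition is met on some formulas/blocks it is never un-met), this implies that only finitely many EXP1 and EXP2 applications are possible past $S_k$, contradicting the infinite length of $\+B$.

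The main obstacle is controlling the modal blocks introduced by \rulefc in EXP2 after stage $k$: the key observation is that \rulefc never creates a genuinely new modal block, but only a copy (with its local positive part) of a modal block already present at the top level of the surrounding sequent, so the multiset of modal blocks reachable in $S_j$ stabilizes. Finally, the dichotomy in the conclusion is immediate from the algorithm: the outer \textbf{Repeat} loop exits only via lines 3 or 5, yielding either a proof of $\Rightarrow A$ (all leaves axiomatic) or a finite derivation in which every non-axiomatic leaf is global-saturated, and in the latter case such a leaf exists since otherwise line 3 would have been triggered instead.
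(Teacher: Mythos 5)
Your proposal is correct and follows the same route the paper takes: the paper leaves the theorem as an ``immediate corollary'' of Proposition~\ref{lem:exp4-fin}, together with Propositions~\ref{prop:exp123term} and~\ref{coro:no-inf-chain} and Lemma~\ref{lem:eq-class-finite}, and your K\"onig's-lemma argument with the stabilization of \textbf{EXP1}/\textbf{EXP2} past the last \textbf{EXP3}/\textbf{EXP4} phase is exactly the unpacking of that corollary. The concluding dichotomy via the exit conditions of the \textbf{repeat} loop also matches the intended reading of Algorithm~\ref{PROCEDURE}.
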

\section{Completeness}
\label{sec:completeness}

The procedure \proc 
introduced in Section~\ref{sec:proof-serach} terminates in a finite number of steps, producing either a proof of the formula $A$ at the root or a finite derivation where there is one global-saturated leaf. In this section we show how to construct a countermodel for an unprovable formula 
from a global-saturated sequent, thus establishing the completeness of \cik. 

In order to extract a model from a global-saturated sequent, however, we need to additionally keep track of specific blocks occurring in set-based sequents. To this aim, we shall define an annotated version of \ccik, called \ccikann, which operates on sequents whose components are decorated by natural numbers, the \emph{annotations}, and are equipped with an additional structure to keep in memory a binary relation on such annotations. The use of annotations plays no role in our algorithm: thus, \proc can be applied to \ccikann, ignoring the annotation and the additional structure. For ease of exposition, we choose not to present \ccikann in the previous section, and introduce annotations here, as they are only needed for the countermodel construction.



\begin{definition}
  An \emph{annotated sequent} is a set-sequent $S$ where the sequent arrow of every component of $S$ is decorated by a $n \in \mathbb{N}$, its \emph{annotation}. 
  An \emph{enriched sequent} is
  a pair $ \+A; \sq S$ where $\sq S$ is an annotated sequent, and $\+A \subseteq \mathbb{N}\times\mathbb{N}$.  
\end{definition}

\begin{definition}\label{def:fresh-copy}
  Let $ \+A; \sq S$ be an enriched sequent and $T\seqin S$. A \emph{fresh annotated copy} of $T$, denoted by $T^\circlearrowleft$, is the annotated sequent obtained by replacing the annotations $\ell_1,..,\ell_n$ of each $U_1,..,U_n\seqin T$ with pairwise distinct annotations $\ell_1',..,\ell_n'$,  such that $\ell_1',..,\ell_n'$ are not annotations in $S$.
\end{definition}
 
\begin{example}
    The following is an enriched sequent: $(3, 5);{A, B} \seqarrowann{3}{C , \rseq{{D}\seqarrowann{5}{E}}}$. Moreover,  $(\aseq{D}{5}{E})^\circlearrowleft$ can be any $\aseq{D}{\ell}{E}$ where $\ell\notin\{3,5\}$, e.g., $\aseq{D}{6}{E}$. 
\end{example}


We can now define \ccikann, a proof system operating on enriched sequents. 
The rules of \ccikann are the rules of \ccik, where every set-sequent is replaced by an enriched sequent and, whenever a rule introduces a new block in its premiss, the block is annotated with a fresh annotation. So, for instance, rule ($\Box_R$) becomes the following: 
\begin{center}
\small
    \AxiomC{$\stack G\{\Gamma\seqarrowann{n}\Delta,\square A, \langle\, \seqarrowann{k} [\, \seqarrowann{h} A]\rangle\}$}
  \RightLabel{\rm ($\square_R$)}
  \UnaryInfC{$\stack G\{\Gamma, \seqarrowann{n}\Delta,\square A\}$}
  \DisplayProof
\end{center}
where $k$ and $h$ are not used as annotation in the conclusion of the rule. The additional structure $\+A$ only plays a role in rule  \rulebc, where $k$ is not used as annotation in the conclusion: 
\begin{center}
\small
     \AxiomC{
  $\+A';~  G\{\Gamma\seqarrowann{n}\Delta,[\Lambda\seqarrowann{m}\Theta,\langle\Sigma\seqarrowann{\ell}\Pi\rangle],\langle \, \seqarrowann{k}[(\aseq{\Sigma}{\ell}{\Pi})^\circlearrowleft]\rangle\}$
  }
  \RightLabel{\rulebc}
  \UnaryInfC{$\stack G\{\Gamma\seqarrowann{n}\Delta,[\Lambda\seqarrowann{m}\Theta,\langle\Sigma\seqarrowann{\ell}\Pi\rangle]\}$}
  \DisplayProof
  \normalsize
\end{center}
for
$
\+A' = \+A\cup\{(j,j')~|~\aseq{\Phi}{j}{\Psi}\in^+\aseq{\Sigma}{\ell}{\Pi}~\text{and}~\aseq{\Phi}{j'}{\Psi}\in^+(\aseq{\Sigma}{\ell}{\Pi})^\circlearrowleft\}
$. The full proof system \ccikann is given in Figure \ref{fig:cik:set:annotated}. A sequent $\Rightarrow A$ is derivable in \ccik iff $\emptyset; \seqarrowann{0} A$ is derivable in \ccikann, and all notions 
introduced in Section~\ref{sec:proof-serach} can be extended to enriched sequents. 

\begin{figure}[!t]
  \begin{adjustbox}{max width = \textwidth} 
  \small{
    \begin{tabular}{c @{}c}
    \multicolumn{2}{c}{
      \AxiomC{}
      \RightLabel{\rm ($\bot_L$)}
      \UnaryInfC{$\stack G\{\Gamma,\bot\seqarrowann{n}\Delta\}$}
      \DisplayProof
      \quad 
       \AxiomC{}
      \RightLabel{\rm ($\top_R$)}
      \UnaryInfC{$ \stack G\{\Gamma\seqarrowann{n}\top,\Delta\}$}
      \DisplayProof
      \quad
         \AxiomC{}
      \RightLabel{\rm ($\text{id}$)}
      \UnaryInfC{$ \stack G\{\Gamma,p\seqarrowann{n}\Delta,p\}$}
      \DisplayProof
    }\\[0.3cm] 
     \AxiomC{$\stack G\{A,B,A\wedge B,\Gamma\seqarrowann{n}\Delta\}$}
    \RightLabel{\rm ($\wedge_L$)}
    \UnaryInfC{$\stack G\{A\wedge B,\Gamma\seqarrowann{n}\Delta\}$}
    \DisplayProof
    & 
     \AxiomC{$\stack G\{\Gamma\seqarrowann{n} \Delta,A\vee B,A,B\}$}
    \RightLabel{\rm ($\vee_R$)}
    \UnaryInfC{$\stack G\{\Gamma\seqarrowann{n} \Delta,A\vee B\}$}
    \DisplayProof
    \\[0.5cm]
    \multicolumn{2}{c}{
    \AxiomC{$\stack G\{\Gamma\seqarrowann{n}\Delta, A\wedge B,A\}$}
    \AxiomC{$\stack G\{\Gamma\seqarrowann{n}\Delta,A\wedge B,B\}$}
    \RightLabel{\rm ($\wedge_R$)}
    \BinaryInfC{$\stack G\{\Gamma\seqarrowann{n}\Delta,A\wedge B\}$}
    \DisplayProof
    }
    \\[0.5cm]
   \multicolumn{2}{c}{
     \AxiomC{$ \stack G\{\Gamma,A, A\vee B\seqarrowann{n}\Delta\}$}
    \AxiomC{$\stack G\{\Gamma,B, A\vee B\seqarrowann{n}\Delta\}$}
    \RightLabel{\rm ($\vee_L$)}
    \BinaryInfC{$\stack G\{\Gamma,A\vee B\seqarrowann{n}\Delta\}$}
    \DisplayProof
    }
    \\[0.5cm]
    \multicolumn{2}{c}{
    \AxiomC{$ \stack G\{\Gamma,A\supset B\seqarrowann{n} A,\Delta\}$}
  \AxiomC{$\stack G\{\Gamma, A\supset B,B\seqarrowann{n}\Delta\}$}
  \RightLabel{\rm ($\supset_L$)}
  \BinaryInfC{$\stack G\{\Gamma,A\supset B\seqarrowann{n}\Delta\}$}
  \DisplayProof
  }
 \\[0.5cm]
  \AxiomC{$ \stack G\{\aseq{\Gamma}{n}{\Delta},A\supset B,B\}$}
  \LeftLabel{$A\in\Gamma$}
  \UnaryInfC{$\stack G\{\aseq{ \Gamma}{n}{\Delta},A\supset B\}$}
  \DisplayProof
  &
  \AxiomC{$\stack G\{\aseq{\Gamma}{n}{\Delta},A\supset B,\langle \aseq{A}{k}{B}\rangle\}$}
  \LeftLabel{$A\notin\Gamma, \star$}
  \UnaryInfC{$\stack G\{\aseq{\Gamma}{n}{\Delta},A\supset B\}$}
  \DisplayProof
\\[0.5cm]
  \AxiomC{$\stack G\{\Gamma,\square A\seqarrowann{n}\Delta,[\Sigma,A\seqarrowann{m} \Pi]\}$}
  \RightLabel{\rm ($\square_L$)}
  \UnaryInfC{$\stack G\{\Gamma,\square A\seqarrowann{n}\Delta,[\Sigma\seqarrowann{m} \Pi]\}$}
  \DisplayProof
    & 
      \AxiomC{$\stack G\{\Gamma\seqarrowann{n}\Delta,\square A, \langle\seqarrowann{k} [\seqarrowann{h} A]\rangle\}$}
      \LeftLabel{$\star$}
  \RightLabel{\rm ($\square_R$)}
  \UnaryInfC{$\stack G\{\Gamma\seqarrowann{n}\Delta,\square A\}$}
  \DisplayProof\\[0.5cm]
     \AxiomC{$ \stack G\{\Gamma\seqarrowann{n}\Delta,[A\seqarrowann{k}]\}$}
    \RightLabel{\rm ($\Diamond_L$)}
    \UnaryInfC{$\stack G\{\Gamma,\Diamond A\seqarrowann{n}\Delta\}$}
    \DisplayProof
 & 
  \AxiomC{$\stack G\{\Gamma\seqarrowann{n}\Delta,\Diamond A,[\Sigma\seqarrowann{m}\Pi,A]\}$}
    \RightLabel{\rm ($\Diamond_R$)}
    \UnaryInfC{$\stack G\{\Gamma\seqarrowann{n}\Delta,\Diamond A,[\Sigma\seqarrowann{m}\Pi]\}$}
    \DisplayProof\\[0.5cm]
    \multicolumn{2}{c}{
    \AxiomC{$\stack G \{\Gamma,\Gamma'\seqarrowann{n}\Delta,\langle\Gamma',\Sigma\seqarrowann{m}\Pi\rangle\}$}
  \RightLabel{\rm (\text{trans})}
  \UnaryInfC{$\stack G\{\Gamma,\Gamma'\seqarrowann{n}\Delta,\langle\Sigma\seqarrowann{m}\Pi\rangle\}$}
  \DisplayProof
    }\\[0.5cm]
    \multicolumn{2}{c}{
     \AxiomC{$\stack G\{\Gamma \seqarrowann{n}\Delta,\langle\Sigma\seqarrowann{m}\Pi,[(\aseq{\Lambda}{\ell}{\Theta^*})^\circlearrowleft]\rangle,[\Lambda\seqarrowann{\ell}\Theta]\}$}
  \RightLabel{\rulefc}
  \UnaryInfC{$\stack G\{\Gamma\seqarrowann{n}\Delta,\langle\Sigma\seqarrowann{m}\Pi\rangle,[\Lambda\seqarrowann{\ell}\Theta]\}$}
  \DisplayProof
    }\\[0.5cm]
    \multicolumn{2}{c}{
      \AxiomC{
  $\+A':~  G\{\Gamma\seqarrowann{n}\Delta,[\Lambda\seqarrowann{m}\Theta,\langle\Sigma\seqarrowann{\ell}\Pi\rangle],\langle\seqarrowann{k}[(\aseq{\Sigma}{\ell}{\Pi})^\circlearrowleft]\rangle\}$
  }
  \LeftLabel{$\star\star$}
  \RightLabel{\rulebc}
  \UnaryInfC{$\stack G\{\Gamma\seqarrowann{n}\Delta,[\Lambda\seqarrowann{m}\Theta,\langle\Sigma\seqarrowann{\ell}\Pi\rangle]\}$}
  \DisplayProof
    }\\[0.7cm]
\multicolumn{2}{c}{$\star$: $k$ and $h$ are not used as annotations in the set-sequent in the conclusion.}\\
\multicolumn{2}{c}{$\star\star$: $k$ is not used as annotations in the set-sequent in the conclusion, 
and }\\
\multicolumn{2}{c}{\qquad 
$\+A '= \+A\cup\{(j,j')~|~\aseq{\Phi}{j}{\Psi}\in^+\aseq{\Sigma}{\ell}{\Pi}~\text{and}~\aseq{\Phi}{j'}{\Psi}\in^+(\aseq{\Sigma}{\ell}{\Pi})^\circlearrowleft\}$.
}
    \end{tabular}
    }
\end{adjustbox}
  \caption{Rules of \ccikann}\label{fig:cik:set:annotated}
\end{figure}

\begin{example}
  Here follows an example of some rule applications in \ccikann:
  \begin{center}
  \small 
    \AxiomC{$(3,6), (4,7); \, \aseq{\square r}{1}{q,\mblock{\aseq{r}{2}{s},\iblock{\aseq{p}{3}{q,\mblock{\aseq{q}{4}{}}}}},\iblock{\aseq{\square r}{5}{\mblock{\aseq{r,p}{6}{q, \mblock{\aseq{q}{7}{\, }}}}}}}$}
    \RightLabel{$(\square_L)$}
    \UnaryInfC{$ (3,6), (4,7);\, \aseq{\square r}{1}{q,\mblock{\aseq{r}{2}{s},\iblock{\aseq{p}{3}{q,\mblock{\aseq{q}{4}{}}}}},\iblock{\aseq{\square r}{5}{\mblock{\aseq{p}{6}{q, \mblock{\aseq{q}{7}{\,}}}}}}}$}
    \RightLabel{\text{(trans)}}
    \UnaryInfC{$ (3,6), (4,7); \, \aseq{\square r}{1}{q,\mblock{\aseq{r}{2}{s},\iblock{\aseq{p}{3}{q,\mblock{\aseq{q}{4}{}}}}},\iblock{\aseq{}{5}{\mblock{\aseq{p}{6}{q, \mblock{\aseq{q}{7}{\,}}}}}}}$}
    \RightLabel{\rulebc}
    \UnaryInfC{$\emptyset ; \,  \aseq{\square r}{1}{q,\mblock{\aseq{r}{2}{s},\iblock{\aseq{p}{3}{q,\mblock{\aseq{q}{4}{\,}}}}}}$}
    \DisplayProof
  \end{center}
\end{example}



\begin{figure}[t]
    \begin{center}
        \begin{tikzpicture}[scale=0.8]
        \tikzstyle{node}=[circle,fill=black,inner sep=1.2pt]
        \node[label= left :{\small{$\aseq{\Gamma}{n}{\Delta}$}}] (1) at (0,0) [node]        {};
        \node[label= right :{\small{$\aseq{\Lambda}{m}{\Theta}$}}] (2) at (2,0) [node]
        {};
        \node[label=  right :{\small{$\aseq{\Sigma}{\ell}{\Pi}$}}] (3) at (2,2) [node]
        {};
        
        \node[label= left :{\small{\textcolor{blue}{$\aseq{\,}{k}{\,}$}}}] (4) at (0, 2) [node]
        {};
        \node[label= above right  :{\small{\textcolor{blue}{$\aseq{\Sigma}{\ell'}{\Pi}$}}}] (5) at (2,2.8) [node]
        {};

        \draw[->] (1) edge[below] node {\footnotesize{$\mblock{\,}$}} (2);
        \draw[->, dashed] (2) edge[right] node{\footnotesize{$\iblock{\,}$}} (3);
        \draw[->, dotted,] (1) edge[left] node{\footnotesize{$\iblock{\,}$}}  (4);
        \draw[->,dotted,] (4) edge[above] node {\footnotesize{$\mblock{\,}$}} (5);
        
        \end{tikzpicture}
    \end{center}
    \caption{Graphical representation of an application of rule \rulebc, as typeset in the text. The arrows marked with $\mblock{\,}$ represent $[\,]$-blocks, and will be interpreted as $R$-relations in the model; the  arrows marked with $\iblock{\,} $ represent $\langle\,\rangle$-blocks, and will be interpreted as $\leq$-relations in the model.  The components $\aseq{\,}{k}{\,}$ and  
    $\aseq{\Sigma}{\ell'}{\Pi}$, as well as the dotted arrows linking them, are introduced in the premiss of the rule. It holds that $(\ell, \ell')\in \+A '$.
    }
    \label{fig:bc}
\end{figure}
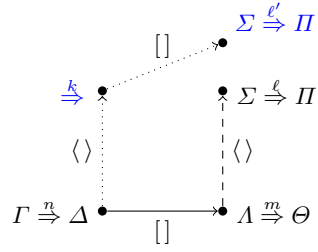

The role of annotations is exemplified in Figure~\ref{fig:bc},  depicting an application of \rulebc. Annotated sequents $\aseq{}{k}{}$ and $\aseq{\Sigma}{\ell'}{\Pi}$ are generated by the rule, and the latter is a duplication of an existing component. According to the algorithm, the original  $\aseq{\Sigma}{\ell}{\Pi}$ needs to be R4-saturated before the rule \rulebc~can be applied to it. Thus, this sequent will not change when proof search continues; however, new formulas might be introduced in the copy  $\aseq{\Sigma}{\ell'}{\Pi}$ introduced by the rule. Annotation are needed to keep track of the dependency between a block and its copies: in the countermodel, we will only keep one copy of each block, while using the other copies to determine the $R$ and $\leq$ relations of the model, ensuring that backward confluence is satisfied. 

We now turn to the countermodel construction. In the following definitions,  
$\+A; S$ is a global-saturated leaf of a derivation  generated by \proc.

\begin{definition}\label{def:annotating-seq}
  Let $T_1,T_2\in^+S$ be annotated by $\ell_1,\ell_2$ respectively. We say $T_2$ \emph{relies on} $T_1$ in 
  $\+A; S$ if $(\ell_1,\ell_2)\in \+A$. A sequent $T\seqin S$ is called \emph{null} in 
  $\+A, S$
  if there is some $T'\seqin S$ such that $T'$ relies on $T$. 
\end{definition}


Moreover, for each $T\seqin S$, it is easy to verify that there is at most one $T'\seqin S$ such that $T'$ relies on $T$.


\begin{definition}
\label{def:model:proof:search}
  The model $\+M_S=(W_S,\leq_S,R_S,V_S)$ determined by 
  $\+A; S$
  is defined as follows:
  \begin{itemize}
    \item $W_S=\{x_{\seq{\Phi}{\Psi}}\mid \aseq{\Phi}{n}{\Psi}\seqin S~\text{is~not~null~in~}S\}$.
    \item For $x_{S_1},x_{S_2}\in W_S$, let 
    $x_{S_1}\leq^0_S x_{S_2}$ iff $S_1\str S_2$ and exactly one of the following holds:
    \begin{itemize}[$-$]
      \item $S_2\in^{\langle\cdot\rangle}_0S_1$;  
      \item there is some null $S_3\seqin S$ with $S_3 \ibin_0 S_1$ and $S_2$ relies on $S_3$;
      \item there are $x_{T_1},x_{T_2}\in W_S$ with $S_1\mbin_0T_1$, $S_2\mbin_0T_2$, and $x_{T_1}\leq^0_S x_{T_2}$; 
      \item $S_1$ is blocked by $S_2$.
    \end{itemize}
  Then take $\leq_S$ to be the reflexive and transitive closure of $\leq_S^0$.
    \item For $x_{S_1},x_{S_2}\in W_S$, $R_S x_{S_1}x_{S_2}$ if $S_2\mbin_0S_1$. 
    \item For each $x_{\Phi\Rightarrow\Psi}\in W_S$, $V_S(x_{\Phi\Rightarrow\Psi})=\{p\mid p\in \Phi\}$.
  \end{itemize}
\end{definition}


In the following proofs, we shall abbreviate $\leq_S$, $\leq^0_S$, $R_S$, $W_S$, $V_S$ as $\leq$, $\leq^0$, $R$, $W$, $V$ respectively 
when $S$ is clear from the context. 
All the sequents in the following proofs are annotated, but for simplicity we omit the annotations.



\begin{lemma}\label{lem:pre-frame-condition}
Let $x_{S_0},x_{S_1},x_{S_2}\in W$, $x_{S_0}Rx_{S_1}$ and $x_{S_1}\leq^0 x_{S_2}$. Then there is $x_{S_3}\in W$ s.t. $x_{S_0}\leq^0 x_{S_3}$ and $x_{S_3}Rx_{S_2}$.
\end{lemma}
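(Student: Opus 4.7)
I plan to prove the lemma by case analysis on the four disjunctive cases in the definition of $x_{S_1} \leq_S^0 x_{S_2}$ (Definition~\ref{def:model:proof:search}). From the hypothesis $x_{S_0} R x_{S_1}$ we have $S_1 \in^{[\cdot]}_0 S_0$, so $S_0 = \Gamma_0 \Rightarrow \Delta_0$ with $[S_1] \in \Delta_0$. Since the derivation is global-saturated, I may invoke the saturation conditions for $(\text{trans})$, \rulefc{} and \rulebc{} on $S_0$ freely. The key observation is that \rulebc-saturation applied to $[S_1]$ (and to an implication block inside $S_1$) will produce a block $\langle V \Rightarrow W, [U'] \rangle \in \Delta_0$ of the right shape to serve as the desired $S_3$.

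For case (a), where $S_2 \in^{\langle\cdot\rangle}_0 S_1$ and $x_{S_2} \in W$, I first apply \rulebc-saturation to $[S_1]$ and $\langle S_2 \rangle$ inside it, obtaining some $\langle V \Rightarrow W, [U'] \rangle \in \Delta_0$ with $S_2 \subseteq^S U'$. Since $x_{S_2} \in W$, i.e.\ $S_2$ is not relied upon by any other block, $U'$ cannot be the $\circlearrowleft$-copy of $S_2$ produced by \rulebc, for otherwise $(\ell_{S_2}, \ell_{U'}) \in \+A$ would make $S_2$ null. Hence I will argue that $U'$ may be taken to equal $S_2$ itself, and set $S_3 := V \Rightarrow W, [S_2]$. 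Then $x_{S_3} R x_{S_2}$ by definition of $R$, and $x_{S_0} \leq_S^0 x_{S_3}$ via sub-case (i) of $\leq^0$: the antecedent inclusion $\Gamma_0 \subseteq V$ comes from (trans)-saturation, and the required structural inclusion between direct modal blocks follows from the hypothesis $S_1 \subseteq^S S_2$, which applies to the unique candidate $[S_1]$ of $S_0$, matched by $[S_2]$ in $S_3$.

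Case (b) is the situation directly engineered by an application of \rulebc{}: $S_1$ contains a null $S_3' \in^{\langle\cdot\rangle}_0 S_1$ and $S_2 = (S_3')^\circlearrowleft$ is its non-null copy. The rule has already deposited a block $\langle \Rightarrow [S_2] \rangle \in \Delta_0$, and $(\text{trans})$-saturation propagates $\Gamma_0$ into its antecedent, yielding $S_3 := \Gamma_0 \Rightarrow [S_2]$ with $S_3 \in^{\langle\cdot\rangle}_0 S_0$. Then $x_{S_3} R x_{S_2}$, and $x_{S_0} \leq_S^0 x_{S_3}$ via sub-case (i) of $\leq^0$, using $S_1 \subseteq^S S_3'$ (by Proposition~\ref{prop:sequent-inclusion}) and the structural equality of $\circlearrowleft$-copies to conclude $S_1 \subseteq^S S_2$. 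For case (c), where $x_{T_1} \leq^0 x_{T_2}$ propagates through inner modal blocks $T_1 \mbin_0 S_1$ and $T_2 \mbin_0 S_2$, I will recurse on the modal structure of the inner blocks and then lift the result to the outer layer via sub-case (iii) of $\leq^0$. For case (d), where $S_1$ is blocked by $S_2$ (so $S_1 \in^{\langle\cdot\rangle} S_2$ and $S_1 \simeq S_2$), I will exploit the $\sharp$-equivalence to identify an analogous saturated configuration in the vicinity of $S_2$ from which a suitable $S_3$ can be extracted.

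I expect the main obstacle to be case (a): justifying that $U'$ may be taken to equal exactly $S_2$ (rather than a proper $\subseteq^S$-superset) requires a careful analysis of the annotation relation $\+A$ and of the order of \rulebc-applications fixed by \proc{}; here Lemma~\ref{lemma:invariant} and the precise definition of null components will be essential. Case (d) is also delicate because $S_2$ sits outside the sub-tree rooted at $S_0$, so the connection between $S_0$ and $S_3$ must be routed indirectly through the blocking structure and the $\sharp$-equivalence $S_1 \simeq S_2$.
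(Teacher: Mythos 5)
Your overall decomposition (case analysis on the four clauses defining $x_{S_1}\leq^0 x_{S_2}$) matches the paper, and your case (b) is essentially the paper's argument for that clause. But three of your four cases go wrong, and the common cause is that you miss the paper's key observation: under the hypotheses $x_{S_0}Rx_{S_1}$ (so $S_1\mbin_0 S_0$) and $x_{S_2}\in W$, clauses (a) and (d) are \emph{impossible}, and clause (c) is resolved in one line. In case (a), the \rulebc-saturation of $S_0$ applied to $[S_1]=[\Sigma\Rightarrow\Pi',\langle S_2\rangle]$ is only ever witnessed, in the annotated calculus, by a block containing a fresh copy $(S_2)^\circlearrowleft$ together with the reliance pairs recorded in $\+A$; this makes $S_2$ null, hence $x_{S_2}\notin W$ -- contradicting the hypothesis. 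So the case is vacuous. Your step ``$U'$ cannot be the $\circlearrowleft$-copy of $S_2$, hence $U'$ may be taken to equal $S_2$ itself'' asserts the negation of what saturation guarantees: the annotated \rulebc rule \emph{always} produces a fresh-annotated copy and \emph{always} adds $(\ell_{S_2},\ell_{U'})$ to $\+A$, so there is no way to satisfy the saturation condition with $U'=S_2$. The correct conclusion from your own modus tollens is that the case cannot arise, not that a different witness exists.

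Case (c) is also mishandled. In the definition of $\leq^0$, clause (iii) reads $S_1\mbin_0 T_1$ and $S_2\mbin_0 T_2$, i.e.\ $[S_1]\in T_1$: $T_1$ and $T_2$ are the \emph{modal parents} of $S_1$ and $S_2$, not inner blocks as you write. Since each component has a unique modal parent and $S_1\mbin_0 S_0$, one gets $T_1=S_0$ immediately, so $x_{S_0}=x_{T_1}\leq^0 x_{T_2}$ and $S_3:=T_2$ works; no recursion on the modal structure is needed (and the recursion you sketch has no evident base case or termination measure). Finally, in case (d), blocking gives $S_1\ibin S_2$, which is directly incompatible with $S_1\mbin_0 S_0$ because each component occupies a unique position in the nested tree and is reached by a unique block type; the case is again vacuous, and the indirect ``routing through the $\sharp$-equivalence'' you propose is not needed and would not obviously produce a world of $W$ related to $S_0$ by $\leq^0$. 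In short: the proposal proves only one of the four cases; the other three require recognizing contradictions (twice) and the uniqueness of the modal parent (once), none of which appear in your plan.
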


\begin{proof}
	Let $S_0 = \seq{\Gamma}{\Delta}$ and $S_1 = \seq{\Sigma}{\Pi}$. 
  From $x_{S_0}Rx_{S_1}$, we have $S_1\mbin_0 S_0$, i.e.,  $\mblock{\seq{\Sigma}{\Pi}}\in\Delta$.
  Since $x_{S_1}\leq^0 x_{S_2}$, by definition $S_1\str S_2$ and exactly one of the following holds:
  \begin{enumerate}[$(i)$]
    \item \label{it:i}$S_2\in^{\langle\cdot\rangle}_0S_1$;
    \item \label{it:ii}There is some null $S_4$ s.t. $S_4 \ibin_0 S_1$ and $S_2$ relies on $S_4$;
    \item \label{it:iii}There are $x_{T_1},x_{T_2}\in W_S$ s.t. $S_1\mbin_0T_1$, $S_2\mbin_0T_2$, $x_{T_1}\leq^0_S x_{T_2}$;
    \item \label{it:iv} $S_1$ is blocked by $S_2$.
  \end{enumerate}
  Let us consider the various cases. 
  If \ref{it:i} holds, we have $\Pi = \Pi',\iblock{S_2}$.  
  Since $S_0$ is saturated with respect to \rulebc, by the saturation condition for \rulebc there is a sequent $\seq{\Phi}{\Psi,[S_2']}\ibin_0 S_0$ s.t. $S_2'$ relies on $S_2$. This makes $S_2$ a null sequent in $S$ and hence 
  $x_{S_2}\notin W$, a contradiction. Thus, this case is ruled out. 
  If \ref{it:ii} holds, then  
  $\Pi =\Pi',\iblock{S_4}$. 
    Since $S_2$ is annotated by $S_4$, 
    it means $S_2$ is the sequent produced by applying \rulebc~to $\mblock{\seq{\Sigma}{\Pi',\iblock{S_4}}}$. 
    By the saturation condition for \rulebc, there is $S_3 =
    \seq{\Phi}{\Psi,\mblock{S_2}}
    \ibin_0 \Delta$. 
    It holds $x_{S_0}\leq^0 x_{S_3}$ and $x_{S_3}Rx_{S_2}$. 
    If \ref{it:iii} holds then, since by definition the $R$-predecessor of each world is unique, $S_0$ is exactly $T_1$ and $x_{T_2}$ is $x_{S_3}$ as required. 
    In case \ref{it:iv} holds, by definition of blocked sequent, we have $S_1\ibin S_2$. However, from $x_{S_0}Rx_{S_1}$ we have $S_1\mbin S_0$. This is a  contradiction, and this case is also ruled out. 
\end{proof}

\begin{proposition}\label{bc-fc-property}
  $\+M_S$ satisfies both $(\bc)$ and $(\fc)$ conditions.
\end{proposition}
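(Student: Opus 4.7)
The plan is to prove each frame condition by first establishing a ``one-step'' version relating $R$ and $\leq^0$, and then lifting it to $\leq$ (the reflexive-transitive closure of $\leq^0$) by induction on the length of the witnessing $\leq^0$-chain.

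For $(\bc)$, Lemma~\ref{lem:pre-frame-condition} already provides the one-step version. I would induct on the number $n$ of $\leq^0$-steps witnessing $x_{S_1} \leq x_{S_2}$. The base case $n=0$ follows by reflexivity (take $x_{S_3} = x_{S_0}$). For $n = k+1$, split the chain as $x_{S_1} \leq^0 x_{S_1'} \leq x_{S_2}$, apply Lemma~\ref{lem:pre-frame-condition} to $x_{S_0} R x_{S_1}$ and $x_{S_1} \leq^0 x_{S_1'}$ to obtain $x_{S_0'}$ with $x_{S_0} \leq^0 x_{S_0'}$ and $x_{S_0'} R x_{S_1'}$, then apply the induction hypothesis to the $k$-step chain from $x_{S_1'}$ to $x_{S_2}$, and conclude using transitivity of $\leq$.

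For $(\fc)$, I would first prove a dual one-step statement: if $x_{S_0} \leq^0 x_{S_1}$ and $x_{S_0} R x_{S_2}$, then there exists $x_{S_3} \in W_S$ with $x_{S_1} R x_{S_3}$ and $x_{S_2} \leq^0 x_{S_3}$. The proof proceeds by case analysis on the four disjuncts in Definition~\ref{def:model:proof:search} that explain $x_{S_0} \leq^0 x_{S_1}$. The central case is $\iblock{S_1} \in \suc{S_0}$: since both $\iblock{S_1}$ and $\mblock{S_2}$ occur in $\suc{S_0}$, the saturation condition for \rulefc in Figure~\ref{fig:saturation:conditions} yields some $\mblock{S_3} \in \suc{S_1}$ with $S_2 \str S_3$. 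Then $x_{S_1} R x_{S_3}$ holds by definition of $R$, and $x_{S_2} \leq^0 x_{S_3}$ follows from the third disjunct of the $\leq^0$-definition with $T_1 = S_0$, $T_2 = S_1$. The remaining three cases (\rulebc-generated copy through a null sequent, iteration through nested modal blocks, and blocking) are handled analogously, using the corresponding saturation conditions and, for the nested case, a sub-induction on the inner $\leq^0$-step. Once the one-step statement is in hand, $(\fc)$ follows by the same chain induction as $(\bc)$.

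The main obstacle will be the case analysis for the pre-$(\fc)$ statement. The delicate points are verifying that the candidate witness $x_{S_3}$ actually belongs to $W_S$ (i.e.\ is not null itself) and that the ``nested'' and ``blocking'' cases produce a genuine $\leq^0$-edge rather than one depending on the full $\leq$. One must carefully trace how \rulebc applications, which introduce fresh annotated copies and populate the relation $\+A$, interact with the \rulefc saturation, so that the witnesses demanded by the saturation conditions are still available at the needed location in the model.
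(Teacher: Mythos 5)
Your treatment of $(\bc)$ coincides with the paper's: Lemma~\ref{lem:pre-frame-condition} supplies the one-step case, and the general case is an induction on the length of the $\leq^0$-chain (you peel off the first step where the paper peels off the last; this is immaterial). For $(\fc)$, however, the paper takes a much shorter route than you propose: it observes that $x_{S_0}\leq^0 x_{S_1}$ \emph{already} requires $S_0\str S_1$ as a conjunct common to all four disjuncts of Definition~\ref{def:model:proof:search}, and clause (ii) of structural inclusion (Definition~\ref{def:saturation-inclusion}) then directly hands you, for each $\mblock{S_2}\in\suc{S_0}$, some $\mblock{S_3}\in\suc{S_1}$ with $S_2\str S_3$; the third disjunct of the $\leq^0$-definition gives $x_{S_2}\leq^0 x_{S_3}$. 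So no case analysis on how the $\leq^0$-edge arose is needed: the work you delegate to the \rulefc saturation condition in your ``central case'' is exactly what $\str$ packages up, and your remaining three cases --- where the saturation conditions do not apply so directly, e.g.\ when $S_1$ is a \rulebc-generated copy living elsewhere in the sequent tree --- would each collapse to the same $\str$ argument anyway. Your plan is workable but risks getting bogged down precisely in those cases you label ``analogous''; the uniform argument via $\str$ is what makes them trivial. One point in your favour: you flag that the witness $S_3$ must be non-null for $x_{S_3}$ to belong to $W_S$. The paper's one-line dismissal of $(\fc)$ is silent on this, so it is a legitimate gap to check in either version.
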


\begin{proof}
  Case ($\fc$) immediately follows from the definition of $\str$. 
  To show that ($\bc$) is satisfied, take arbitrary $x_{S},x_{T},x_{T'}\in W$ with $x_{S}Rx_{T}$ and $x_{T}\leq x_{T'}$. We need to show that there is a $x_{S'}$ with  $x_{S}\leq x_{S'}$ and $x_{S'}Rx_{T'}$. We reason by induction on the number $n$ of worlds in the $\leq$-chain between $x_{T}$ and $x_{T'}$. If $n= 0$,  $x_{T}=x_{T'} $, and we take $x_{S'} = x_{T}$. The case $n=1$ is treated in Lemma~\ref{lem:pre-frame-condition}. Suppose $n > 1$. Then,   there is some $x_{T_1}\in W$ s.t.  $x_{T_1}\neq x_{T'}\neq x_{T}$, $x_{T}\leq x_{T_1}$ and $x_{T_1}\leq^0 x_{T'}$.  By IH, there is $x_{S_1}\in W$ such that $x_{S}\leq x_{S_1}$ and $Rx_{S_1}x_{T_1}$.  By Lemma~\ref{coro:no-inf-chain} there is $x_{S_2}\in W$ s.t. $x_{S_1}\leq x_{S_2}$ and $x_{S_2}Rx_{T'}$. By transitivity of $\leq$,  $x_{S}\leq x_{S_2}$. We take $x_{S'}= x_{S_2}$ and we are done. 
%
\end{proof}

\begin{lemma}[Truth Lemma]\label{lem:truth-lemma}
  Let $S$ be a global-saturated enriched sequent and $\+M_{S}$ defined as above. 
  The following hold: 
  \begin{enumerate}[$(a)$]
      \item \label{it:truth_lemma:1} 
      If $A\in\Phi$, then $\+M_{S},x_{\Phi\Rightarrow\Psi}\Vdash A$;
      \item \label{it:truth_lemma:2} 
      If $A\in\Psi$, then $\+M_{S},x_{\Phi\Rightarrow\Psi}\nVdash A$.
  \end{enumerate}
\end{lemma}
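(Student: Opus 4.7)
The plan is to prove the Truth Lemma by structural induction on the formula $A$. The base cases are straightforward: for $A=p$ atomic, part~\ref{it:truth_lemma:1} holds because $V_S(x_{\Phi\Rightarrow\Psi})=\{p\mid p\in\Phi\}$ directly, while for part~\ref{it:truth_lemma:2} the $(\mathrm{id})$-saturation condition ensures $p\notin\Phi$, hence $p\notin V_S(x_{\Phi\Rightarrow\Psi})$. The cases $A=\bot$ and $A=\top$ are dispatched by the $(\bot_L)$ and $(\top_R)$ saturation conditions: the former makes part~\ref{it:truth_lemma:1} vacuous, the latter makes part~\ref{it:truth_lemma:2} vacuous, and the remaining parts are immediate from the forcing clauses.

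The propositional cases $A=B\wedge C$ and $A=B\vee C$ proceed routinely using the corresponding saturation conditions and the inductive hypothesis. For $A=B\supset C$, part~\ref{it:truth_lemma:2} invokes $(\supset_R)$-saturation: either $B\in\Phi$ and $C\in\Psi$, in which case $x$ itself witnesses the non-forcing by IH; or there exists $\langle\Sigma\Rightarrow\Pi\rangle\in\Psi$ with $B\in\Sigma$ and $C\in\Pi$, in which case Proposition~\ref{prop:sequent-inclusion} yields $\Phi\Rightarrow\Psi\subseteq^S\Sigma\Rightarrow\Pi$ so that the first clause of $\leq^0$ gives $x_{\Phi\Rightarrow\Psi}\leq x_{\Sigma\Rightarrow\Pi}$, and IH concludes. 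Part~\ref{it:truth_lemma:1} requires that for every $y=x_{\Phi'\Rightarrow\Psi'}\geq x$ with $y\Vdash B$ we have $y\Vdash C$; the crucial observation is that $\leq$ preserves antecedent inclusion (see below), so $B\supset C\in\Phi'$ and $(\supset_L)$-saturation on $\Phi'\Rightarrow\Psi'$ gives $B\in\Psi'$ (excluded by IH) or $C\in\Phi'$ (giving $y\Vdash C$ by IH). The modal cases are analogous: for $\square A\in\Phi$, given $y\geq x$ and $z=x_{\Lambda\Rightarrow\Theta}$ with $yRz$, preservation of antecedents along $\leq$ gives $\square A\in\Phi'$, and $(\square_L)$-saturation on $\Phi'\Rightarrow\Psi'$ yields $A\in\Lambda$; for $\square A\in\Psi$, $(\square_R)$-saturation supplies either a direct $R$-witness $[\Lambda\Rightarrow\Theta]\in\Psi$ with $A\in\Theta$, or a $\leq$-then-$R$ witness via $\langle\Sigma\Rightarrow[\Lambda\Rightarrow\Theta],\Pi\rangle\in\Psi$. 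The $\Diamond$-cases follow the same pattern using $(\Diamond_L)$ and $(\Diamond_R)$.

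The main obstacle is proving the auxiliary claim that $\leq$ preserves antecedent inclusion, which is used in parts~\ref{it:truth_lemma:1} for $\supset$ and $\square$. This amounts to verifying, for each of the four clauses defining $\leq^0$, that $x_{S_1}\leq^0 x_{S_2}$ implies $\mathsf{Ant}(S_1)\subseteq\mathsf{Ant}(S_2)$: the first clause guarantees $S_1\subseteq^S S_2$ by definition; the second and third clauses likewise carry a $\subseteq^S$-premise; and the fourth clause (blocking) uses $\sharp$-equivalence, which by Definition~\ref{def:block-eq} forces $\mathsf{Ant}(S_1)=\mathsf{Ant}(S_2)$. A secondary subtlety is verifying that the witnesses singled out by the saturation conditions correspond to actual worlds of $\+M_S$, i.e.\ to non-null sequents: this has to be checked for each saturation clause by noting that \rulebc\ only renders null those components lying \emph{inside} a duplicated $\iblock{\cdot}$-content, whereas the witness components produced by $(\supset_R)$, $(\square_R)$, $(\Diamond_L)$ and $(\Diamond_R)$ sit at positions in $\Psi$ of a non-null sequent that are not inside such a duplicated content. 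Combined with Proposition~\ref{bc-fc-property} for the frame conditions, the Truth Lemma then yields a genuine countermodel of $A$ at the world corresponding to the root.
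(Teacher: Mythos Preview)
Your overall plan matches the paper's: induction on the structure of $A$, using the saturation conditions and the definition of $\+M_S$. The auxiliary claim that $\leq$ preserves antecedent inclusion is indeed immediate, since $S_1\subseteq^S S_2$ is an explicit precondition of $\leq^0$ in Definition~\ref{def:model:proof:search}. However, there are two genuine gaps in your treatment of part~\ref{it:truth_lemma:2} for $A=B\supset C$ (and analogously for $A=\Box B$).

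\textbf{Blocked sequents.} The world $x_{\Phi\Rightarrow\Psi}$ may correspond to a \emph{blocked} sequent: blocked sequents are not null, so they belong to $W_S$. But $(\supset_R)$ and $(\Box_R)$ are R3 rules, and by Definition~\ref{3-saturation} and the definition of global saturation, blocked sequents are exempt from R3-saturation. Hence your direct appeal to the $(\supset_R)$-saturation condition fails when $\Phi\Rightarrow\Psi$ is blocked. The paper handles this by passing to the blocking sequent $\Sigma\Rightarrow\Pi$: since $\Sigma\Rightarrow\Pi\simeq\Phi\Rightarrow\Psi$, one has $B\supset C\in\Pi$; since $\Sigma\Rightarrow\Pi$ is R3-saturated one can run the argument there; and the fourth clause of $\leq^0$ together with Proposition~\ref{prop:sequent-inclusion} gives $x\leq x_{\Sigma\Rightarrow\Pi}$ and $x_{\Sigma\Rightarrow\Pi}\leq x$, so a witness of non-forcing found above $x_{\Sigma\Rightarrow\Pi}$ is also above $x$. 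You do not mention this case distinction at all.

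\textbf{Null witnesses.} Your claim that ``the witness components produced by $(\supset_R)$ \ldots\ sit at positions \ldots\ that are not inside such a duplicated content'' is not correct. Suppose $\Phi\Rightarrow\Psi$ is (non-null and) the content of a modal block, and $\langle\Lambda\Rightarrow\Theta\rangle\in\Psi$ is the $(\supset_R)$-witness. Then \rulebc{} applied to that modal block creates a fresh copy of $\Lambda\Rightarrow\Theta$ and, by the annotation mechanism, adds the pair making the \emph{original} $\Lambda\Rightarrow\Theta$ null. So the witness can very well be null. In that situation the first clause of $\leq^0$ is unavailable; instead one must use the second clause, passing to the sequent $\Lambda_1\Rightarrow\Theta_1$ that relies on $\Lambda\Rightarrow\Theta$, obtaining $x\leq x_{\Lambda_1\Rightarrow\Theta_1}$ with $B\in\Lambda_1$ and $C\in\Theta_1$ (since the copy produced by \rulebc{} initially agrees with the original, and the saturation of the copy can only add formulas). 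The paper carries out exactly this detour.
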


\begin{proof}
  By induction on the complexity of $A$. We only show the case of $A =B\supset C$ (some other cases can be found in~\cite{csl-2024-fik}). 
  We abbreviate $x_{\Phi\Rightarrow\Psi}$ as $x$. 

    In case \ref{it:truth_lemma:1}, $B\supset C\in\Phi$. Then, $\Phi\Rightarrow\Psi$ satisfies the saturation condition associated with $(\supset_L)$ for $A$ regardless of whether the sequent is blocked or not. 
    Assume for the sake of contradiction that $x\nVdash B\supset C$. Then there exists a world $x_{\Omega\Rightarrow\Xi}$ in $W_S$ with $x\leq x_{\Omega\Rightarrow\Xi}$ s.t. $x_{\Omega\Rightarrow\Xi}\Vdash B$ and $x_{\Omega\Rightarrow\Xi}\nVdash C$. 
    Since $x\leq x_{\Omega\Rightarrow\Xi}$, by definition, we have $\Phi\Rightarrow\Psi\subseteq^S \Omega\Rightarrow\Xi$, which implies $B\supset C\in\Omega$. 
    Moreover, since $\Omega\Rightarrow\Xi$ satisfies the saturation condition associated with $(\supset_L)$ for $B\supset C$, we 
    have $B\in\Xi$ or $C\in\Omega$.
    By IH, either $x_{\Omega\Rightarrow\Xi}\nVdash B$ or $x_{\Omega\Rightarrow\Xi}\Vdash C$. Both cases lead to a contradiction. 
  
    In case \ref{it:truth_lemma:2}, $B\supset C\in\Psi$. 
    Aiming at a contradiction, suppose that $x\Vdash B\supset C$. Then for every $x'$ with $x\leq x'$, $x'\Vdash B$ implies $x'\Vdash C$. 
    We distinguish cases according to whether $\Phi\Rightarrow\Psi$ is blocked and also whether a sequent involved in the $(\supset_R)$ saturation is null or not. 
    %
    
    If $\Phi\Rightarrow\Psi$ is not blocked, then the sequent satisfies the saturation condition associated with $(\supset_R)$ for $B\supset C$, and we have two possible cases.  
    $(i)$ $B\in \Phi$ and $C\in\Psi$. By IH, it follows $x\Vdash B$ and $x\nVdash C$. By reflexivity of $\leq$, we have $x\leq x$, then $x\Vdash B$ entails that $x\Vdash C$ as well, a contradiction.
     $(ii)$ 
      There is a block $\langle\Lambda\Rightarrow\Theta\rangle\in\Psi$ such that $B\in\Lambda$ and $C\in\Theta$. Since $\seq{\Lambda}{\Theta}$ it is saturated with (trans) and (inter) and by Proposition \ref{prop:sequent-inclusion}, we have $\Phi\Rightarrow\Psi\subseteq^S\Lambda\Rightarrow\Theta$. 
      In this case we need to distinguish whether $\Lambda\Rightarrow\Theta$ is null or not. 
      If $\Lambda\Rightarrow\Theta$ is not null, by definition, we have $x\leq x_{\Lambda\Rightarrow\Theta}$. Since $B\in\Lambda$, by IH we have $x_{\Lambda\Rightarrow\Theta}\Vdash B$, and hence $x_{\Lambda\Rightarrow\Theta}\Vdash C$. Moreover, as $C\in\Theta$, by IH it follows that $x_{\Lambda\Rightarrow\Theta}\nVdash C$, a contradiction. 
      Otherwise, if $\seq{\Lambda}{\Theta}$ is null we have $x_{\Lambda\Rightarrow\Theta}\notin W$ and there is $x_{\Lambda_1\Rightarrow\Theta_1}\in W$ s.t. $\Lambda_1\Rightarrow\Theta_1$ relies on $\Lambda\Rightarrow\Theta$. By 
      definition, 
      $x\leq x_{\seq{\Lambda_1}{\Theta_1}}$.  
      We obtain a contradiction by applying the same argument to $\Lambda_1\Rightarrow\Theta_1$. 
      
    Otherwise, if $\Phi\Rightarrow\Psi$ is blocked, then the sequent does not satisfy the saturation condition associated with $(\supset_R)$ for $B\supset C$. 
    By definition, there is an unblocked sequent $\Sigma\Rightarrow\Pi\in^+S$ s.t. $\Phi\Rightarrow\Psi$ is blocked by it. 
    Then we have $\Sigma\Rightarrow\Pi\simeq \Phi\Rightarrow\Psi$, which implies $\Pi^\sharp=\Psi^\sharp$, so $B\supset C\in\Pi$. 
    By definition 
    we have both $x\leq x_{\Sigma\Rightarrow\Pi}$ and $x_{\Sigma\Rightarrow\Pi}\leq x$. 
    Since $\Sigma\Rightarrow\Pi$ is R3-saturated, it already satisfies the saturation condition associated with $(\supset_R)$ for $B\supset C$. 
    By applying the same argument in the previous case  to $\Sigma\Rightarrow\Pi$ we have that either $(i)$ $x_{\Sigma\Rightarrow\Pi}\Vdash B$ and $x_{\Sigma\Rightarrow\Pi}\nVdash C$, in which case we conclude similarly as before, or $(ii)$ there is a block $\langle\Lambda\Rightarrow\Theta\rangle\in\Pi$ s.t. $B\in\Lambda$ and $B\in\Theta$. 
      If $\Lambda\Rightarrow\Theta$ is null, it means the block $[\seq{\Sigma}{\Pi',\iblock{\seq{\Lambda}{\Theta}}}]$ activates an \rulebc~application. 
      Since $\seq{\Phi}{\Psi}$ is blocked by $\seq{\Sigma}{\Pi}$, we see 
      $\seq{\Phi}{\Psi}\ibin\seq{\Sigma}{\Pi}$ which must also activate an \rulebc~application and hence make $\seq{\Phi}{\Psi}$ a null sequent. Since $x\in W$, we see this leads to a contradiction. 
      Thus $\Lambda\Rightarrow\Theta$ is not null and we have $x_{\Lambda\Rightarrow\Theta}\Vdash B$ and $x_{\Lambda\Rightarrow\Theta}\nVdash C$ as well as $x_{\Sigma\Rightarrow\Pi}\leq x_{\Lambda\Rightarrow\Theta}$.
     Since  $x\leq x_{\Sigma\Rightarrow\Pi}$ and $x_{\Sigma\Rightarrow\Pi}\leq x$, we have $x\leq x_{\Lambda\Rightarrow\Theta}$.
     Since $x_{\seq{\Lambda}{\Theta}}\Vdash B$, which implies $x_{\seq{\Lambda}{\Theta}}\Vdash C$ as well. Recall $x_{\Lambda\Rightarrow\Theta}\nVdash C$, a contradiction.
\end{proof}

As a result, we obtain the completeness of \textbf{C}\cik.

\begin{theorem}[Completeness]
  If $A$ is valid in {\rm \ik},  it is provable in {\rm \textbf{C}\cik}.
\end{theorem}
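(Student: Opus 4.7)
The proof goes by contraposition, combining all the machinery of the previous sections: termination of proof search, the model construction, the frame-condition preservation, and the Truth Lemma.

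The plan is to argue as follows. Suppose $A$ is not provable in $\mathbf{CC}_{\mathbf{IK}}$. By Theorem~\ref{termination} (Termination), $\mathsf{ProofSearch}(A)$ produces a finite derivation $\+D$ rooted at $\emptyset; \seqarrowann{0} A$ with at least one global-saturated leaf $\+A; S$. I would then apply Definition~\ref{def:model:proof:search} to build the bi-relational model $\+M_S = (W_S, \leq_S, R_S, V_S)$ from this leaf.

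The first step is to check that $\+M_S$ is indeed a bi-relational model. The relation $\leq_S$ is a pre-order by construction (we take the reflexive and transitive closure of $\leq_S^0$). The hereditary condition follows from Lemma~\ref{lem:truth-lemma}\ref{it:truth_lemma:1}: if $p \in V_S(x_{\Phi \Rightarrow \Psi})$ and $x_{\Phi \Rightarrow \Psi} \leq_S x_{\Omega \Rightarrow \Xi}$ then $\Phi \Rightarrow \Psi \subseteq^S \Omega \Rightarrow \Xi$, whence $p \in \Omega$ and $p \in V_S(x_{\Omega \Rightarrow \Xi})$. The two frame conditions $(\fc)$ and $(\bc)$ are supplied by Proposition~\ref{bc-fc-property}.

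Next I would identify the designated world. Let $S_0$ be the outermost component of $S$; by cumulativity of $\mathbf{CC}_{\mathbf{IK}}$ and the fact that rules only add material, $A \in \mathsf{Suc}(S_0)$. Crucially, the root component is \emph{not} null in $\+A; S$: rule \rulebc only duplicates blocks occurring inside a $\mblock{\cdot}$-component, so the top-level sequent can never be the target of a fresh-copy annotation, hence no sequent in $S$ relies on $S_0$. Therefore $x_{S_0} \in W_S$. Invoking Lemma~\ref{lem:truth-lemma}\ref{it:truth_lemma:2} on $A \in \mathsf{Suc}(S_0)$ gives $\+M_S, x_{S_0} \not\Vdash A$, so $\+M_S$ is a countermodel witnessing that $A$ is not valid. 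Contrapositively, if $A$ is valid then $A$ is provable in $\mathbf{CC}_{\mathbf{IK}}$.

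The only real subtlety, and the step I would spell out carefully, is the argument that the root component of any leaf produced by $\mathsf{ProofSearch}(A)$ is never null, so that $x_{S_0} \in W_S$ is guaranteed; everything else is either assembling Theorem~\ref{termination}, Definition~\ref{def:model:proof:search}, Proposition~\ref{bc-fc-property} and Lemma~\ref{lem:truth-lemma}, or a one-line verification of pre-order properties and heredity.
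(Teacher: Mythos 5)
Your proof is correct and follows exactly the route the paper intends: the paper gives no explicit argument for this theorem, simply asserting that completeness follows from Theorem~\ref{termination}, Definition~\ref{def:model:proof:search}, Proposition~\ref{bc-fc-property} and Lemma~\ref{lem:truth-lemma}, which is precisely the assembly you carry out. Your additional observation that the root component can never be null (because \rulebc only creates reliance pairs for sequents nested inside a modal block, so $x_{S_0}\in W_S$ is guaranteed) is a worthwhile detail that the paper leaves implicit.
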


\begin{example}\label{ex:countermodel}
    We run $\mathsf{ProofSearch}(\neg\Diamond\neg p\supset\square p)$, obtaining the  global-saturated enriched sequent $S$ = $(4,6); \, \seqarrowann{0} \neg\Diamond\neg p\supset\square p, \langle T \rangle$, where $T$ is:\\[-0.2cm] 
   
    \begin{adjustbox}{max width= \textwidth}
    $
    \neg\Diamond\neg p  \overset{1}{\Rightarrow}  \Diamond\neg p,\square p, 
    \iblock{\aseq{\neg\Diamond\neg p}{2}{\Diamond\neg p,\mblock{\aseq{}{3}{p,\neg p,\iblock{\aseq{p}{4}{\bot}}}},\iblock{\aseq{\neg\Diamond\neg p}{5}{\Diamond\neg p,\mblock{\aseq{p}{6}{\neg p,\bot}}}}}}.
    $
    \end{adjustbox}\\
    \normalsize
    
    We denote by $S_i$ the sequent annotated by $i$: e.g., $S_4$ is $\aseq{p}{4}{\bot}$ and $S_6$ is $\aseq{p}{6}{\neg p,\bot}$. It holds that $S_6$ relies on $S_4$, thus $S_4$ is a null sequent. The model $\+M_S$ is defined as follows: 
    $W=\{x_{S_0},x_{S_1},x_{S_2},x_{S_3},x_{S_5},x_{S_6}\}$; $x_{S_0}\leq x_{S_1}\leq x_{S_2}\leq x_{S_5},x_{S_3}\leq x_{S_6}$ (plus reflexive closure); $Rx_{S_2}x_{S_3}, Rx_{S_5}x_{S_6}$; and $V(x_{S_0})=V(x_{S_1})=V(x_{S_2})=V(x_{S_3})=V(x_{S_5})=\emptyset$ and $V(x_{S_6})=\{p\}$. 
    It is easy to verify that $x_{S_0}\Vdash\neg\Diamond\neg p$ and $x_{S_0}\nVdash\square p$, whence $x_{S_0}\nVdash\neg\Diamond\neg p\supset\square p$.  
    An illustration of the countermodel can be found in Figure \ref{ex:countermodel-pic}.
\end{example}

\begin{figure}[!t]
      
      
      
      \begin{center}
      \begin{tikzpicture}[scale=0.8]
      \tikzstyle{node}=[circle,fill=black,inner sep=1.2pt]
      \tikzstyle{nodegray}=[circle,fill=lightgray,inner sep=1.2pt]

    \node[node,label= left :{$x_{S_0}$}] (0) at (0,-4) []        {};
      
      \node[node, label= left :{$x_{S_1}$}] (1) at (0,-2) []        {};
      
      \node[node,label= left :{$x_{S_2}$}] (2) at (0,0) []        {};
      \node[node,label= below :{$x_{S_3}$}] (3) at (2,0) []
      {};
      \node[nodegray,label= left:{\textcolor{lightgray}{$x_{S_4}$}}] (4) at (2,2) [] {};
      
      \node[node,label= left :{\textcolor{black}{$x_{S_5}$}}] (5) at (0, 2) []
      {};
      \node[node,label= above  :{\textcolor{black}{$x_{S_6}\Vdash p$}}] (6) at (2,2.8) []
      {};
      
      \draw[->] (2) edge node [below] {\textcolor{black}{$R$}} (3);
      \draw[->, dashed,lightgray] (3) -- (4);
      \draw[->, dashed] (2) -- (5);
      \draw[->, dashed] (1) -- (2);
      \draw[->,dashed] (0) edge node [left] {\textcolor{black}{$\leq$}} (1);
      \draw[->] (5) -- (6);
      \draw[->,dashed] (3) to [in=20, out=-20](6);
      \end{tikzpicture}
  
  \end{center}
  \caption{
  A graphical representation of Example \ref{ex:countermodel}. The dashed arrows represent $\leq$ relations, and solid arrows represent the $R$ relations. World $x_{S_4}$, corresponding to the sequent $S_4$, is \emph{not} part of the model, because $S_4$ is a null sequent. }
  \label{ex:countermodel-pic}
\end{figure}
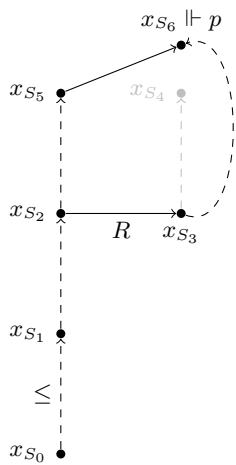

\section{Relations with other proof systems
for \ik}\label{sec:simulation}

In this section we investigate the relation between \cik \ and other existing calculi for \ik, namely the fully labelled calculus presented in \cite{Marin:et:al:2021} and the  nested calculus in \cite{2013cut}. 
These results provide alternative completeness proofs for \cik. 
We shall also discuss the relations between our calculus and the 
labelled calculus introduced in~\cite{simpson1994proof}, the first calculus known for \ik, 
and the
calculus from~\cite{GorePT10}. 

\subsection{From fully labelled to bi-nested sequents}

We first consider the relation between \cik and the fully labelled sequent calculus \labik presented in \cite{Marin:et:al:2021}. This is not the first labelled calculus for \ik, a labelled calculus was already proposed in \cite{simpson1994proof} and we will discuss it later. 


As 
usual when defining labelled sequents, the first step is to enrich the language of the logic by adding semantic information. Given a countably enumerable set of variables (the \emph{labels}) $x, y, z, \dots$, intuitively representing worlds of bi-relational models, the \emph{labelled formulas} of \labik are taken to be \emph{relational atoms}, of the form  $xRy$ or  $x\leq y$, and objects of the form $x:A$, where $A$ is a formula of the non-labelled language. The two kinds of relational atoms $xRy$ and $x\leq y$ characterize the pre-order and accessibility relation in a bi-relational frames, respectively. \footnote{The proof system introduced in~\cite{Marin:et:al:2021} is called ``fully labelled'' to distinguish it form Simpson's labelled calculus, where only one relational atom, $xRy$, appeared. We briefly discuss Simpson's calculus in Section~\ref{sec:simpson_gore}.}

A \emph{labelled sequent} is a triple  $\+R,\seq{\Gamma}{\Delta}$, for $\+R, \Gamma$ and $\Delta$ multisets of labelled formulas, where relational atoms are only allowed to occur in $\+R$, and $\Gamma$ and $\Delta$ only contain labelled formulas of the form $x:A$. 
Under this setting, rules in the deductive system \labik directly mimic the semantics, for example,

\begin{prooftree}
    \AXC{$\+R,x\leq y,yRz,\seq{\Gamma}{\Delta,z:A}$}
    \RightLabel{$y,z$ fresh}
    \LeftLabel{$\square_\mathsf{R}$}
    \UIC{$\+R,\seq{\Gamma}{\Delta,x:\square A}$}
\end{prooftree}
where ``$y, z$ fresh'' means that the two labels do not occur in the conclusion of the rule. 
Frame conditions are also encoded directly, 

\begin{center}
    \begin{adjustbox}{max width=\textwidth}
        \begin{tabular}{ccc}
            \AXC{$\+R, x R y, y \leq z, x \leq u, u R z, \Gamma \Longrightarrow \Delta$}
            \LeftLabel{$\mathsf{F1}$}
            \RightLabel{$u$ fresh}
            \UIC{$\+R, x R y, y \leq z, \Gamma \Longrightarrow \Delta$}
            \DP
            &
            \quad
            &
            \AXC{$\+R, x R y, x \leq z, y \leq u, z R u, \Gamma \Longrightarrow \Delta$}
            \LeftLabel{$\mathsf{F2}$}
            \RightLabel{$u$ fresh}
            \UIC{$\+R, x R y, x \leq z, \Gamma \Longrightarrow \Delta$}
            \DP
        \end{tabular}
    \end{adjustbox}
\end{center}
where $\mathsf{F1}$ and $\mathsf{F2}$ are what we call $\bc$ and $\fc$ respectively.


To establish a correspondence between blocks and labels, it is convenient to consider \ccikann, the annotated version of \cik presented in Figure~\ref{fig:cik:set:annotated}. As observed in Section~\ref{sec:completeness}, \cik and \ccikann are equivalent, and thus our translation results can be extended to \cik as well. 


Now we show how to simulate a derivation in the fully labelled calculus into another one in \ccikann. As it is usual when translating a labelled calculus into a non-labelled one, we have that not all labelled sequents can be translated into annotated (and, ultimately, bi-nested) ones. 
This is because the structure of the labels occurring in a labelled sequent can be an arbitrary graph, which may not have a counterpart in bi-nested sequent. 
Therefore, our translations will cover only a subset of derivations in \labik, as specified later. 
Besides, we can regard each multiset $\+R$ as giving rise to a graph in the obvious way. So we can talk about $(\leq,R)$-paths and $\leq$-maximal paths. 
The full definition of 
graph underlying a sequent 
can be found in \cite{wollic-2024-ik}.

Let $\mathfrak{S}=\+R,\seq{\Gamma}{\Delta}$ be a two-sided labelled sequent. The set of labels occurring in $\+R$ is denoted by $\lab(\+R)$. Let $\Omega=\{A_1,\ldots,A_k\}$ be a finite set of formulas and $x$ a label, we denote the set of labelled formulas $x:A_1, \ldots, x:A_k$ collectively by $x:\Omega$. 

\begin{definition}
    Let $\mathfrak{S}=\+R,\seq{\Phi}{\Psi}$ be a two-sided labelled sequent. 
    For $x,y\in \lab(\+R)$, we say $y$ is \textit{connected} to $x$, if there is a $(\leq,R)$-path from $x$ to $y$ in $\+R$.
\end{definition}

\begin{definition}[Tree-like labelled sequent] 
    Let $\mathfrak{S}=\+R,\seq{\Phi}{\Psi}$ be a two-sided labelled sequent. $\mathfrak{S}$ is called \emph{tree-like} if:
    \begin{enumerate}
        \item there is a unique label $x$ occurring in $\seq{\Phi}{\Psi}$
        s.t. any other label $y\in\lab(\+R)$ is connected to it; and 
        \item labels occurring in $\seq{\Phi}{\Psi}$ all belong to $\lab(\+R)\cup\{x\}$.
    \end{enumerate}
    Label $x$ satisfying condition (1) is called the \textit{rooted label} of $\mathfrak{S}$.
\end{definition}

For instance, both $\seq{}{x:A}$ and $x\leq u,uRz,\seq{z:A}{x:A\wedge B}$ are tree-like while $xRy,z\leq y,\seq{x:A}{z:B}$ and $xRy,\seq{x:A}{z:B}$ are not. 

\begin{proposition}
    If a labelled sequent $\mathfrak{S}$ occurs in a  \labik derivation 
    having sequent $\seq{}{x:A}$ as conclusion, then it must be tree-like and have $x$ as its rooted label. 
\end{proposition}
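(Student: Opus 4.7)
The plan is to proceed by induction on the height of the occurrence of $\mathfrak{S}$ in the derivation $\+D$, measured from the root (which is the conclusion $\, \Rightarrow x:A$). For the base case, $\mathfrak{S}$ coincides with the conclusion, so $\+R = \emptyset$ and the only label appearing anywhere in $\mathfrak{S}$ is $x$. Both conditions of tree-likedness are trivially satisfied with $x$ as the rooted label: the universal quantification over $\lab(\+R)$ in condition~(1) is vacuous, and condition~(2) reduces to $\{x\} \subseteq \emptyset \cup \{x\}$.

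For the inductive step, let $\mathfrak{S}$ be a premise of some rule $(r)$ whose conclusion $\mathfrak{S}'$ is, by induction hypothesis, tree-like with rooted label $x$. I would proceed by case analysis on $(r)$, grouping the rules of \labik into three patterns. First, propositional and modal left rules that neither add nor remove relational atoms (e.g.\ $\wedge$, $\vee$, $\supset$ rules, $\square_\mathsf{L}$, $\Diamond_\mathsf{R}$): here the graph underlying $\+R$ is unchanged and no fresh label is introduced in the sequent part, so both connectivity from $x$ and condition~(2) are preserved verbatim. Second, the modal rules $\square_\mathsf{R}$ and $\Diamond_\mathsf{L}$: these introduce fresh labels paired with new relational atoms of the form $x' \leq y$ and $yRz$ (with $x'$ already present). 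The fresh labels therefore lie on $(\leq, R)$-paths issuing from $x'$, and since $x'$ is itself connected to $x$ by the induction hypothesis, transitive closure yields connectivity from $x$ to each fresh label; moreover, any fresh label appearing in the sequent portion of the premise (such as $z$ in $z:A$) is simultaneously added to $\lab(\+R')$, preserving condition~(2). Third, the frame-condition rules $\mathsf{F1}$ and $\mathsf{F2}$: these introduce a fresh label $u$ via relational atoms of the form $x \leq u, uRz$ (or symmetrically), with both endpoints already connected to $x$, so again $u$ becomes connected to $x$ and no new label enters the sequent portion.

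The main delicate point, and where I expect the argument to require most care, is the \emph{uniqueness} of the rooted label in condition~(1). One must verify that no freshly introduced label $y$ could itself qualify as a root, i.e.\ that not every label in $\lab(\+R')$ is connected \emph{from} $y$. This follows from the orientation of the newly added relational atoms: in each rule, the fresh label sits at the \emph{target} of the new $\leq$ or $R$ edges, so no $(\leq,R)$-path leads from it back to the pre-existing labels, and in particular not to $x$ when $x \neq y$. Combined with the freshness side conditions, this ensures the uniqueness clause is preserved and the inductive step goes through. A secondary technical point is that $x$ must continue to appear in the sequent portion of each premise; this is guaranteed by the conventions of \labik, where the principal label is retained either in the active or in the context part of the sequent through applications of the rules reaching $\mathfrak{S}$.
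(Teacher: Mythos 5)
The paper states this proposition without any proof, so there is no official argument to measure you against; your induction on the distance of $\mathfrak{S}$ from the conclusion, with a case split on the rules of \labik, is the natural way to establish it, and the base case together with the connectivity half of the inductive step are fine. The problem is in the step you yourself flag as delicate. You justify uniqueness of the root by claiming that every fresh label ``sits at the target of the new $\leq$ or $R$ edges, so no $(\leq,R)$-path leads from it back to the pre-existing labels.'' This is false for $\mathsf{F1}$: its premise adds $x\leq u$ \emph{and} $uRz$ with $u$ fresh and $z$ pre-existing, so $u$ is the source of an edge into the old graph and does reach pre-existing labels. The conclusion you want still holds, but for a different reason, and it requires strengthening the inductive invariant: every rule adds relational atoms whose targets are either fresh labels, or (for $\leq_{\mathsf{rf}}$, $\leq_{\mathsf{tr}}$, $\mathsf{F1}$, $\mathsf{F2}$) labels that already had a predecessor, so the rooted label $x$ never acquires an incoming edge from a distinct label; hence no label other than $x$ can reach $x$, and no second root can exist. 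Without some such invariant the orientation argument does not go through, since new edges can in principle extend the reachability set of old non-root labels as well.

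Your ``secondary technical point'' is also justified incorrectly. The rules as displayed do \emph{not} retain the principal labelled formula: applying $\square_\mathsf{R}$ to $\seq{}{x:\square A}$ yields the premise $x\leq y, yRz\seq{}{z:A}$, in which $x$ occurs only in the relational part and not in $\seq{\Phi}{\Psi}$. Read literally, condition (1) of the tree-likeness definition (``a unique label occurring in $\seq{\Phi}{\Psi}$'') would then fail for this premise, so the proposition is only true under the reading in which the rooted label may occur in $\+R$ alone (which is consistent with how the translation $\tr{\cdot}$ later uses the rooted label $x_0$). Your proof should make this reading explicit and verify condition (2) against it, rather than appeal to a retention convention that the calculus, as presented, does not satisfy.
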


With definitions above, we can write a tree-like labelled sequent $\mathfrak{S}=\+R,\seq{\Phi}{\Psi}$ as 
$$
\+R,\bigcup\limits_{i=0}^{n}x_i:\Gamma_i\Rightarrow\bigcup\limits_{i=0}^{n}x_i:\Delta_i
$$
for some $n\in\mathbb{N}$, where 
$\Gamma_i$'s and $\Delta_i$'s are multi-set of label-free formulas, 
$\lab(\+R)\cup\{x_0\}=\{x_i\}_{i=0}^n$ and $x_0$ is the rooted label of $\mathfrak{S}$. 

We demonstrate how to translate $\mathfrak{S}$ into an annotated bi-nested sequent $S$, 
which is made by 
arranging each $\seq{\Gamma_i}{\Delta_i}$ into nested structures according to the relational atoms occurring in $\+R$. 
For each $\seq{\Gamma_i}{\Delta_i}$, the translation $\tr{\seq{\Gamma_i}{\Delta_i}}$ is defined inductively on the converse of $R$ and $\leq$ according to the relational atoms in $\+R$,

\begin{itemize}
    \item if $x_i$ is both $\leq$-maximal and $R$-maximal, let $\tr{\seq{\Gamma_i}{\Delta_i}}=\aseq{\Gamma_i}{i}{\Delta_i}$;
    \item if $x_i$ has direct $\leq$-successors $x_{j_1},\ldots,x_{j_m}$ 
    and direct $R$-successors $x_{k_1},\ldots,x_{k_n}$, let $\tr{\seq{\Gamma_i}{\Delta_i}}$ be 
    
    \begin{adjustbox}{max width=.9\textwidth}
        $\aseq{\Gamma_i}{i}{\Delta_i,\iblock{\tr{\seq{\Gamma_{j_1}}{\Delta_{j_1}}}},\ldots,\iblock{\tr{\seq{\Gamma_{j_m}}{\Delta_{j_m}}}},\mblock{\tr{\seq{\Gamma_{k_1}}{\Delta_{k_1}}}},\ldots,\mblock{\tr{\seq{\Gamma_{k_n}}{\Delta_{k_n}}}}}$
    \end{adjustbox}
\end{itemize}

Lastly, since $x_0$ is the unique minimal element of $R$ and $\leq$, we just take $S$ to be $\tr{\seq{\Gamma_0}{\Delta_0}}$.  

\begin{example}
    For $\mathfrak{S}=x_0\leq x_1,x_1Rx_2,\seq{x_2:A}{x_0:A\wedge B}$, we have $\tr{\mathfrak{S}}= \ \aseq{}{0}{A\wedge B,\iblock{\aseq{}{1}{\mblock{\aseq{}{2}{A}}}}}$.
\end{example}


Now, together with the translation we defined above, we claim

\begin{theorem}\label{thm:label-to-bi-nested}
    If a tree-like sequent $\mathfrak{S}$ is derivable in \labik, then its translation $\tr{\mathfrak{S}}$ is derivable in \ccikann.
\end{theorem}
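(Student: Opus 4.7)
The plan is to proceed by induction on the height of the derivation of $\mathfrak{S}$ in \labik. For the base case, axiomatic instances such as $(\mathsf{id})$, $(\bot_L)$, and $(\top_R)$ involve a formula occurring at a single label; the translation $\tr{\cdot}$ places this formula in the antecedent and consequent of a single nested component of $\tr{\mathfrak{S}}$, yielding an instance of the corresponding \ccikann axiom. For the inductive step, for each rule of \labik we exhibit a derivation pattern in \ccikann whose conclusion is $\tr{\mathfrak{S}}$ and whose open leaves are translations of the premises of the \labik rule.

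The cases group naturally. Propositional rules acting on a single label ($\wedge_L,\wedge_R,\vee_L,\vee_R$, as well as the local cases of $\supset_L$) translate directly into their \ccikann counterparts, since the nested structure at other positions is undisturbed. Rules that introduce a fresh label along a new $\leq$ or $R$ edge ($\supset_R,\square_R,\Diamond_L$) correspond to rules that introduce a fresh $\iblock{\cdot}$ or $\mblock{\cdot}$ block in \ccikann; the freshness of the label in \labik matches the freshness of the annotation, and the new subtree is placed at exactly the position dictated by the translation. Propagation rules ($\square_L,\Diamond_R$) that move a formula across a $(\leq, R)$-path of length two are simulated by a short combination of $(\text{trans})$ to carry the principal formula through the $\iblock{\cdot}$-step, followed by the local \ccikann rule for the $\mblock{\cdot}$-step. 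The frame-condition rules $\mathsf{F1}$ and $\mathsf{F2}$ correspond respectively to \rulebc and \rulefc, with the fresh label of the labelled rule corresponding to the fresh annotation attached to the copy introduced by the bi-nested rule.

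The main obstacle lies in handling non-tree-like intermediate sequents arising inside the derivation. Although $\mathfrak{S}$ itself is tree-like, applying $\mathsf{F1}$ or $\mathsf{F2}$ bottom-up connects a fresh label $u$ to an existing label $z$, so the intermediate sequent has $z$ reached by two distinct $(\leq,R)$-paths and is therefore no longer tree-like in the strict sense. To manage this the induction hypothesis must be stated for an extended class of \emph{annotated} tree-like sequents, where ``shared'' occurrences of a label in \labik correspond to two distinct components in the bi-nested sequent whose annotations are related by $\+A$. Concretely, one extends the translation so that each application of $\mathsf{F1}$ or $\mathsf{F2}$ in \labik is mirrored by introducing, via \rulebc or \rulefc in \ccikann, a fresh annotated copy of the block located at the shared label, with the pair of annotations placed in $\+A'$. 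This extended translation collapses to the original $\tr{\cdot}$ when the underlying labelled graph happens to be a tree, which is the situation at the root, giving the statement.

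Once the generalized induction hypothesis is in place, all remaining rule cases are routine: the ``monotonicity'' or atom-transfer rule of \labik is absorbed by $(\text{trans})$, and any structural rules for reflexivity/transitivity of $\leq$ are shown admissible in \ccikann by a standard argument on the bi-nested structure (using Proposition~\ref{prop:admissibility-weakening-contraction} and Proposition~\ref{prop:sequent-inclusion}). The remaining bookkeeping — ensuring that annotations introduced by \ccikann rules remain globally fresh throughout the simulation, and that the annotation structure $\+A$ grows compatibly with the \labik derivation — is handled by picking fresh annotations at each step, which is always possible since only finitely many annotations appear at any stage.
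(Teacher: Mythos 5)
Your proposal follows essentially the same route as the paper's proof: induction on the height of the \labik derivation, direct simulation of the local and fresh-label rules (using the admissible weakening and $(\iblock{\emptyset})$ rules), with $\mathsf{F1}$ simulated by \rulebc, $\mathsf{F2}$ by \rulefc, and the reflexivity/transitivity rules for $\leq$ absorbed by the translation. Your explicit generalization of the induction hypothesis to handle labels reached by two paths (recorded via $\+A$ and fresh annotated copies) is the same mechanism the paper invokes more informally when it remarks that the shared subtree is duplicated into the new block ``due to the inductive feature of the translation.''
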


\begin{proof}
    The proof is done by induction on the height of a derivation. We verify each rule application can be simulated by a derivation in \ccikann. 

    Recall the following rules are admissible in \cik and its variants,

\begin{center}
    \begin{adjustbox}{max width=\textwidth}
      \begin{tabular}{ccccc}
        \AxiomC{$G\{\Gamma\Rightarrow\Delta\}$}
        \RightLabel{$(w_{R})$}
        \UnaryInfC{$G\{\Gamma\Rightarrow\Delta,\+O\}$}
        \DisplayProof
        &
        \quad
        &
        \AxiomC{$G\{\Gamma\Rightarrow\Delta\}$}
        \RightLabel{$(w_{L})$}
        \UnaryInfC{$G\{A,\Gamma\Rightarrow\Delta\}$}
        \DisplayProof
        &
        \quad
        &
        \AxiomC{$G\{\Gamma\Rightarrow\Delta,\iblock{\seq{}{}}\}$}
        \RightLabel{$(\iblock{\emptyset})$}
        \UnaryInfC{$G\{\Gamma\Rightarrow\Delta\}$}
        \DisplayProof
      \end{tabular}
    \end{adjustbox}
\end{center}
    where in $(w_{R})$, $\+O$ can be either a formula or a block. 

    For axioms and logical rules, we consider $(\mathsf{id}), (\supset_{\mathsf{L}}),(\square_{\mathsf{R}}),(\Diamond_{\mathsf{R}})$ as examples. Other rules are trivial. 
    We omit occurrences of $\+R, \Gamma,\Delta$
    for labelled rules on the left and annotation set (which is always empty) for bi-nested sequents on the right to simplify the proof.

\begin{center}
\begin{adjustbox}{max width=\textwidth}
\small{
    \begin{tabular}{ccc}
        \AxiomC{}
        \LeftLabel{$(\mathsf{id})$}
        \UnaryInfC{$x\leq y,\seq{x:p}{y:p}$}
        \DisplayProof
        &
        \qquad $\leadsto$ \qquad
        &
        \AxiomC{}
        \RightLabel{(id)}
        \UnaryInfC{$\aseq{}{x}{p,\iblock{\aseq{p}{y}{p}}}$}
        \RightLabel{(trans)}
        \UnaryInfC{$\aseq{p}{x}{\iblock{\aseq{}{y}{p}}}$}
        \DisplayProof
        \\
        &&\\
        \multicolumn{3}{l}{
        \AxiomC{$x\leq y,\seq{x:A\supset B}{y:A}$}
        \AxiomC{$x\leq y,\seq{y:B}{}$}
        \LeftLabel{$(\supset_{\mathsf{L}})$}
        \BinaryInfC{$x\leq y,\seq{x:A\supset B}{}$}
        \DisplayProof
        \qquad $\leadsto$ \qquad}
        \\
         &&\\
        \multicolumn{3}{r}{
        \qquad $\leadsto$ \qquad
        \AxiomC{$\aseq{A\supset B}{x}{\iblock{\aseq{}{y}{A}}}$}
        \RightLabel{($w_L$)}
        \UnaryInfC{$\aseq{A\supset B}{x}{\iblock{\aseq{A\supset B}{y}{A}}}$}
        \AxiomC{$\aseq{}{x}{\iblock{\aseq{B}{y}{}}}$}
        \RightLabel{($w_L$)}
        \UnaryInfC{$\aseq{A\supset B}{x}{\iblock{\aseq{B}{y}{}}}$}
        \RightLabel{($\supset_L$)}
        \BinaryInfC{$\aseq{A\supset B}{x}{\iblock{\aseq{A\supset B}{y}{}}}$}
        \RightLabel{(trans)}
        \UnaryInfC{$\aseq{A\supset B}{x}{\iblock{\aseq{}{y}{}}}$}
        \DisplayProof
        }
        \\
        &&\\
        \AxiomC{$x\leq u,uRz\seq{}{z:A}$}
        \LeftLabel{$(\square_\mathsf{R})$}
        \RightLabel{$u,z$ fresh}
        \UnaryInfC{$\seq{}{x:\square A}$}
        \DisplayProof
        &
        \qquad $\leadsto$ \qquad
        &
        \AxiomC{$\aseq{}{x}{\iblock{\aseq{}{u}{\mblock{\aseq{}{z}{A}}}}}$}
        \RightLabel{$(\square_R)$}
        \UnaryInfC{$\aseq{}{x}{\square A}$}
        \DisplayProof\\
        &&\\
        \AxiomC{$xRy\seq{}{x:\Diamond A,y:A}$}
        \LeftLabel{$(\Diamond_\mathsf{R})$}
        \UnaryInfC{$xRy\seq{}{x:\Diamond A}$}
        \DisplayProof
        &
        \qquad $\leadsto$ \qquad
        &
        \AxiomC{$\aseq{}{x}{\Diamond A,\mblock{\aseq{}{y}{A}}}$}
        \RightLabel{$(\Diamond_R)$}
        \UnaryInfC{$\aseq{}{x}{\Diamond A,\mblock{\aseq{}{y}{}}}$}
        \DisplayProof
    \end{tabular}
    }
\end{adjustbox}
\end{center}

Labelled rules capturing the frame conditions of forward and backward confluence can be simulated as follows, where $u$ is a fresh label:

\begin{center}
    \begin{adjustbox}{max width=\textwidth}
    \small{
        \begin{tabular}{c}
            \AxiomC{$xRy,y\leq z,x\leq u,uRz,,\seq{y:A,z:C}{y:B,z:D}$}
            \LeftLabel{$(\mathsf{F1})$}
            \UnaryInfC{$xRy,y\leq z,,\seq{y:A,z:C}{y:B,z:D}$}
            \DisplayProof
            \quad
            $\leadsto$ 
            \\
            \\
            $\leadsto$ 
            \quad
            \AxiomC{$\aseq{}{x}{\mblock{\aseq{A}{y}{B,\iblock{\aseq{C}{z}{D}}}},\iblock{\aseq{}{u}{\mblock{\aseq{C}{z}{D}}}}}$}
            \RightLabel{\rulebc}
            \UnaryInfC{$\aseq{}{x}{\mblock{\aseq{A}{y}{B,\iblock{\aseq{C}{z}{D}}}},}$}
            \DisplayProof
            \\
            \\
            \\
            \AxiomC{$xRy,x\leq z,y\leq u,zRu,\seq{y:A,z:C}{y:B,z:D}$}
            \LeftLabel{$(\mathsf{F2})$}
            \UnaryInfC{$xRy,x\leq z,\seq{y:A,z:C}{y:B,z:D}$}
            \DisplayProof
            \quad
            $\leadsto$ \\
            \\
            $\leadsto$
            \qquad
            \AxiomC{$\aseq{}{x}{\mblock{\aseq{A}{y}{B,\iblock{\aseq{}{u}{}}}},\iblock{\aseq{C}{z}{D,\mblock{\aseq{}{u}{}}}}}$}
            \RightLabel{$(\iblock{\emptyset})$}
            \UnaryInfC{$\aseq{}{x}{\mblock{\aseq{A}{y}{B}},\iblock{\aseq{C}{z}{D,\mblock{\aseq{}{u}{}}}}}$}
            \RightLabel{$(w_L)$}
            \UnaryInfC{$\aseq{}{x}{\mblock{\aseq{A}{y}{B}},\iblock{\aseq{C}{z}{D,\mblock{\aseq{A}{u}{}}}}}$}
            \RightLabel{\rulefc}
            \UnaryInfC{$\aseq{}{x}{\mblock{\aseq{A}{y}{B}},\iblock{\aseq{C}{z}{D}}}$}
            \DisplayProof
            \\
        \end{tabular}
        }
    \end{adjustbox}
\end{center}

Note that in a more general application of $(\mathsf{F2})$, there may be other labels related to $y$, which will result in nested modal blocks. In this case, the nested structure can still be copied into the empty modal block annotated by $u$ in order to fit the form of \rulefc, this is ensured by $(w_L)$. Similarly, in $(\mathsf{F1})$ there may be other labels related to $z$, the resulting nested structure will be copied in the implication block annotated by $u$. This is due to the inductive feature of translation. 

Lastly, for $(\leq_\mathsf{rf})$ and $(\leq_\mathsf{tr})$, according to the translation, the premise and conclusion become the same in bi-nested form, so they are simulated by the same sequent.  
\end{proof}

Recall a derivation in \ccikann can be easily simulated by one in \cik, it is done by first erasing all the occurrences of annotations and then ignoring the principal formulas in each rule application. Consequently, we obtain the following:

\begin{corollary}
    If a tree-like sequent $\mathfrak{S}$ is derivable in \labik, then the sequent obtained by erasing all the annotations from its translation $\tr{\mathfrak{S}}$ is derivable in \cik.
\end{corollary}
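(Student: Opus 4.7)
The plan is to compose Theorem~\ref{thm:label-to-bi-nested} with a straightforward ``forgetful'' translation from \ccikann to \cik. Given a derivable tree-like labelled sequent $\mathfrak{S}$, I first invoke Theorem~\ref{thm:label-to-bi-nested} to obtain a derivation $\+D$ of $\tr{\mathfrak{S}}$ in \ccikann. It then suffices to produce, from $\+D$, a derivation $\+D'$ in \cik of the sequent obtained by erasing every annotation from $\tr{\mathfrak{S}}$.

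The transformation proceeds in two stages. First, erase the annotations and the auxiliary binary relation $\+A$ from every sequent occurring in $\+D$: since the annotations and $\+A$ are never inspected by the logical machinery of the rules themselves (they are only used to track duplicates for the countermodel construction), erasing them leaves a tree of rule applications that is a valid derivation in \ccik, the non-annotated cumulative set-based version. In particular, both instances of the premise of \rulebc in \ccikann collapse to the standard \cik premise since the $\circlearrowleft$-copy becomes syntactically the same block as its source, which is exactly what the unannotated \rulebc rule expects.

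Second, convert the resulting \ccik derivation into a \cik derivation. This is the content of the equivalence between \cik and \ccik stated (without proof) in the paper: one can replicate each cumulative rule application by the corresponding non-cumulative rule, reinstating the occurrences of the principal formula on the premise via admissibility of weakening (Proposition~\ref{prop:admissibility-weakening-contraction}), and absorbing the duplicate copies on the conclusion via admissibility of contraction. Because \cik works on multisets rather than sets, one additionally reads off each multiset by taking each formula that appears in the set with multiplicity one; then contraction handles any duplications arising during the simulation.

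The only mildly delicate point is to ensure that the freshness side-conditions used on annotations in \ccikann (for rules such as $(\square_R)$, $(\supset_R)$ and \rulebc) become trivial once annotations are erased, so that no rule application is invalidated by the erasure. This is immediate: the freshness conditions constrain only the annotations, never the underlying formulas or block structure, so after erasure each rule instance of $\+D$ projects to a well-formed instance of the corresponding rule of \cik. Composing the two stages yields the desired derivation $\+D'$, and the corollary follows.
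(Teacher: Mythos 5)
Your proposal is correct and follows essentially the same route as the paper: the paper likewise obtains the result by composing Theorem~\ref{thm:label-to-bi-nested} with the observation that a \ccikann derivation can be turned into a \cik one by erasing annotations and then absorbing the repeated principal formulas of the cumulative rules. Your additional remarks on the freshness side-conditions and the set/multiset adjustment simply make explicit what the paper leaves implicit.
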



Conversely, it is easy to see that the calculus \cik can be simulated by \labik. The translation from bi-nested sequents to labelled sequents is immediate: formulas occurring in different blocks are labelled with different labels, and the $\iblock{\cdot}$- and $\mblock{\cdot}$-structure give rise to $\leq$- and $R$-relational atoms. Then, every bi-nested rule can be simulated by (possibly many) labelled rules. 

The fully labelled calculus contains all the semantic ingredients necessary to interpret the bi-nested sequents of \cik. And indeed, to every bi-nested sequent there corresponds a labelled sequent. However, the converse is not true:  
according to Theorem \ref{thm:label-to-bi-nested}, we can translate tree-like labelled sequents in the bi-nested formalism. And indeed, while labelled sequents can encode any kind of graph structure built from the relations $R$ and $\leq$, bi-nested sequents can only encode tree structures over $\iblock{\cdot}$- and $\mblock{\cdot}$-blocks (and thus, over $R$ and $\leq$ in the semantics).



\subsection{From nested sequents to bi-nested sequents}

We now turn our attention to the multi-conclusion nested sequent calculus for $\ik$ from \cite{kuznets2019maehara}, called $\nikm$. We shall show how to simulate $\nikm$ proofs into the bi-nested calculus \cik. This provides an alternative proof of completeness for \cik. 
Moreover, in \cite{kuznets2019maehara} the authors show how $\nikm$ simulates the proofs in the single-conclusion nested sequent calculus for $\ik$, from  \cite{2013cut,marin2014label}. Thus, \cik indirectly simulates proofs in the single-conclusion nested calculus. 

In a nutshell, the proof system $\nikm$ from \cite{kuznets2019maehara} enriches a multi-conclusion sequent calculus for intuitionistic logic with the structural connective $\mblock{\cdot}$, representing the $R$-relation of bi-relational models. Multi-conclusion nested sequents are trees of multi-conclusion sequents generated by $\mblock{\cdot}$. 
Instead of employing the standard two-sided notation for sequents, the nested sequents of $\nikm$ differentiate antecedent and consequent by assigning \emph{polarities} to formulas. That is, to each formula occurring in a nested sequent is associated an extra bit of information: either $\bullet$ for the \emph{input/left polarity}, or $\circ$ for the \emph{output/right polarity}. \footnote{The use of polarities dates back to \cite{Lamarche:2003}.} Then, a \emph{polarised multi-conclusion nested sequent} (\emph{polarised nested sequent} for short) is either the empty multiset or the following object, where $A_1, \dots, A_k$, $B_1, \dots, B_h$ are formulas and $\Sigma_1, \dots,\Sigma_n$ are polarised nested sequents:
$$
\lef{A_1},\dots,\lef{A_k},\rig{B_1}, \dots, \rig{B_h}, \mblock{\Sigma_1}, \dots, \mblock{\Sigma_n}
$$
Given a polarised nested sequent $\Sigma$, we denote by $\fm{\Sigma}$ the ``formula part'' of $\Sigma$, namely the multiset of formulas $\lef{A_1},\dots,\lef{A_k},\rig{B_1}, \dots, \rig{B_h}$ occurring in $\Sigma$.  

Similarly to bi-nested sequents, polarised nested sequents do not have a formula interpretation in the language of $\ik$. 

It is easy to see that to each polarised nested sequent  there corresponds a \emph{flat} bi-nested sequent, i.e., a bi-nested sequent in which no $\iblock{\cdot}$-block occurs. 
Conversely, our bi-nested sequents can be written employing polarised formulas. 

The notion of context $\cseqnes{}$ for polarised nested sequents is slightly different from the notion of context for bi-nested sequents. For bi-nested sequents, contexts are bi-nested sequents with a hole, where the hole can only occur in specific parts of the sequent (in the consequent, and within a block). 
Contexts of bi-nested sequents can only be filled with other bi-nested sequents. 
Instead, a \emph{polarised context} is a polarised nested sequent with a hole, where the hole takes the place of \emph{any} formula in the sequent. Moreover, a hole in a polarised context can be filled by any polarised nested sequent. 
Let $\cseqpdown{}$ denote the polarised context $\cseqnes{}$ with all output formulas removed. 
The rules of $\nikm$ from~\cite{kuznets2019maehara}  are showcased in Figure~\ref{fig:nikm}.

\begin{figure}[t]
  \begin{center}
    \begin{tabular}{c c}
      \AxiomC{}
      \RightLabel{\rm ($\lef\bot$)}
      \UnaryInfC{$ \cseqnes{\lef\bot}$}
      \DisplayProof
      &
         \AxiomC{}
      \RightLabel{\rm ($\text{id}$)}
      \UnaryInfC{$ \cseqnes{\lef p, \rig p}$}
      \DisplayProof
    \\[0.3cm]    
     \AxiomC{$ \cseqnes{\lef A, \lef B}$}
    \RightLabel{\rm ($\lef\wedge$)}
    \UnaryInfC{$ \cseqnes{\lef{A \land B}}$}
    \DisplayProof
    & 
      \AxiomC{$ \cseqnes{\rig A} $}
    \AxiomC{$ \cseqnes{\rig B} $}
    \RightLabel{\rm ($\rig\wedge$)}
    \BinaryInfC{$ \cseqnes{\rig{A\land B} }$}
    \DisplayProof\\[0.5cm]
     \AxiomC{$ \cseqnes{\lef A} $}
    \AxiomC{$ \cseqnes{\lef B} $}
    \RightLabel{\rm ($\lef\vee$)}
    \BinaryInfC{$\cseqnes{\lef{A \lor B}} $}
    \DisplayProof
    & 
     \AxiomC{$ \cseqnes{\rig A, \rig B} $}
    \RightLabel{\rm ($\rig \vee$)}
    \UnaryInfC{$ \cseqnes{\rig{A \lor B}} $}
    \DisplayProof\\[0.5cm]
    \AxiomC{$ \cseqnes{ \rig{A}} $}
  \AxiomC{$ \cseqnes{\lef B} $}
  \RightLabel{\rm ($\lef\supset$)}
  \BinaryInfC{$ \cseqnes{\lef{A \supset B}} $}
  \DisplayProof
   & 
    \AxiomC{$ \cseqpdown{\lef A, \rig B} $}
    \RightLabel{\rm ($\rig \supset$)}
    \UnaryInfC{$ \cseqnes{\rig{A \supset B}} $}
    \DisplayProof
 \\[0.5cm]
  \AxiomC{$ \cseqnes{ \mblock{\Delta, \lef A } } $}
  \RightLabel{\rm ($\lef\square$)}
  \UnaryInfC{$ \cseqnes{\lef{\Box A}, \mblock{\Delta } } $}
  \DisplayProof
    & 
  \AxiomC{$ \cseqpdown{\mblock{\rig A}} $}
  \RightLabel{\rm ($\rig\square$)}
  \UnaryInfC{$ \cseqnes{\rig A} $}
  \DisplayProof\\[0.5cm]
  \AxiomC{$ \cseqnes{\mblock{\lef{ A}}} $}
    \RightLabel{\rm ($\lef\Diamond$)}
    \UnaryInfC{$ \cseqnes{\lef{\Diamond A}} $}
    \DisplayProof
 & 
  \AxiomC{$ \cseqnes{ \mblock{\Delta, \rig A} } $}
    \RightLabel{\rm ($\rig\Diamond$)}
    \UnaryInfC{$ \cseqnes{\rig{\Diamond A}, \mblock{\Delta} }  $}
    \DisplayProof\\[0.5cm]
    \AxiomC{$\cseqnes{\lef A, \lef A}$}
    \RightLabel{$(\lef{\mathsf{ctr}})$}
    \UnaryInfC{$\cseqnes{\lef A}$}
    \DisplayProof
    & 
     \AxiomC{$\cseqnes{\rig A, \rig A}$}
    \RightLabel{$(\rig{\mathsf{ctr}})$}
    \UnaryInfC{$\cseqnes{\rig A}$}
    \DisplayProof
    \end{tabular}
\end{center}
  \caption{Rules of $\nikm$}\label{fig:nikm}
  \end{figure}

To define the flat bi-nested sequent corresponding to a polarised nested sequent in presence of contexts, we need to make sure that the correct notion of context is used. To this aim, we rewrite a polarised nested sequent in such a way that it can be immediately translated. 
We  need to introduce some additional definitions.  First, it is convenient to reason about trees associated to sequents:

\begin{definition}
    Given a polarised nested sequent $\Sigma$, we write $\tree{\Sigma}$ to denote its \emph{sequent tree}, whose nodes, denoted by $\pi, \sigma,\dots$, are multisets of polarised formulas: 
    \begin{center}
        \begin{tikzpicture}[scale=0.8]
         \node[] (0) at (-7,1) [] {$\tree{\Sigma} =$};
       
        \node[] (1) at (0,0) [] {$\lef A_1, \dots \lef A_k, \rig B_1, \dots, \rig B_h $};
        \node[] (2) at (-4,2) [] {$\tree{\Sigma_1}$};
        \node[] (3) at (-1,2) [] {$\tree{\Sigma_2}$};
        \node[] (4) at (1,2) [] {$\dots$};
        \node[] (5) at (4,2) [] {$\tree{\Sigma_n}$};
        
        \draw[] (1) -- (2);
        \draw[] (1) -- (3);
        \draw[] (1) -- (5);
        
        \end{tikzpicture}
    \end{center}
    The \emph{depth} of a nested sequent is defined as the depth of its tree. 
\end{definition}

Given a polarised nested sequent $\cseqnes{\Pi}$, we shall now define its \emph{contextualised variant} as the polarised nested sequent $\Sigma'\{\Pi'\}$. 
Intuitively, $\Sigma'\{\Pi'\}$ and $\cseqnes{\Pi}$ have the same sequent tree, and thus denote the same polarised nested sequent.  Sequent $\Sigma'\{\Pi'\}$ is just a notational variant of $\cseqnes{\Gamma}$, where we make sure that the sequent $\Pi'$ which fills the hole $\{\,\}$ in the context $\Sigma'\{ \, \}$ contains the full subtree of $\tree{\cseqnes{\Gamma}}$ starting at the node where   $\{\,\}$ occurs. 


\begin{definition}
    For any polarised context  $\cseqnes{\,}$ 
    let $\{\}, \Xi = \pi \in \tree{\cseqnes{}} $, for $\Xi$ (possibly empty) multiset of polarised formulas.
    For any polarised nested sequent $\cseqnes{\Pi}$, 
    its \emph{contextualised variant}, denoted by 
    $ \Sigma^{\Pi}_{a}\{\Pi', \Sigma^\Pi_c \} $, is the unique polarised nested sequent such that: 
    \begin{itemize}
        \item $\tree{\cseqnes{\Pi}} = \tree{\Sigma^{\Pi}_{a}\{\Pi', \Sigma^\Pi_c \}} $; \\[-0.3cm]
        \item 
        $ \Pi'= \fm{\Pi}, \Xi = \pi$; \\[-0.3cm]
        \item $\Sigma^\Pi_c$, the \emph{children sequent}, is the polarised nested sequent containing all the nodes $\delta \in \tree{\cseqnes{\Pi}}$ such that 
        $\pi \notin \tree{\Pi}$ and 
        there is a path from $\pi$ to $\delta$ in $\tree{\cseqnes{\Pi}}$;   \\[-0.3cm]  
        \item $\Sigma^{\Pi}_{a}\{\} $, the \emph{ancestor context}, is the polarised context 
        containing all the nodes nodes $\delta \in \tree{\cseqnes{\Pi}}$ such that 
        $\delta \notin \tree{\Pi}$ and $\delta \notin \tree{\Sigma^\Pi _c}$, and by setting 
         $\pi = \{\} $. 
    \end{itemize}
\end{definition}


\begin{example}
\label{example:extended:context}
    Take $\cseqnes{} = \lef A, \lef B, \mblock{\lef C, \rig D},\mblock{\{\}, \rig{E}, \mblock{\rig{F}}}$ and $\Pi = \lef H, \mblock{\lef J}$. Then:
    $$
    \cseqnes{\Pi} = \lef A, \lef B, \mblock{\lef C, \rig D},\mblock{\lef H, \mblock{\lef J}, \rig{E}, \mblock{\rig{F}}}
    $$
    The extended sequent of $\cseqnes{\Pi}$ is defined by taking $\Sigma^\Pi_c = \mblock{\rig F}$, $\Xi= \rig E$, and   $\Sigma^\Pi_a\{\} = \lef A, \lef B, \mblock{\lef C, \rig D},\mblock{\{ \} }$. We have  $\Sigma^\Pi_a\{ \fm{\Pi}, \Xi, \Sigma^\Gamma_c\} = \Sigma^\Gamma_a\{ \lef H, \mblock{\lef J}, \rig E, \mblock{\rig F}\} = \cseqnes{\Gamma}$. 
\end{example}

We are now ready to define the translation from polarised nested sequents into flat bi-nested sequents. 
Given a polarised nested sequent $ \Pi$, let $\lef \Pi = \{ A \mid \lef A \in \Pi \}$ and $\rig \Pi = \{ B \mid \rig B \in \Pi \}$. Moreover, we denote by by $\Pi^{[\,]}$ the multiset of  $[\,]$-blocks occurring in $\Pi$. 


\begin{definition}
\label{def:translation:multi}
    Let $\cseqnes{\Pi}$ be a polarised nested sequent. We define its corresponding flat bi-nested sequent, denoted by $\fl{\cseqnes{\Pi}}$, by induction on the depth $d$ of  $\cseqnes{\Pi}$: 
    \begin{itemize}
        \item If $d = 0$, then 
        $\cseqnes{\Pi} = \Sigma^\Pi_a\{ \Pi', \Sigma^\Pi_c\} =  \Pi'$, for $\Pi'$ multiset of polarised formulas. 
        We set: 
        $$
        \fl{\cseqnes{\Pi}} = \fl{\Sigma^\Pi_a\{ \Pi', \Sigma^\Pi_c\}} = \fl{\Pi' } := \lef{(\Pi')} \Rightarrow \rig {(\Pi')};
        $$
        \item If $d >0$, then 
        $\cseqnes{\Pi} = \Sigma^\Pi_a\{ \Pi', \Sigma^\Pi_c\}$. We set:
        $$
        \fl{\cseqnes{\Pi}} = \fl{\Sigma^\Pi_a\{ \Pi', \Sigma^\Pi_c\}} : = \cseqp{ \Psi \Rightarrow \Theta }
        $$
        where: \\[-0.3cm]
        \begin{itemize}
            \item $ \cseqp{} = \fl{\Sigma^\Pi_a\{\emptyset\}}$; \\[-0.3cm]
            \item $\Psi=  \lef{(\Pi')}$;\\[-0.3cm]
            \item $\Theta = \rig{(\Pi')}, 
            \mblock{\fl{\Theta_1}}, \dots, \mblock{\fl{\Theta_n}}$ where to each  $\Theta_i  $ there corresponds a block $\mblock{\Theta_i}$ in $\Pi \cup \Sigma^\Pi_c$. 
        \end{itemize}
    \end{itemize}
\end{definition}




\begin{example}
    Consider the polarised nested sequent $\cseqnes{\Pi}$ from Example~\ref{example:extended:context} and its extended context $\Sigma^\Pi_a\{ \Pi', \Sigma^\Pi_c\}$. Then: 
    $$
    \fl{\cseqnes{\Pi}} = \fl{\Sigma^\Pi_a\{ \Pi', \Sigma^\Pi_c\}} = G\{ H \Rightarrow  E, \mblock{\fl{\lef{J}}}, \mblock{\fl{\rig F}} \}.
    $$
    Moreover, $\fl{\lef{J}} = J \Rightarrow \emptyset $ and  $\fl{\rig{F}} = \emptyset \Rightarrow F$, and $\cseqp{} = \fl{\Sigma^\Pi_a\{ \}} = A, B \Rightarrow \mblock{C \Rightarrow D},\mblock{\{ \} }$. Putting everything together:
    $$
    \fl{\cseqnes{\Pi}} =  A, B \Rightarrow \mblock{C \Rightarrow D}, \mblock{ H \Rightarrow  E, \mblock{J \Rightarrow \,}, \mblock{\, \Rightarrow F}}.
    $$
\end{example}

We are now ready to prove the main result of this section. 
Slightly abusing the notation, for $\cseqp{}$ bi-nested context we denote by $\cseqpstar{}$ the bi-nested context obtained by removing all formulas occurring in all the consequents of all sequents occurring in $\cseqp{}$. Similarly, for 
$S$ bi-nested sequent and $\Delta$ multiset of  $\mblock{\, }$-blocks, we denote by $\cstar{S}$ and $\cstar{\Delta}$ the bi-nested sequent and multiset of blocks obtained by removing all formulas from the consequents of all sequents occurring in ${S}$ and $\Delta$ respectively.


\begin{theorem}
\label{theorem:nested-bi-nested}
    Let $\mathcal{D}$ be a proof in $\nikm$ of the polarised nested sequent $\Sigma$. Then, there is a proof $\mathcal{D}'$ in \cik of the flat bi-nested sequent $\fl{\Sigma}$. 
\end{theorem}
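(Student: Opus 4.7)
The proof proceeds by induction on the height of the $\nikm$ derivation $\mathcal{D}$. In each case we exhibit a \cik-derivation whose root is the translation $\fl{\Sigma}$ of the conclusion $\Sigma$ of the last rule applied in $\mathcal{D}$, and whose open leaves are the translations of the premisses of that rule, which are derivable by the induction hypothesis. The compositional nature of $\fl{\,}$ (Definition~\ref{def:translation:multi}) ensures that a rule applied at a node of the sequent tree $\tree{\Sigma}$ translates to actions at the corresponding position of the bi-nested sequent, reachable through a suitable \cik-context.

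The axioms $(\lef\bot)$ and $(\id)$ translate immediately to instances of the $(\bot_L)$ and $(\id)$ axioms of \cik in the bi-nested context image of the polarised context. The propositional rules $(\lef\wedge)$, $(\rig\wedge)$, $(\lef\vee)$, $(\rig\vee)$, $(\lef\supset)$ and the modal rules $(\lef\square)$, $(\lef\Diamond)$, $(\rig\Diamond)$ each correspond to a single application of the homologous \cik-rule at the relevant position of the translated sequent, since the formula part of the active node is preserved by $\fl{\,}$ (namely $\lef{(\Pi')}$ becomes the antecedent and $\rig{(\Pi')}$ the consequent). The contraction rules $(\lef\ctr)$ and $(\rig\ctr)$ are absorbed by admissibility of left/right contraction in \cik, established in Proposition~\ref{prop:admissibility-weakening-contraction}.

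The main obstacle lies in simulating $(\rig\supset)$ and $(\rig\square)$. Both employ the operator $\cseqpdown{\,}$, which erases \emph{every} output formula in the context, whereas the corresponding \cik-rules $(\supset_R)$ and $(\square_R)$ introduce a fresh implication block but do not discard any of the surrounding output formulas. The bridge is provided by the interaction rule \rulefc: read bottom-up, it copies a modal block from the ambient context into an implication block while keeping only its \emph{local positive part}, i.e., stripping away all output formulas. This mirrors the effect of $\cseqpdown{\,}$ very closely. Concretely, to simulate $(\rig\square)$, starting from the translated conclusion $\fl{\cseqnes{\rig{\square A}}}$ we first apply $(\square_R)$ bottom-up, introducing $\iblock{\,\Rightarrow \mblock{\,\Rightarrow A}}$ at the relevant position; then we apply \rulefc iteratively to transfer each modal block of the surrounding $\mblock{\,}$-structure into the new implication block, rebuilding within it a block tree isomorphic to that of $\fl{\cseqpdown{\mblock{\rig A}}}$; finally, the rule (trans) propagates antecedent formulas inside the implication block, and admissible weakenings $(w_L)$, $(w_R)$ (Proposition~\ref{prop:admissibility-weakening-contraction}) reconcile the remaining discrepancies between the resulting sequent and the translation of the $\nikm$-premiss. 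The simulation of $(\rig\supset)$ is analogous, with $(\supset_R)$ in place of $(\square_R)$ and the newly created block $\iblock{A \Rightarrow B}$ playing the role of the fresh implication block.

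The hard part is showing that the iterated applications of \rulefc and (trans) do in fact produce exactly the flat-translation $\fl{\cseqpdown{\,\cdot\,}}$ modulo admissible weakening; this is where the precise matching of the sequent tree $\tree{\cseqnes{\Pi}}$ with the nested structure generated by $\fl{\,}$ (in particular, the interaction between the extended context $\Sigma^{\Pi}_{a}\{\Pi', \Sigma^\Pi_c\}$ and the bi-nested context $\cseqp{\,}$) must be verified carefully by a secondary induction on the depth of the translated sequent.
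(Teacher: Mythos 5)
Your overall architecture (induction on the $\nikm$ derivation, routine simulation of most rules via the homologous \cik rule plus admissible weakening, and isolation of $(\rig\supset)$ and $(\rig\Box)$ as the hard cases) matches the paper's proof. The gap is in how you handle those hard cases. You claim that \rulefc ``mirrors the effect of $\cseqpdown{\,}$'', but \rulefc only copies modal blocks occurring as \emph{siblings} of an implication block into that block; it can therefore only account for the stripping of output formulas at the active node and its descendants. The operator $\cseqpdown{\,}$ also erases the output formulas at every \emph{ancestor} of the active node, so the translated premiss $\fl{\cseqpdown{\lef A,\rig B}}=\cseqpstar{\Psi,A\Rightarrow B,\Theta^*}$ differs from the sequent $\cseqp{\Psi\Rightarrow A\supset B,\Theta,\iblock{\Psi,A\Rightarrow B,\Theta^*}}$ that your construction would need, and not merely by weakening: the whole ancestor chain is stripped, and the active node itself is re-formed ($A$ moved into the antecedent, $B$ replacing $A\supset B$, the component $A\Rightarrow B$ sitting in a modal-block position rather than inside an implication block hanging off the un-stripped node). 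Admissible weakening only adds material top-down; it cannot remove $A$ from an antecedent or relocate a component across block boundaries. So \rulefc, (trans) and weakening alone do not connect the inductive hypothesis to the goal, and no secondary induction will fix this.

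The missing ingredient is \rulebc. In the paper's construction one first wraps the entire stripped premiss into an implication block at the \emph{root} (via the admissible $(w_C)$), and then iterates a package of steps --- \rulefc to discharge the starred copies, (trans), weakening to reintroduce the erased output formulas of the current level, and crucially \rulebc --- which pushes that implication block down the chain of modal blocks one level per iteration, until it reaches the node where $A\supset B$ (or $\Box A$) occurs; only then do \rulefc, (trans) and $(\supset_R)$ (resp. $(\square_R)$) finish the job in the way you describe. Your closing remark that ``the precise matching must be verified carefully'' points at exactly the place where the argument as stated breaks: \rulefc cannot reach the ancestors, and without \rulebc the simulation of $(\rig\supset)$ and $(\rig\Box)$ fails.
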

\begin{proof}
    We proceed by induction on the structure of $\mathcal{D}$, and show how to simulate in \cik every rule applied in $\mathcal{D}$. 
    In the base case, $\Sigma$ is an instance of $(\lef\bot)$ or $(\id)$. Then $\fl{\Sigma}$ is an instance of $(\lr{\bot})$ or $(\id)$, respectively. 
    For the inductive step, we distinguish cases according to the last rule $(r)$ applied in $\mathcal{D}$. Each such rule has the following form:
    \begin{center}
        \AxiomC{$\{\Sigma_i\}_{i \in \{1,2\}}$}
        \RightLabel{$(r)$}
        \UnaryInfC{$\Sigma$}
        \DisplayProof
    \end{center}
    We distinguish cases according to the last rule applied. 
    By inductive hypothesis, we have that for all $i$, $\fl{\Sigma_i}$ is derivable in \cik. We need to show how to derive the flat bi-nested sequent $\fl{\Sigma}$. All the cases, except for those of $(\rig{\supset})$ and $(\rig{\Box})$,  are quite easy and just employ the corresponding rule of \cik. We only show the case of rule $(\lef{\Box})$: 
    \begin{center}
    \AxiomC{$ \cseqnes{ \mblock{A, \Delta}} $}
    \RightLabel{\rm ($\lef \Box$)}
    \UnaryInfC{$  \cseqnes{\Box A, \mblock{ \Delta}} $}
    \DisplayProof
    \end{center}
    Suppose that $\fl{\cseqnes{ \mblock{A, \Delta}}} = \cseqp{\Psi_1\Rightarrow \Theta_1, \mblock{A, \Psi_2 \Rightarrow \Theta_2}}$, where $\Theta_1$ and $\Theta_2$ are multisets of formulas and $\mblock{\cdot}$-blocks. By inductive hypothesis, the sequent is derivable in \cik. We construct the following derivation:
    \begin{prooftree}
    \AxiomC{$ \cseqp{\Psi_1 \Rightarrow \Theta_1, \mblock{A, \Psi_2 \Rightarrow \Theta_2}} $}
    \RightLabel{$\wk$}
    \UnaryInfC{$ \cseqp{\Psi_1 , \Box A \Rightarrow \Theta_1, \mblock{A, \Psi_2 \Rightarrow \Theta_2}} $}
    \RightLabel{\rm ($\lr \Box$)}
    \UnaryInfC{$ \cseqp{\Psi_1, \Box A \Rightarrow \Theta_1, \mblock{ \Psi_2 \Rightarrow \Theta_2}}  $}
    \end{prooftree}
    We have that $\fl{\cseqnes{\Box A, \mblock{ \Delta}}} = \cseqp{\Psi_1, \Box A \Rightarrow \Theta_1, \mblock{ \Psi_2 \Rightarrow \Theta_2}}$. 

    \
    
    The cases of rules  $(\rig \supset)$ and $(\rig \Box)$ are much more involved because, starting from the premiss, we need to ``reconstruct'' the consequents of every component of the bi-nested sequent. We only show the case of  $(\rig \supset)$, the other being similar: 
    \begin{center}
    \AxiomC{$ \cseqpdown{\lef A, \rig B} $}
    \RightLabel{\rm ($\rig \supset$)}
    \UnaryInfC{$ \cseqnes{\rig{A \supset B}} $}
    \DisplayProof
    \end{center}
    Suppose that 
    $\fl{\cseqnes{\rig{A} ,\lef{ B}} } =  \cseqpstar{\Psi , A\Rightarrow  B, \Theta^*}$, for $\Psi$ multiset of formulas and $\Theta$ multiset of $\mblock{\,}$-blocks with no formulas occurring in their consequents. 
    Intuitively, $\Theta^*$ collects all the nodes that are children of the node where $A\supset B$ occurs, while $\cseqpstar{\,}$ collects all the other nodes in the flat sequent. 
    Observe that no formulas occur in the consequents of blocks in $\Theta^*$. 
    Then, we have  $\fl{\cseqnes{\rig{A \supset B}} } =  \cseqp{\Psi \Rightarrow A\supset  B, \Theta}$. This bi-nested sequent is obtained by ``adding'' the relevant formulas in the consequents of blocks in  $\cseqpstar{\,}$ and $\Theta^*$. 
    We shall construct the following \cik derivation:
    \begin{prooftree}
        \AxiomC{$\cseqpstar{\Psi , A\Rightarrow  B, \Theta^*}$}
        \noLine 
        \UnaryInfC{$\vdots$}
        \noLine 
        \UnaryInfC{$\cseqp{\Psi \Rightarrow A\supset  B, \Theta}$}
    \end{prooftree} 
    The derivation comprises several steps, depending on the shape of $\cseqpstar{}$ and $\Theta^*$. 
    The idea is to ``copy'' the tree structure encoded by  the sequent $\cseqpstar{\Psi , A\Rightarrow  B, \Theta^*}$ in the sequent $\cseqp{\Psi \Rightarrow A\supset  B, \Theta}$, while adding the relevant formulas in the consequents. 
    To this aim, 
    after the initialisation steps (Steps~\ref{it:case1}, \ref{it:case2} below) 
    we use \rulebc to ``transform'' $\cseqpstar{}$ into $\cseqp{}$ (Step~\ref{it:case3} below).
    In this stage, \rulefc might also need to be applied, to account for other branches of the sequent / tree starting at ancestors of the node where $A \supset B$ occurs.  Step~\ref{it:case2} can be skipped if $\cseqpstar{}$ is empty (this is the case distinction made in Step~\ref{it:case1}). 
    Then, in Step~\ref{it:case4}, we ``transform'' $\cstar{\Theta}$ into $\Theta$, using \rulefc. A final application of $(\rr \supset)$ yields the desired sequent. 
    
    Before detailing  how to construct the derivation, we need some additional definitions. 
    For some  $n \geq 0$, we have:  
        $$
        \cseqp{\Psi, A \Rightarrow B, \Theta} 
        \,
        = 
        \,
        \Psi_1 \Rightarrow \Theta_1, \mblock{\Psi_2 \Rightarrow \Theta_2, \mblock{ \cdots \mblock{\Psi_n \Rightarrow \Theta_n, \mblock{\Psi, A \Rightarrow B, \Theta}} \cdots }}
        $$
    Let:  
    \begin{eqnarray*}
        \cseqnum{0}{}&  = & \{ \}\\
        \cseqnum{1}{}& = &  \Psi_1 \Rightarrow \Theta_1, \mblock{\{ \}}\\
        \cseqnum{2}{} & = & \Psi_1 \Rightarrow \Theta_1, \mblock{ \Psi_2 \Rightarrow \Theta_2, \mblock{\{ \}}}\\
        & \vdots & \\
        \cseqnum{n}{}&  = &  \Psi_1 \Rightarrow \Theta_1, \mblock{\Psi_2 \Rightarrow \Theta_2, \mblock{ \cdots  \mblock{\Psi_n \Rightarrow \Theta_n, \mblock{ \{\} }} \cdots }}
    \end{eqnarray*}
        Moreover, set: 
    \begin{eqnarray*}
        \seqstar{n+1} & = &  \Psi, A \Rightarrow B, \cstar{\Theta}\\
        \seqstar{n} & = &  \Psi_{n} \Rightarrow \cstar{\Theta_{n}}, \mblock{\seqstar{n+1}}\\
        \seqstar{n-1} & = &  \Psi_{n-1} \Rightarrow \cstar{\Theta_{n-1}}, \mblock{\seqstar{n}}\\
        & \vdots &\\
        \seqstar{1}& = & \Psi_{1} \Rightarrow \cstar\Theta_1, \mblock{\seqstar{2}}
    \end{eqnarray*}    

    We now describe the steps to construct the derivation. We start from sequent $\cseqpstar{\Psi, A \Rightarrow B, \Theta^*}   
    = \seqstar{1}$.    
    
    \begin{enumerate}
        \item 
        \label{it:case1}
        If $n = 0$, then $\seqstar{1} =\Psi, A \Rightarrow B, \Theta^* = \seqstar{n+1}$. Construct the following derivation, where 
        the dashed line is a rewriting step:
        \begin{center}
            \AxiomC{$\Psi, A \Rightarrow B, \Theta^* $}
            \RightLabel{$(w_{C})$}
            \UnaryInfC{$\Psi \Rightarrow B, \Theta, \iblock{\Psi, A \Rightarrow B, \Theta^* }$}
            \dashedLine
            \UnaryInfC{$\cseqnum{0}{\Psi \Rightarrow A \supset B, \Theta,   \iblock{ \Psi, A \Rightarrow B, \Theta^* }}$}
            \DisplayProof
        \end{center}
        
        Then, continue the construction with Step~\ref{it:case4}. \\[-0.3cm]
               
        \item 
        \label{it:case2}
        Otherwise, if $n > 0$, construct the following derivation:
        \begin{center}
            \AxiomC{$\seqstar{1}$}
            \RightLabel{$(w_{C})$}
            \UnaryInfC{$\Psi_1\Rightarrow \Theta_1,  \iblock{\seqstar{1}}$}
            \dashedLine
            \UnaryInfC{$ \cseqnum{0}{ \Psi_1\Rightarrow \Theta_1,  \iblock{\seqstar{1}}} $}
            \DisplayProof
        \end{center}
        
        Take $i = 0$ and continue with Step~\ref{it:case3}.\\[-0.3cm] 

    \begin{figure}[t!]
        \begin{tabular}{c | c }
        \begin{tikzpicture}[scale=0.8]
        \tikzstyle{node}=[circle,fill=black,inner sep=1.2pt]
        
        \node[label= left :{\small{$\Psi_{i+1} \Rightarrow \Theta'_{i+1} $}}] (l1) at (0,-0.7) [node] {};
        \node[label= left :{\small{$\Psi_{i+1} \Rightarrow \, $}}] (u1) at (0, 2) [node]
        {};
        \node[label=  right :{\small{$\seqstar{i+2}  \textcolor{white}{\Rightarrow \Theta_{i+2}}$}}] (u2) at (2.7,2) [node]
        {};
        \node[label=  above right  :{\small{$\Lambda_1^*$}}] (uc1) at (1,3) [node] {};
        \node[label=  right  :{\small{$\Lambda_k^*$}}] (ucn) at (1.5,2.6) [node] {};
        \node[label=  above right  :{\small{$\Lambda_1$}}] (lc1) at (1,0.3) [node] {};
        \node[label=  right  :{\small{$\Lambda_k$}}] (lcn) at (1.5,-0.1) [node] {};
        
        \draw[->, dashed] (l1) -- (u1);
        \draw[->, ] (u1) -- (u2);
        \draw[->, ] (u1) -- (uc1);
        \draw[->, ] (u1) -- (ucn);
        \draw[dotted] (uc1) -- (ucn);
        \draw[->, ] (l1) -- (lc1);
        \draw[->, ] (l1) -- (lcn);
        \draw[dotted] (uc1) -- (ucn);
        \draw[dotted] (lc1) -- (lcn);

        \end{tikzpicture}
             &  
        \begin{tikzpicture}[scale=0.8]
        \tikzstyle{node}=[circle,fill=black,inner sep=1.2pt]
        
        \node[label= left :{\small{$\Psi_{i+1} \Rightarrow \Theta'_{i+1} $}}] (l1) at (0,-0.7) [node] {};
        \node[label= left :{\small{$\Psi_{i+1} \Rightarrow \, $}}] (u1) at (0, 2) [node]
        {};
        \node[label=  right :{\small{$\seqstar{i+2}  \textcolor{white}{\Rightarrow \Theta_{i+2}}$}}] (u2) at (2.7,2) [node]
        {};
        \node[label=  above right  :{\small{$\Lambda_1$}}] (uc1) at (1,0.3) [node] {};
        \node[label=  right  :{\small{$\Lambda_k$}}] (ucn) at (1.5,-0.1) [node] {};
        
        \draw[->, dashed] (l1) -- (u1);
        \draw[->, ] (u1) -- (u2);
        \draw[->, ] (l1) -- (uc1);
        \draw[->, ] (l1) -- (ucn);
        \draw[dotted] (uc1) -- (ucn);
        \end{tikzpicture}
        \\[0.5cm]
        \hline 
        & \\[0.5cm]
        \begin{tikzpicture}[scale=0.8]
        \tikzstyle{node}=[circle,fill=black,inner sep=1.2pt]
        
        \node[label= left :{\small{$\Psi_{i+1} \Rightarrow \Theta'_{i+1} $}}] (l1) at (0,-0.7) [node] {};
        \node[label= left :{\small{$\, \Rightarrow \, $}}] (u1) at (0, 2) [node] {};
        \node[label=   right :{\small{$\seqstar{i+2} $}}] (u2) at (2.7,2) [node]
        {};
        \node[label=   right :{\small{$\seqstar{i+2}$}}] (uu2) at (2.7,1.5) [node]
        {};
        \node[label=  above right  :{\small{$\Lambda_1$}}] (uc1) at (1,0.3) [node] {};
        \node[label=  right  :{\small{$\Lambda_k$}}] (ucn) at (1.5,-0.1) [node] {};
        \node[label= right :{\small{$\Psi_{i+2} \Rightarrow \Theta_{i+2} $}}] (l2) at (2.7, -0.7) [node] {};

        \draw[->, dashed] (l1) -- (u1);
        \draw[->, dashed] (l2) -- (uu2);
        \draw[->, ] (l1) -- (l2);
        \draw[->, ] (u1) -- (u2);
        \draw[->, ] (l1) -- (uc1);
        \draw[->, ] (l1) -- (ucn);
        \draw[dotted] (uc1) -- (ucn);
        \end{tikzpicture}
             & 
           
        \begin{tikzpicture}[scale=0.8]
        \tikzstyle{node}=[circle,fill=black,inner sep=1.2pt]
        
        \node[label= left :{\small{$\Psi_{i+1} \Rightarrow \Theta'_{i+1} $}}] (l1) at (0,-0.7) [node] {};
        \node[label=   right :{\small{$\seqstar{i+2}$}}] (uu2) at (2.7,2) [node]
        {};
        \node[label=  above right  :{\small{$\Lambda_1$}}] (uc1) at (1,0.3) [node] {};
        \node[label=  right  :{\small{$\Lambda_k$}}] (ucn) at (1.5,-0.1) [node] {};
        \node[label= right :{\small{$\Psi_{i+2} \Rightarrow \Theta_{i+2}$ }}] (l2) at (2.7, -0.7) [node] {};

        \draw[->, dashed] (l2) -- (uu2);
        \draw[->, ] (l1) -- (l2);
        \draw[->, ] (l1) -- (uc1);
        \draw[->, ] (l1) -- (ucn);
        \draw[dotted] (uc1) -- (ucn);
        \end{tikzpicture}\\[0.5cm]
        \end{tabular}
        \caption{A graphical representation of one iteration of Step~\ref{it:case3}. For simplicity, the contexts are not represented. 
        Dashed lines represent $\iblock{\cdot}$-blocks, while solid lines represent $\mblock{\cdot}$-blocks. The four pictures depict the evolution of the sequent within the derivation in Step~\ref{it:case3}. 
        The  \textbf{top-left} sequent is the starting point of the derivation. The sequent on \textbf{top-right} is obtained from this sequent by applying \rulefc $k$ times. Then, the  \textbf{bottom-left} sequent is obtained  applying  {\rm \text{(trans)}} and weakening steps to the top-right sequent. With one final application of \rulebc we obtain the \textbf{bottom-right} sequent, which is the conclusion of the derivation. The sequent $\Psi_{i+1} \Rightarrow \Theta'_{i+1}, \mblock{\Lambda_1}, \dots, \mblock{\Lambda_k} $ now becomes part of the context, and we can repeat the iteration with sequent $\Psi_{i+2} \Rightarrow \Theta_{i+2}, \iblock{\seqstar{i+2}}$. 
        }
        \label{fig:step3}
    \end{figure}
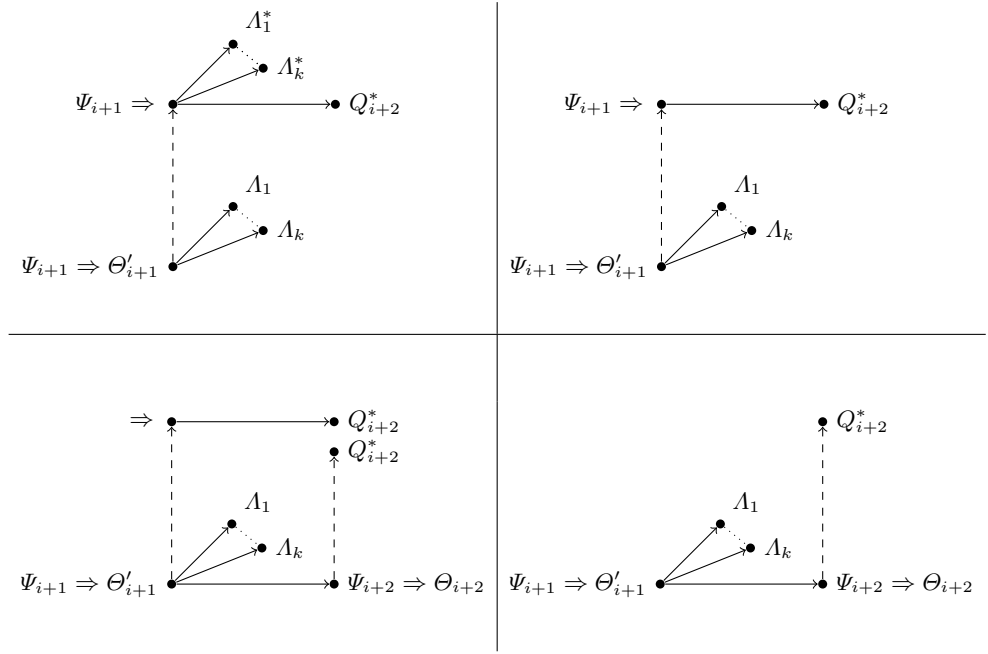

        \item 
        \label{it:case3}
        For any $0 \leq i < n$, we have that,  for $k\geq 0$, $\Theta_{i+1} = \Theta'_{i+1}, \mblock{\Lambda_1},\dots,  \mblock{\Lambda_k}$, where $\Theta'_{i+1}$ is a multiset of formulas. We construct the following derivation, also depicted in Figure~\ref{fig:step3}. In the derivation,  \rulefc$\times k$ denotes $k$ applications of \rulefc, while $\wk$ shortens possibly multiple applications of $(w_{R})$ and $(w_{L})$:
        \begin{center}
            \AxiomC{$G_i\{ \Psi_{i+1} \Rightarrow  {\Theta_{i+1}},\iblock{\seqstar{i+1} }\}$}
            \dashedLine
            \UnaryInfC{$G_i\{ \Psi_{i+1} \Rightarrow  {\Theta_{i+1}},\iblock{\Psi_{i+1} \Rightarrow \cstar{\Theta_{i+1}},\mblock{\seqstar{i+2}}}\}$}
            \dashedLine
            \UnaryInfC{$G_i\{\Psi_{i+1} \Rightarrow \Theta'_{i+1}, \mblock{\Lambda_1},\dots,  \mblock{\Lambda_k}, \iblock{\Psi_{i+1} \Rightarrow \mblock{\Lambda_1^*},\dots,  \mblock{\Lambda_k^*},{\mblock{\seqstar{i+2}}} }\}$}
            \RightLabel{\rulefc $ \times k$}
            \UnaryInfC{$G_i\{\Psi_{i+1} \Rightarrow \Theta'_{i+1}, \mblock{\Lambda_1},\dots,  \mblock{\Lambda_k}, \iblock{\Psi_{i+1} \Rightarrow \mblock{\seqstar{i+2}}}\}$}
            \dashedLine
            \UnaryInfC{$G_i\{\Psi_{i+1} \Rightarrow \Theta_{i+1},  \iblock{\Psi_{i+1} \Rightarrow \mblock{\seqstar{i+2}}}\}$}
            \RightLabel{\rm \text{(trans)}}
            \UnaryInfC{$G_i\{\Psi_{i+1} \Rightarrow \Theta_{i+1},  \iblock{\, \Rightarrow \mblock{\seqstar{i+2}}}\}$}
            \RightLabel{$\mathsf{wk}$}
            \UnaryInfC{$G_i\{\Psi_{i+1} \Rightarrow \Theta_{i+1},  \iblock{\, \Rightarrow \mblock{\seqstar{l}}}, \mblock{\Psi_{i+2} \Rightarrow \Theta_{i+2}, \iblock{\seqstar{i+2}}} \}$}
            \RightLabel{\rulebc}
            \UnaryInfC{$G_i\{\Psi_{i+1} \Rightarrow \Theta_{i+1},  \mblock{\Pi_{i+2} \Rightarrow \Theta_{i+2}, \iblock{\seqstar{i+2}}} \}$}
            \dashedLine
            \UnaryInfC{$G_{i+1}\{ \Psi_{i+2} \Rightarrow \Theta_{i+2}, \iblock{\seqstar{i+2}} \}$}
            \DisplayProof
        \end{center}
        Repeat this step. In the last iteration of this step, when $i = n-1$, the input sequent will be $G_{n-1}\{ \Psi_{n} \Rightarrow  {\Theta_{n}},\iblock{\seqstar{n} }\} = 
        G_{n-1}\{ \Psi_{n} \Rightarrow  {\Theta_{n}},\iblock{\Psi_{n} \Rightarrow \cstar{\Theta_{n}},\mblock{\seqstar{n+1}}}\} 
        = G_{n-1}\{ \Psi_{n} \Rightarrow  {\Theta_{n}},\iblock{\Psi_{n} \Rightarrow \cstar{\Theta_{n}},\mblock{\Psi, A \Rightarrow B, \cstar{\Theta}}}\}$. In the $\wk$ application, the multiset introduced top-down are $\Pi_{n+1} = \Gamma$ and $\Theta_{n+1} = A \supset B, \Delta$. The lowermost sequent of the derivation is $ G_{n-1}\{\Pi_{n} \Rightarrow \Theta_{n},  \mblock{\Gamma \Rightarrow A \supset B, \Delta, \iblock{\Gamma, A \Rightarrow B, \cstar{\Delta} }}\}$ or, equivalently, sequent $ G_{n}\{\Psi \Rightarrow A \supset B, \Theta, \iblock{\Psi, A \Rightarrow B, \cstar{\Theta} }\}$. Continue the construction with Step~\ref{it:case4}. \\[-0.3cm]
        
        \item 
        \label{it:case4}
        Take sequent $ G_{h}\{\Psi \Rightarrow A \supset B, \Theta, \iblock{\Psi, A \Rightarrow B, \cstar{\Theta} }\}$, where $h \in \{0, n\}$. 
        For some $k \geq 0$, we have that $\Theta = \Theta', \mblock{\Lambda_1}, \dots, \mblock{\Lambda_k}$, where $\Delta'$ is a multiset of formulas. We construct the following derivation: 
        \begin{center}
        \AxiomC{$G_{h}\{\Psi \Rightarrow A \supset B, \Theta, \iblock{\Psi, A \Rightarrow B, \cstar{\Theta} }\}$}
       \dashedLine
        \UnaryInfC{$G_{h}\{\Psi \Rightarrow A \supset B, \Theta', \mblock{\Lambda_1}, \dots, \mblock{\Lambda_k}, \iblock{\Psi, A \Rightarrow B,  \mblock{\cstar\Lambda_1}, \dots, \mblock{\cstar\Lambda_k} }\}$} 
        \RightLabel{\rulefc $\times k$}
        \UnaryInfC{$G_{h}\{\Psi \Rightarrow A \supset B, \Theta', \mblock{\Lambda_1}, \dots, \mblock{\Lambda_k}, \iblock{\Psi, A \Rightarrow B }\}$}
        \RightLabel{$\mathsf{trans}$}
    \UnaryInfC{$G_{h}\{\Psi \Rightarrow A \supset B, \Theta', \mblock{\Lambda_1}, \dots, \mblock{\Lambda_k}, \iblock{ A \Rightarrow B }\}$}
    \RightLabel{$(\rr{\supset})$}
    \UnaryInfC{$G_{h}\{\Psi \Rightarrow A \supset B, \Theta', \mblock{\Lambda_1}, \dots, \mblock{\Lambda_k}\}$}
    \dashedLine
    \UnaryInfC{$G_{h}\{\Psi \Rightarrow A \supset B, \Theta\}$}
    \DisplayProof
\end{center}
    For both $h = 0$ and $h = n$ we have that $G_{h}\{\Psi \Rightarrow A \supset B, \Theta\} =  \cseqp{\Psi, A \Rightarrow B, \Theta}$, and we have concluded the construction.  
    \end{enumerate}
\end{proof}

\begin{example}
\label{ex:nested}
Consider the following $\nikm$ derivation:
    \begin{prooftree}
        \AxiomC{}
        \RightLabel{$(\lef{\bot})$}
        \UnaryInfC{$\lef{\bot}, \dots$}
        \AxiomC{}
        \RightLabel{\rm ($\text{id}$)}
        \UnaryInfC{$\dots, \mblock{\lef p,\lef{\bot}}$}
        \AxiomC{}
        \RightLabel{}
        \UnaryInfC{$  \rig{\bot}, \mblock{\lef{p}, \rig{p}}$}
        \RightLabel{$(\lef{\supset})$}
        \BinaryInfC{$  \rig{\bot}, \mblock{\lef{p}, \lef{p\supset \bot}}$}
        \RightLabel{$(\lef{\Box})$}
        \UnaryInfC{$ \lef{\Box (p \supset \bot)}, \rig{\bot}, \mblock{\lef{p}}$}
        \RightLabel{$(\rig{\supset})$}
        \UnaryInfC{$ \rig{\Box (p \supset \bot)\supset \bot},  \mblock{\lef{p},\rig \bot}$}
        \RightLabel{$(\lef{\supset})$}
        \BinaryInfC{$ \lef{(\Box (p \supset \bot)\supset \bot)\supset \bot},  \mblock{\lef{p},\rig \bot}$}
        \RightLabel{$(\rig{\supset})$}
        \UnaryInfC{$ \lef{(\Box (p \supset \bot)\supset \bot)\supset \bot},  \mblock{\rig{p \supset \bot}}$}
        \RightLabel{$(\rig{\Box})$}
        \UnaryInfC{$ \lef{(\Box (p \supset \bot)\supset \bot)\supset \bot}, \rig{\Box(p \supset \bot)}$}
        \RightLabel{$(\rig{\supset})$}
        \UnaryInfC{$ \rig{((\Box (p \supset \bot)\supset \bot)\supset \bot) \supset \Box(p \supset \bot)}$}
    \end{prooftree}
    \newcommand{\mseq}{$\textcolor{red}{*}$\quad}
    The translation of the rightmost branch of the above derivation is showcased in Figure~\ref{fig:example:tr}.  
    \begin{figure}[t!]
    \begin{prooftree}
        \AxiomC{}
        \RightLabel{(id)}
        \UnaryInfC{\mseq $\Rightarrow \bot, \mblock{p \Rightarrow p}$}
        \RightLabel{$\wk$}
        \UnaryInfC{$\Rightarrow \bot, \mblock{ p, p \supset \bot \Rightarrow p}$}
        \RightLabel{$(\lr{\supset})$}
        \UnaryInfC{\mseq$\Rightarrow \bot, \mblock{p, p \supset \bot \Rightarrow }$}
        \RightLabel{$\wk$}
        \UnaryInfC{$\Box (p \supset \bot)\Rightarrow \bot, \mblock{p, p \supset \bot \Rightarrow }$}
        \RightLabel{$(\lr{\Box})$}
        \UnaryInfC{\mseq $\Box (p \supset \bot)\Rightarrow \bot, \mblock{ p \Rightarrow }$}
        \RightLabel{$\wk$}
        \UnaryInfC{ $ \Rightarrow 
        \mblock{p \Rightarrow \bot},  \iblock{\Box (p \supset \bot)\Rightarrow \bot, \mblock{ p \Rightarrow }} $}
        \RightLabel{\rulefc}
        \UnaryInfC{ $ \Rightarrow 
        \mblock{p \Rightarrow \bot},  \iblock{\Box (p \supset \bot)\Rightarrow \bot} $}
        \RightLabel{$(\rr{\supset})$}
        \UnaryInfC{\mseq $ \Rightarrow \Box(p \supset \bot) \supset \bot, \mblock{p \Rightarrow \bot} $}
        \RightLabel{$\wk$}
        \UnaryInfC{ $ F \Rightarrow \Box(p \supset \bot) \supset \bot, \mblock{p \Rightarrow \bot} $}
        \RightLabel{$(\lr{\supset})$}
        \UnaryInfC{\mseq $ F \Rightarrow \mblock{p \Rightarrow \bot} $}
        \RightLabel{$\wk$}
        \UnaryInfC{ $\Rightarrow \iblock{F \Rightarrow \mblock{p \Rightarrow \bot}} $}
        \RightLabel{(trans)}
        \UnaryInfC{ $F \Rightarrow \mblock{\Rightarrow  \iblock{   p \Rightarrow \bot}}, \iblock{\Rightarrow \mblock{p \Rightarrow \bot}} $}
        \RightLabel{\rulebc}
        \UnaryInfC{ $F \Rightarrow \mblock{\Rightarrow  \iblock{   p \Rightarrow \bot}} $}
        \RightLabel{$(\rr{\supset})$}
        \UnaryInfC{\mseq $F \Rightarrow \mblock{\Rightarrow    p \supset \bot} $}
        \RightLabel{$\wk$}
        \UnaryInfC{ $F \Rightarrow \iblock{F \Rightarrow \mblock{\Rightarrow    p \supset \bot}} $}
        \RightLabel{(trans)}
        \UnaryInfC{ $F \Rightarrow \iblock{ \Rightarrow \mblock{\Rightarrow    p \supset \bot}} $}
        \RightLabel{$(\rr{\Box})$}
        \UnaryInfC{\mseq $F \Rightarrow \Box (p \supset \bot) $}
        \RightLabel{$\wk$}
        \UnaryInfC{ $  \Rightarrow \iblock{F \Rightarrow \Box (p \supset \bot)} $}
        \RightLabel{$(\rr{\supset})$}
        \UnaryInfC{ \mseq $ \Rightarrow F \supset \Box (p \supset \bot) $}
    \end{prooftree}
        \caption{Translation of the derivation from Example~\ref{ex:nested}, where $F$ abbreviates formula $(\Box (p \supset \bot)\supset \bot)\supset \bot$ and the sequents marked with \mseq\!\!\! directly translate nested sequents in the $\nikm$ derivation. Steps marked with $\wk$ correspond to possibly multiple applications of the weakening rules.}
        \label{fig:example:tr}
    \end{figure}
    
\end{example}

\

To conclude, we observe that every nested sequent is a flat bi-nested sequent, and thus it is trivial to translate a nested sequent to its bi-nested version (the additional steps we introduced served the purpose of taking care of contexts). However, bi-nested sequents with $\iblock{\cdot}$-structures cannot be immediately translated into nested sequents. We have not investigated this side of the translation, but we conjecture that there are more bi-nested sequents than nested sequents. Thus, bi-nested sequents stand somehow in between labelled and nested sequents, still encoding a tree-like structure but also allowing invertibility of all the rules thanks to the use of $\iblock{\cdot}$-blocks. 

The size of the \cik proofs obtained by translating $\nikm$ proofs is much bigger than the size of the original derivation. The nested  rules for $(\rig{\supset})$ and $(\rig{\Box})$ just ``copy'' the structure of the nested sequent in the conclusion in the premiss. To simulate this step in the bi-nested calculus, we need to go through all the nodes of the tree, and use \rulefc and \rulebc to ``copy'' the structure. This could have been avoided by defining rules for $\rr{\supset}$ and $\rr{\Box}$ that incorporate applications of  \rulefc and \rulebc. However, the proof of admissibility of such rules in \cik would need to go through a construction similar to the one in Theorem~\ref{theorem:nested-bi-nested}.

\subsection{Relations with other calculi}
\label{sec:simpson_gore}

There are other calculi for $\ik$ proposed in the literature, both label-free and labelled. 
In \cite{GorePT10} and \cite{postniece2010proof},  
label-free calculi 
are proposed for several extensions of intuitionistic modal logic comprising dual implication $\dimpl$ and tense modalities: the ``looking backward modalities" $\blacksquare$ and $\blacklozenge$ in addition to standard $\square$ and $\Diamond$. In these logics the pairs ($\blacksquare,\Diamond$) and ($\square,\blacklozenge$) form a Galois connection, while there is no relation between modalities of the same color. The  semantics of all modalities is defined by means of two independent relations $R_\Box$ and $R_\Diamond$ and their inverses. 

The calculi from \cite{GorePT10,postniece2010proof} make use of three structural connectives (
besides ``,"):  the monadic $\circ$ and $\bullet$, and the dyadic $\triangleright$. 
The interpretation of the connectives depends on their polarity, 
so that $\circ$  is interpreted  as $\Diamond$ or $\Box$, $\bullet$ as $\blacklozenge$ or $\blacksquare$ and $\triangleright$ as $\dimpl$ or $\supset$ according to the positive or negative polarity. 
Two kinds of calculi are proposed: a shallow (or display) calculus $\lbikt$ and a deep calculus $\dbikt$. In the former inference rules are applied only to top level structures and display rules are used to bring nested components to the surface; while in the latter, inference rules can be applied to arbitrary nested structures. Within this general framework the authors obtain a calculus for \ik in the \emph{shallow} form by 
removing the rules for $\dimpl$, $\blacklozenge$ and  $\blacksquare$ and 
introducing the two following rules to capture the specific semantic conditions of \ik:
\begin{center}
	\AxiomC{$X\Rightarrow \bullet (Y\triangleright\emptyset)\triangleright \bullet (\emptyset\triangleright Z)$}
	\RightLabel{($\bullet\triangleright_R$)}
	\UnaryInfC{$X\Rightarrow \bullet (Y\triangleright Z)$}
	\DisplayProof
	\quad 
	\AxiomC{$X\Rightarrow \circ (Y\triangleright\emptyset)\triangleright \circ (\emptyset\triangleright Z)$}
	\RightLabel{($\circ\triangleright_R$)}
	\UnaryInfC{$X\Rightarrow \circ (Y\triangleright Z)$}
	\DisplayProof
\end{center}
In \cite{postniece2010proof} it is argued also that the shallow calculus 
supports terminating proof search by adapting the general argument for the basic calculus $\lbikt$. 
 
There are some similarities
between the shallow calculus from \cite{GorePT10,postniece2010proof} and our calculus \cik. In particular, an implication block $\langle \Pi \Rightarrow \Sigma \rangle$ corresponds to $ \Pi \triangleright \Sigma $ with positive polarity, while a modal block $[\Pi \Rightarrow \Sigma]$ corresponds to $\circ (\Pi \triangleright \Sigma)$ with positive polarity. 
However, establishing a 
formal translation 
between the shallow calculus and our \cik, similarly to what is done in the previous subsections, 
appears to be non-trivial and is left for future investigations.

We end this section by recalling Simpson's calculus for \ik.  Simpson was indeed the first one to develop proof systems for \ik and its several extensions, namely for any modal logic whose accessibility relation is defined by \emph{geometric axioms}. In \cite{simpson1994proof}, he has proposed labelled calculi in the form of natural deduction and sequent calculus. Both calculi make use of world-labels and are obtained by the translation of \ik in first-order intuitionistic logic (FoIL), so that in particular:
\begin{itemize}
	\item $x\Vdash \Box A$ is interpreted as the FoIL formula $\forall y (Rx y \supset y\Vdash A) $
	\item $x\Vdash \Diamond A$ is interpreted as the FoIL formula $\exists y (Rx y \land y\Vdash A)$
\end{itemize}
where $\forall,\exists$ are treated as in FoIL. 
Consequently, as a difference with the fully labelled sequent calculus \labik by \cite{Marin:et:al:2021}, the calculus by Simpson employs single-conclusion sequents, and it is obtained by adopting the standard sequent rules for propositional IL, decorated with labels, and adding the following rules:  
\begin{center}
\begin{tabular}{cc}
   \AxiomC{$\Gamma, xRy \Rightarrow y: A$}
	\RightLabel{($\Box_R$)}
	\UnaryInfC{$\Gamma \Rightarrow x:\Box A$}
	\DisplayProof  
    &  
    \AxiomC{$\Gamma, x:\Box A,  xRy, x: A \Rightarrow z: C$}
	\RightLabel{($\Box_L$)}
	\UnaryInfC{$\Gamma, x:\Box A,  xRy\Rightarrow z: C$}
	\DisplayProof
    \\[.5cm]
    \AxiomC{$\Gamma, xRy\Rightarrow y: A$}
	\RightLabel{($\Diamond_R$)}
	\UnaryInfC{$\Gamma, x R y \Rightarrow x: \Diamond A$}
	\DisplayProof
    &
    \AxiomC{$\Gamma, xRy, y: A \Rightarrow z: C$}
	\RightLabel{($\Diamond_L$)}
	\UnaryInfC{$\Gamma, x:\Diamond A \Rightarrow z: C$}
	\DisplayProof
    \\
\end{tabular}
\end{center}
where $y$ is fresh in ($\Box_R$) and ($\Diamond_L$).

Simpson proves soundness and completeness of this calculus and shows how to obtain a terminating proof search, thereby establishing the decidability of \ik (and the decidability of several extensions of it, excluding the counterpart of K4 and S4). Although the issue is not dealt in \cite{Marin:et:al:2021}, it should be easy or at least possible to simulate Simpson's calculus by \labik. 
\section{Conclusion and future work}
\label{sec:conclusions}
We have proposed a bi-nested sequent calculus \cik for \ik. 
We have introduced a suitable proof search strategy resulting in a decision procedure for \ik and allowing for a direct countermodel construction. 
Furthermore, we have shown how our calculus can simulate proofs within the two main calculi for \ik. 

We believe that bi-nested calculi are 
a suitable 
formalism to cope with the family of intuitionistic and constructive modal logics. Similar calculi were given for related logics such as \fik (cf. \cite{csl-2024-fik}) and $\mathbf{LIK}$ (cf. \cite{balbiani2024local}). One advantage of bi-nested calculi  is that they provide a modular framework to capture all  these logic in a uniform way: logical rules are the same in all logics while interaction rules, which are structural rules, formalize specific semantic properties. 
While both labelled calculi and display calculi may provide a modular and general framework as well, but bi-nested calculi 
achieve this goal 
with minimal means, 
providing a significantly simpler and terminating proof-search procedure. 

In future research we want to investigate the complexity  of the decision procedure based on our calculus, as well as countermodel construction, knowing that both are estimated to be quite high for \ik. 
Moreover, we intend to extend the calculus to extensions of \ik by adding standard axioms from the modal cube, targeting logic $\mathbf{IS4}$, whose decidability has been recently established in \cite{Girlando:etal:2023}. 



\bibliographystyle{splncs04}
\bibliography{refs}

\end{document}